\renewcommand{\appendixprelim}[1]{%
  %\clearpage % <--- uncomment if you want a new page 
}
\theoremstyle{definition}
    \newcommand{\er}[1]{\textcolor{blue}{#1}}
    \newcommand{\erel}[1]{\er{(Erel says: #1)}}
    \newcommand{\alex}[1]{\textcolor{red}{(Alex says: #1)}}
    \newcommand{\eden}[1]{\textcolor{BrickRed}{(Eden says: #1)}}
    \newcommand{\edeng}[1]{\textcolor{OliveGreen}{(Eden says: #1)}}
    \newcommand{\rmark}[1]{\color{BrickRed}#1~\color{black}}
    \newcommand{\biaoshuai}[1]{\textcolor{purple}{(Biaoshuai says: #1)}}
    \newcommand{\er}[1]{#1}
    \newcommand{\erel}[1]{}
    \newcommand{\alex}[1]{}
    \newcommand{\eden}[1]{}
    \newcommand{\edeng}[1]{}
    \newcommand{\rmark}[1]{#1}
    \newcommand{\biaoshuai}[1]{}
\newcommand{\valT}[2]{v_{#1,#2}}
\newcommand{\repT}[2]{r_{#1,#2}}
\newcommand{\prefs}{\mathbf{P}}
\newcommand{\prefsExcI}{\prefs_{-i}}
\newcommand{\domain}{\mathcal{D}}
\newcommand{\domains}{\mathbf{\domain}}
\newcommand{\domainsExcI}{\domains_{-i}}
\newcommand{\notK}{\widebar{K}}
\DeclareMathOperator*{\argmax}{arg\,max}
\newcommand{\ceil}[1]{\left\lceil #1 \right\rceil}
\newcommand{\floor}[1]{\left\lfloor #1 \right\rfloor}
\newcommand{\propallocations}{\mathcal{X}^{\mathrm{PROP}}}
\newcommand{\efallocations}{\mathcal{X}^{\mathrm{EF1}}}
\newcommand{\yefallocations}{\mathcal{Y}^{\mathrm{EF1}}}
\newcolumntype{K}[1]{>{\centering\arraybackslash}m{#1}}
    \newcommand{\hideComsoc}[1]{}
    \newcommand{\onlyComsoc}[1]{#1}
    \newcommand{\extendedVer} {\href{https://arxiv.org/abs/2502.18805}{extended version }}
    \newcommand{\hideComsoc}[1]{#1}
    \newcommand{\onlyComsoc}[1]{}
    \newcommand{\extendedVer}{}
\newcommand{\Hquad}{\hspace{0.5em}}
\newcommand{\bv}{\mathbf{v}}
\title[Degree of Risk-Avoiding Truthfulness]{It's Not All Black and White: Degree of Truthfulness for Risk-Avoiding Agents}
\author{Eden Hartman}
\affiliation{%
	\institution{Bar-Ilan University}
	\country{Israel}
}
\author{Erel Segal-Halevi}
\affiliation{%
	\institution{Ariel University}
	\country{Israel}
}
\author{Biaoshuai Tao}
\affiliation{%
	\institution{Shanghai Jiao Tong University}
	\country{China}
}
\begin{abstract}
The classic notion of \emph{truthfulness} requires that no agent has a profitable manipulation — an untruthful report that, for \emph{some} combination of reports of the other agents, increases her utility. 
This strong notion implicitly assumes that the manipulating agent either knows what all other agents are going to report, or is willing to take the risk and act as-if she knows their reports.

Without knowledge of the others' reports, most manipulations are \emph{risky} -- they might decrease the manipulator's utility for some other combinations of reports by the other agents.
Accordingly, a recent paper (Bu, Song and Tao, ``On the existence of truthful fair cake cutting mechanisms'', Artificial Intelligence 319 (2023), 103904) suggests a relaxed notion, which we refer to as \emph{risk-avoiding truthfulness (RAT)}, which requires only that no agent can gain from a \emph{safe} manipulation — one that is sometimes beneficial and never harmful.

Truthfulness and RAT are two extremes: the former considers manipulators with complete knowledge of others, whereas the latter considers manipulators with no knowledge at all. 
In reality, agents often know about some — but not all — of the other agents.
This paper introduces the \emph{RAT-degree} of a mechanism, 
defined as the smallest number of agents whose reports, if known, may allow another agent to safely manipulate, or $n$ if there is no such number. 
This notion interpolates between classic truthfulness (degree $n$) and RAT (degree at least $1$): a mechanism with a higher RAT-degree is harder to manipulate safely. 

To illustrate the generality and applicability of this concept, we analyze the RAT-degree of prominent mechanisms across various social choice settings, including auctions, indivisible goods allocations, cake-cutting, voting, and two-sided matching. 
\end{abstract}
\begin{document}

% Title page for title and abstract only.
%\begin{titlepage}  % EREL: title page causes an error https://github.com/borisveytsman/acmart/issues/524
\maketitle

% \vspace{1cm}
\setcounter{tocdepth}{1} % adjust to 1 if desired
\tableofcontents
%\end{titlepage}

% Paper body

\section{Introduction}

The Holy Grail of mechanism design is the \emph{truthful mechanism} --- a mechanism in which the (weakly) dominant strategy of each agent is truthfully reporting her type.
But in most settings, there is provably no truthful mechanism that satisfies other desirable properties such as budget-balance, efficiency or fairness. 
% Practical mechanisms are thus \emph{manipulable} in the sense that, for some agent~$a_i$, and \emph{some} combination of reports by the other agents, agent~$a_i$ can induce the mechanism to yield an outcome (strictly) better for her by reporting  non-truthfully.
% 
Practical mechanisms are thus \emph{manipulable} in the sense that some agent~$a_i$ has a \emph{profitable} manipulation -- for \emph{some} combination of reports by the other agents, agent~$a_i$ can induce the mechanism to yield an outcome (strictly) better for her by reporting non-truthfully.

% some agent $i$ has an untruthful report with which she can cause the mechanism to yield a (strictly) better outcome for her (will cause that for \emph{some} combination of reports by the other agents). 
% 
% 

This notion of manipulability implicitly assumes that the manipulating agent either knows the reports made by all other agents, or is willing to take the risk and act \emph{as-if} she knows their reports. 
Without knowledge of the others' reports, most manipulations are \emph{risky} -- they might decrease the manipulator's utility for some other combinations of reports by the other agents.
In practice, many agents are \emph{risk-avoiding} and will not manipulate in such cases. 
This highlights a gap between the standard definition and the nature of such agents. 
%
% The standard definition does not fully capture the nature of such agents.
%
% does not fully capture scenarios involving such agents
% knowing the reports of all other agents is very challenging.

To illustrate, consider a simple example in the context of voting. Under the Plurality rule, agents vote for their favorite candidate, and the candidate with the most votes wins. If an agent knows that her preferred candidate has no chance of winning, she may find it beneficial to vote for her second-choice candidate to prevent an even less preferred candidate from winning. However, if the agent lacks precise knowledge of the other votes and decides to vote for her second choice, it may backfire -- she might inadvertently cause an outcome worse than if she had voted truthfully. For instance, if the other agents vote in a way that makes the agent the tie-breaker.

Indeed, various papers on cake-cutting
(e.g. \cite{brams2006better,BU2023Rat}),
voting (e.g. \cite{slinko2008nondictatorial,slinko2014ever,hazon2010complexity})
 stable matching (e.g. \citet{regret2018Fernandez, chen2024regret})
 and coalition formation \citep{waxman2021manipulation}
studies truthfulness among such agents.
% \footnote{In these papers, they are called risk-averse -- see \ref{} for more details.} e.g., \cite{brams2006better,BU2023Rat}  \eden{to add more papers about truthfulness for risk averse agents}
%
In particular, \citet{BU2023Rat} introduced a weaker notion of truthfulness, suitable for risk-avoiding agents, for cake-cutting. 
Their definition can be adapted to any problem as follows.

% Recently, \cite{BU2023Rat} laid the foundation for this research.
% Indeed, recent work by \citet{BU2023Rat} introduces a weaker notion of truthfulness, suitable for risk-averse agents, for cake cutting. 
% Their definition forms the foundation for this research and can be easily adapted to any problem as follows.

Let us first define a \emph{safe manipulation} as a non-truthful report that 
% is a dominant strategy for an agent; that is, it may improve, and 
may never harm the agent's utility. 
% Based on this, a mechanism is Risk-Avoiding Truthful (RAT) if no agent has a manipulation that is both profitable and safe.
Based on that, a mechanism is \emph{safely manipulable} if some agent~$a_i$ has a manipulation that is both profitable and safe; otherwise, the mechanism is Risk-Avoiding Truthful (RAT).

Standard truthfulness and RAT can be seen as two extremes with respect to safe-and-profitable manipulations: the former considers manipulators with complete knowledge of
others, whereas the latter considers manipulators with no knowledge at all. 
In reality, agents often know about some — but not all — of the other agents.

This paper introduces the \emph{RAT-Degree} — a new measurement that quantifies how robust a mechanism is to such safe-and-profitable manipulations.
The RAT-Degree of a mechanism is an integer $d \in \{0, \ldots, n\}$, which represents --- roughly --- 
the smallest number of agents whose reports, if known, may allow another agent to safely manipulate; or $n$ if there is no such number.   (See \Cref{sec:RAT-degree} for formal definition).

This measure allows us to position mechanisms along a spectrum. A higher degree implies that an agent has to work harder in order to collect the information required for a successful manipulation; therefore it is less likely that mechanism will be manipulated.
%
% that reflects how robust they are to manipulations by risk-avoiding agents. 
%
On one end of the spectrum are truthful mechanisms -- where no agent can safely manipulate even with complete knowledge of all the other agents. The RAT-degree of such mechanisms is $n$.
While on the other end are mechanisms that are safely manipulable -- no knowledge about other agents is required to safely manipulate. The RAT-degree of such mechanisms is $0$.

Importantly, the RAT-degree is determined by the worst-case scenario for the mechanism designer, which corresponds to the best-case scenario for the manipulating agents.
The way we measure the amount of knowledge is based on a general objective applicable to all social choice settings.

\paragraph{Contributions.}
Our main contribution is the definition of the RAT-degree.

To illustrate the generality and usefulness of this concept, we selected several different social choice domains, and analyzed the RAT-degree of some prominent mechanisms in each domain.
As our goal is mainly to illustrate the new concepts, we did not attempt to analyze all mechanisms and all special cases of each mechanism, but rather focused on some cases that allowed for a more simple analysis. 
To prove an upper bound on the RAT-degree, we need to show a profitable manipulation. However, in contrast to usual proofs of manipulability, we also have to analyze more carefully, how much knowledge on other agents is sufficient in order to guarantee the safety of the manipulation.
This analysis gives more insight on the kind of manipulations possible in each mechanism, and on potential ways to avoid them.
For clarity, we detail the results for each social choice setting in its corresponding section.

% \eden{TODO: decide if we want to write a short description of each problem.\\
% We start by considering various auctions for a single good, and show that there exists a spectrum of different degrees between the two well-known first-price and second-price auctions.
% \\
% Next, we focus on allocations of indivisible goods between agents with additive utilities \Cref{sec:RAT-degree}.
% \\
% Then, }

\paragraph{Organization.} \Cref{sec:preliminaries} introduces the model and required definitions. \Cref{sec:RAT-degree} presents the definition of RAT-degree. 
\Cref{sec:single-item-auction} explores auctions for a single good. \Cref{sec:indivisible-good-aloc} examines indivisible goods allocations. 
\Cref{sec:cake-cutting} focuses on cake cutting.  
\Cref{sec:single-winner-voting} addresses single-winner ranked voting.
\Cref{sec:matching} considers two-sided matching.
\Cref{sec:discussion} concludes with some future work directions.

To enhance readability, we include only intuitive proof sketches in the main body of the paper. Complete and rigorous versions of all proofs are provided in the appendix.

\subsection{Related Work}\label{sec:related}

There is a vast body of work on truthfulness relaxations and alternative measurements of manipulability.

\subsubsection{Truthfulness Relaxations}

The large number of impossibility results have lead to extensive research into relaxations of truthfulness. 
A relaxed truthfulness notion usually focuses on a certain subset of all possible manipulations, which are considered more ``likely''. It requires that none of the manipulations from this subset is profitable. Different relaxations consider different subsets of ``likely'' manipulations.
Our approach can be combined with most of these ideas by adapting the degree concept to the relevant manipulation style.

% We compare some of these approaches to RAT.

% ==================================================
\paragraph{RAT}
Closest to our work is the recent paper by \cite{BU2023Rat}, which introduces the definition on which this paper is build upon: \emph{risk-avoiding truthfulness (RAT)}. 
The definition assumes that agents avoid risk - they will manipulate only when it is sometimes beneficial but never harmful. 
We first note that their original term for this concept is risk-\emph{averse} truthfulness. However, since the definition assumes that agents completely avoid any element of risk, we adopt this new name, aiming to more accurately reflect this assumption.

We extend their work in two key directions.
First, we generalize the definition from cake-cutting to any social choice problem. 
Second, we move beyond a binary classification of whether a mechanism is RAT, to a quantitative measure of its robustness to manipulation by such agents. Our new definition provides deeper insight of the mechanism' robustness.
In \Cref{sec:cake-cutting}, we also analyze the mechanisms proposed in their paper alongside additional mechanisms for cake-cutting.

Their paper (in Section~5) provides an extensive comparison between RAT of related relaxation of truthfulness. For completeness, we include some of these comparisons here as well.

\rmark{\paragraph{Dominating Manipulations with Partial Information.}
\citet{conitzer2011dominating} studied a closely related model in the context of voting, where the manipulator has only partial information about the preferences of the other (non-manipulating) voters and seeks a dominating manipulation, defined analogously to the profitable and safe manipulations we consider in this paper. 

There are two main differences.
First, their work focuses on the \emph{computational complexity} of finding a dominating manipulation, while our work focuses on existence proofs.
Second, their paper considers different information structures about the other agents --- we assume that the manipulator knows \emph{exactly} the preferences of \emph{some subset} of voters, whereas they assume \emph{partial knowledge} about \emph{all} other voters.
These two approaches can be combined: one could study the computational complexity under our information structure, or conversely, prove existence results under theirs.

As part of their analysis, they considered the special case in which the manipulator has no information about the other voters and showed that, for several voting rules, no dominating manipulation exists --- in our terms, that the RAT-degree is positive. We extend some of these results in \Cref{sec:single-winner-voting}.

There has been a line of research following this paper --- for example, \citet{Reijngoud2012} and \citet{endriss2016strategic}.
To the best of our knowledge, however, these studies focus specifically on voting settings, and the information structures they consider are tailored to that domain.
In contrast, our measure of the number of known agents can be applied to any social choice problem, as illustrated later in the paper.

\paragraph{Veto Power.}
\citet{Moulin1982vetoPower} introduced the \emph{veto power} for voting: a coalition of voters of size $t$ is said to have veto power $v(t)$ if it can prevent any set of $v(t)$ candidates from winning, regardless of the preferences of the remaining voters.

We conjectured that this notion could provide an upper bound on the RAT-degree: our hypothesis was that knowing that one of the candidate has no chance of winning is sufficient to enable a safe and profitable manipulation for a manipulator for \emph{some} candidate.
This indeed holds for Plurality and Veto rules (if the eliminated candidate is the manipulator’s favorite under Plurality, and the least favorite under Veto; see \Cref{sec:single-winner-voting}).
Therefore, we expected that the minimal size of a coalition with positive veto power would serve as an upper bound on the RAT-degree.
However, this claim appears to fail in general: eliminating a single candidate may not suffice for some manipulation to be both profitable and safe. A counterexample and several partial connections are presented in \Cref{sec:voting-RAT-vs-veto-power}.
}

% To our knowledge, \citet{waxman2021manipulation} were the first to use the term \emph{safe manipulation}.%
% \footnote{
% They use ``safe manipulation'' for a manipulation that is both safe and profitable.
% }
% They examined the possible manipulations of agents with three different levels of knowledge on the social networks.
% \erel{Interesting and closely-related paper; should read in depth. Maybe also send our paper to the authors.}
% \eden{To myself: todo \citet{waxman2021manipulation}}

% ==================================================
\paragraph{Maximin Strategy-Proofness}
Another related relaxation of truthfulness, introduced by \citet{brams2006better}, assumes the opposite extreme to the standard definition regarding the agents' willingness to manipulate. In the standard definition, an agent is assumed to manipulate if for \emph{some} combination of the other agents' reports, it is beneficial. In contrast, this relaxation assumes that an agent will manipulate only if the manipulation is \emph{always} beneficial — i.e., for \emph{any} combination of the other agents' reports. 

This definition is not only weaker than the standard one but also weaker than RAT, as it assumes that agents manipulate in only a subset of the cases where RAT predicts manipulation. 
We believe that, in this sense, RAT provides a more realistic and balanced assumption.
However, we note that a similar approach can be applied to this definition as well —- the degree to which a mechanism is robust to always-profitable manipulations (rather than to safe manipulations). The degree of Maximin Strategy-Proofness would probably be higher, as the manipulation is stronger.

% \erel{Say that RAT-degree would probably be higher, as the manipulation is stronger.}

% ==================================================
\paragraph{NOM} 
\citet{troyan2020obvious} introduce the notion of \emph{not-obvious manipulability (NOM)}, which focuses on \emph{obvious manipulation} — informally, a manipulation that benefits the agent either in the worst case or in the best case. It presumes that agents are boundedly rational, and only consider extreme situations --- best or worst cases. A mechanism is NOM if there are no obvious manipulations. % was also considered by \cite{ortega2022obvious}
This notion is a relaxation of truthfulness since the existence of an obvious manipulation imposes a stronger requirement than merely having a profitable one.
\iffalse
Specifically, suppose there exists a manipulation whose best-case utility (i.e., under the most favorable profile of the other agents) is strictly higher than the best-case utility of truthful reporting. Then, in the profile that maximizes the manipulator’s utility when manipulating, the utility from truthful reporting is at most the best-case truthful utility. Since this value is strictly lower than the utility obtained by manipulating in that same profile, it follows that the agent strictly prefers to manipulate. Thus, if a mechanism is not NOM, it necessarily admits some profitable manipulation, implying that it is not truthful.
\fi
% \erel{Can we apply our approach to NOM? } \eden{yes.}

However, an obvious manipulation is not necessarily a \emph{safe} manipulation, nor is a safe manipulation necessarily an obvious one (see Appendix D in \cite{BU2023Rat} for more details). Therefore, NOM and RAT are independent notions.

Our approach can be applied to NOM as well.

% \paragraph{Ex-post incentive compatibility} \eden{to find ref} requires truth-telling
% to be optimal ex-post, that is even if the agent learns the true type and actions of all other agents.

% ==================================================
\paragraph{Regret-Free Truth-telling (RFTT)} \citet{regret2018Fernandez} proposes another relaxation of truthfulness, called \emph{regret-free truth-telling}. This concept also considers the agent’s knowledge about other agents but does so in a different way.
Specifically, a mechanism satisfies RFTT if no agent ever regrets telling the truth after observing the outcome -- meaning that she could not have safely manipulated given that only the reports that are consistent with the observed outcome are possible.
Note that the agent does not observe the actual reports of others but only the final outcome (the agents are assumed to know how the mechanism works).

% ============== NOT TRUE
% This definition is weaker than both standard truthfulness and RAT, as it only prevents a certain type of safe manipulations.
% \Cref{sec:deferred-acceptance} demonstrates that RAT is a strictly stronger notion by proving that the well-known deferred acceptance mechanism for stable matching~\cite{gale1962college}, which is known to satisfy RFTT~\cite{regret2018Fernandez}, is not RAT.
% \eden{I would be happy that someone will verify me: This definition is weaker than RAT. right?}

An ex-post-profitable manipulation is always profitable, but the opposite is not true, as it is possible that the profiles in which a manipulation could be profitable are not consistent with the outcome. This means that RFTT does not imply RAT.
A safe manipulation is always ex-post-safe, but the opposite is not true, as it is possible that the manipulation is harmful for some profiles that are inconsistent with the outcome. This means that RAT does not imply RFTT.

\erel{Maybe say that we can quantify this by allowing the agent to play many times with different reports.}

\paragraph{Different Types of Risk}
\citet{slinko2008nondictatorial,slinko2014ever} and \citet{hazon2010complexity} study ``safe manipulations'' in voting. Their concept of safety is similar to ours, but allows a simultaneous manipulation by a coalition of voters. They focus on a different type of risk for the manipulators: the risk that too many or too few of them will perform the exact safe manipulation.

\citet{waxman2021manipulation} study safe manipulations \footnote{
They use ``safe manipulation'' for a manipulation that is both safe and profitable.
} on coalitional games in social networks. They examine the possible manipulations of agents with three different levels of knowledge on the social networks.
\erel{Interesting and closely-related paper; should read in depth. Maybe also send our paper to the authors.}

\subsubsection{Alternative Measurements}
% There are other ways to measure the required amount of information.. \eden{todo }
Various measures can be used to quantify the manipulability of a mechanism. Below, we compare some existing approaches with our RAT-degree measure.

% ==================================================
\paragraph{Computational Complexity}
Even when an agent knows the reports of all other agents, it might be hard to compute a profitable manipulation. 
Mechanisms can be ranked according to the run-time complexity of this computation: mechanisms in which a profitable manipulation can be computed in polynomial time are arguably more manipulable than mechanisms in which this problem is NP-hard. 
%\eden{TODO:}
This approach was pioneered by \citet{bartholdi1989computational,bartholdi1991single} for voting, and applied in diverse social choice settings, e.g. coalitional games in social networks \citep{waxman2021manipulation}. See \citet{faliszewski2010ai,veselova2016computational} for surveys.

Nowadays, with the advent of efficient SAT and CSP solvers, NP-hardness does not seem a very good defense against manipulation. Moreover, some empirical studies show that voting rules that are hard to manipulate in theory, may be easy to manipulate in practice  \citep{walsh2011hard}.
We claim that the main difficulty in manipulation is not the computational hardness, but rather the informational hardness --- learning the other agents' preferences.

\paragraph{Queries and Bits}
Instead of counting the number of agents, we could count the number of \emph{bits} that an agent needs to know in order to have a safe manipulation
(this is similar in spirit to the concepts of communication complexity - e.g., \cite{nisan2002communication, grigorieva2006communication, Communication2019Branzei,Babichenko2019communication} and compilation complexity - e.g., \citep{chevaleyre2009compiling,xia2010compilation,karia2021compilation}).
But this definition may be incomparable between different domains, as the bit length of the input is different.
We could also measure the number of basic queries, rather than the number of agents. But then we have to define what ``basic query'' means, which may require a different definition for each setting. The number of agents is an objective measure that is relevant for all settings.

\iffalse
\paragraph{Compilation Complexity}
The compilation complexity of a mechanism refers to the number of bits required to represent the preferences of a subset of agents in a compressed way, such that the outcome can still be computed correctly for any possible preferences of the remaining agents.
The \emph{compilation complexity} of a voting rule is the amount of bits required to represent the preferences of a subset of the voters in a compressed way, such  that the outcome can be computed correctly, for any preferences of the other voters
\citep{chevaleyre2009compiling,xia2010compilation,karia2021compilation}.%
\footnote{
See Wikipedia page ``Compilation complexity''.
}
We could measure the manipulability of a mechanism by asking how many bits an agent needs in order to have a safe manipulation. But this question has a trivial answer: $1$ bit is sufficient --- a yes/no answer to the question ``is this manipulation profitable?''.

We measure the manipulability using the number of queried agents rather than compiled queries.
\fi

% ==================================================
\paragraph{Probability of Having Profitable Manipulation.} 
Truthfulness requires that not even a \emph{single} preference-profile allows a profitable manipulation.
Mechanisms can be ranked according to the probability that such a profile exists -- e.g., \citep{barrot2017manipulation,lackner2018approval,lackner2023free}.
The probability of truthful mechanism is zero, and the lower the probability is, the more resistant the mechanism is to profitable manipulations.
% Under this approach, a mechanism is considered more robust if the probability is low. The probability of a truthful mechanism is zero and the lower the probability is, the more resistant the mechanism is to profitable manipulations.

A disadvantage of this approach is that it requires knowledge of the distribution over preference profiles; our RAT-degree does not require it.
However, if we do have such information, it can be combined with our known-agents approach.

% \erel{If we do have such information, we can combine it with our approach ...}

\iffalse
% ==================================================
\paragraph{Bayesian Incentive Compatibility}

\fi

% ==================================================
\paragraph{Incentive Ratio} Another common measure is the \emph{incentive ratio} of a mechanism \citet{chen2011profitable}, which describes the extent to which an agent can increase her utility by manipulating. For each agent, it considers the maximum ratio between the utility obtained by manipulating and the utility received by being truthful, given the same profile of the others. The incentive ratio is then defined as the maximum of these values across all agents.

This measure is widely used - e.g., \cite{chen2022incentive,li2024bounding,cheng2022tight,cheng2019improved,bei2025incentive,tao2024fair}.
However, it is meaningful only when utility values have a clear numerical interpretation, such as in monetary settings.
In contrast, our RAT-degree measure applies to any social choice setting, regardless of how utilities are represented.

% ==================================================
\paragraph{Degree of Manipulability}
\citet{aleskerov1999degree} define four different indices of "manipulability" of ranked voting rules, based on the fraction of profiles in which a profitable manipulation exists; the fraction of manipulations that are profitable; and the average and maximum gain per manipulation.
As the indices are very hard to compute, they estimate them on 26 voting rules using computer experiments. These indices are useful under an assumption that all profiles are equally probable (``impartial culture''), or assuming a certain mapping between ranks of candidates and their values for the agent.

\citet{andersson2014budget,andersson2014least} measure the degree of manipulability by counting the number of agents who have a profitable manipulation. They define a rule as \emph{minimally manipulable} with respect to a set of acceptable mechanisms if for each preference profile, the number of agents with a profitable manipulation is minimal across all acceptable mechanisms.

\eden{TODO: add a comparison to RAT-degree}
\erel{This can also be combined with our approach.}

\erel{Maybe cite \citet{peters2022robust}} 
\section{Preliminaries}\label{sec:preliminaries}

We consider a generic social choice setting, with a set of $n$ \emph{agents} $N = \{a_1, \ldots, a_n\}$, and a set of potential \emph{outcomes} $X$.
%
% EREL: safe-profitable-manipulation: see https://english.stackexchange.com/a/1159/24024
%
% \paragraph{Preferences.} % Types.?
Each agent, $a_i \in N$, has preferences over the set of outcomes $X$, that can be described in one of two ways: (a) a linear ordering of the outcomes, or (b) a utility function from $X$ to $\mathbb{R}$.
The set of all possible preferences for agent~$a_i$ is denoted by $\domain_i$, and is referred to as the agent's \emph{domain}. 
We denote the agent's \emph{true} preferences by $T_i \in \domain_i$.
Unless otherwise stated, when agent~$a_i$ weakly prefers the outcome $x_1$ over $x_2$, it is denoted by $x_1 \succeq_i x_2$; and when she strictly prefers $x_1$ over $x_2$, it is denoted by $x_1 \succ_i x_2$.

% \paragraph{Profile.} For any subset of the $n$ agents, a profile for this set is a list of types, one for each agent in the set. 

% \paragraph{Aggregation.} 
% Let $\mathbf{R} := (R_1,\ldots,R_n) \in \domain_1\times\cdots \times \domain_n$ be a profile for the $n$ agents.
% For each agent $i$, we denote by $\mathbf{R}_{-i}$ the profile for all the agents except $i$ in which their types are as in $\mathbf{R}$: 
% $$ \mathbf{R}_{-i} := (R_1, 
% \ldots, R_{i-1}, R_{i+1},\ldots,R_n)$$
% Given any type for agent $i$, $R'_i \in \domain_i$, we denote by $(R'_i, \mathbf{R}_{-i})$ the profile for all agents in which agent $i$'s type is $R'_i$ and all other agents' types are as in $\mathbf{R}_{-i}$:
% $$ (R'_i, \mathbf{R}_{-i}) := (R_1, 
% \ldots, R_{i-1}, R'_i, R_{i+1},\ldots,R_n)$$

% \paragraph{Mechanism.}
A \emph{mechanism} or \emph{rule}  is a function $f: \domain_1\times\cdots \times \domain_n \to X$, which takes as input a list of reported preferences $P_1,\ldots,P_n$ (which may differ from the true preferences), and returns the chosen outcome.
In this paper, we focus on deterministic and single-valued mechanisms. 
% \eden{single-valued? deterministic?}

% \paragraph{Preferences Aggregation: Single Agent.}
For any agent $a_i \in N$, we denote by $(P_i, \prefsExcI)$ the preference profile in which agent~$a_i$ reports $P_i \in \mathcal{D}_i$, and the other agents report $\prefsExcI \in \mathcal{D}_{-i}$ (where $\mathcal{D}_{-i}:= \displaystyle \times_{j \in N\setminus\{i\}} \mathcal{D}_j$).

% \hideComsoc{\eden{I think it would be clearer if we'll denote the manipulation (non truthful report) by $T'_i$ instead of $P_i$. it will also be more consistent with the rest of the paper}}

\paragraph{Truthfulness.}
A \emph{manipulation} for a mechanism $f$ and agent~$a_i \in N$ is an untruthful report $P_i \in \domain_i \setminus \{T_i\}$.
A manipulation is \emph{profitable} if there exists a set of preferences of the other agents for which it increases the manipulator's utility:
\begin{align}
\label{eq:manipulation}
    &\exists \prefsExcI
    \in \domainsExcI 
    :~ f(P_i,\prefsExcI) \succ_i f(T_i,\prefsExcI)
\end{align}
A mechanism $f$ is called \emph{manipulable} if some agent $a_i$ has a profitable manipulation; otherwise $f$ is called \emph{truthful}.

% \paragraph{Safe Manipulation.}
\paragraph{RAT}
A manipulation is \emph{safe} if it never harms the manipulator's utility -- it is weakly preferred over telling the truth for any possible preferences of the other agents:
\begin{align}
\label{eq:safe-manipulation}
    &\forall \prefsExcI
    \in \domainsExcI 
    :~ f(P_i,\prefsExcI) \succeq_i f(T_i,\prefsExcI)
\end{align}
% \eqref{eq:manipulation} says, as above, that $P_i$ is a profitable manipulation for $a_i$. \eqref{eq:safe-manipulation} says, in words, that for any possible preferences of the other agents, agent~$a_i$ weakly prefers the outcome resulting from reporting $P_i$ over the outcome resulting from reporting the truth $T_i$.
%
% In this case, we say that agent~$i$ has a safe-manipulation.
%
% \paragraph{Risk-Averse Truthful (RAT)}
%
%
% A mechanism $f$ is called \emph{risk-averse truthful (RAT)} if it 
% % is \emph{not} safely manipulable \cite{BU2023Rat}.
% has no profitable-and-safe manipulations for any agent $a_i\in N$; and \emph{safely-manipulable} otherwise. 
%
A mechanism $f$ is called \emph{safely-manipulable} if some agent~$a_i$ has a manipulations that is profitable and safe; otherwise $f$ is called \emph{risk-avoiding truthful (RAT)}.

\section{The RAT-Degree}\label{sec:RAT-degree}

\newcommand{\prefsOf}[1]{\mathbf{P}_{#1}}
\newcommand{\domainsOf}[1]{\mathbf{D}_{#1}}

% \paragraph{Preferences Aggregation: Multiple Agents.}
% Fix an agent $a_i \in N$. 
Let $a_i \in N$, $k \in \{0,\ldots, n-1\}$, $K \subseteq N \setminus \{a_i\}$ with $|K| = k$ and $\notK := N \setminus (\{a_i\} \cup K)$.
We denote by $(P_i, \prefsOf{K}, \prefsOf{\notK})$ the preference profile in which the preferences of agent $a_i$ are $P_i$, the preferences of the agents in $K$ are $\prefsOf{K}$, and the preferences of the agents in $\notK$ are $\prefsOf{\notK}$.

\begin{definition}
\label{def:given-K}
Given an agent $a_i$, a subset $K\subseteq N\setminus \{a_i\}$ and preferences for them $\prefsOf{K}
            \in \domain_{K}$:

A manipulation $P_i$ is called \emph{profitable for $a_i$ given $K$ and $\prefsOf{K}$} if
    \begin{align}
    \label{eq:k-manipulation}
            \quad \exists \prefsOf{\notK}
            \in \domain_{\notK}
            :~ f(P_i,\prefsOf{K}, \prefsOf{\notK}) \succ_i f(T_i,\prefsOf{K}, \prefsOf{\notK})
    \end{align}

A manipulation $P_i$ is called \emph{safe for $a_i$ given $K$  and $\prefsOf{K}$} if
    \begin{align}
    \label{eq:k-safe-manipulation}
            \forall \prefsOf{\notK}
            \in \domain_{\notK}
            :~ f(P_i,\prefsOf{K}, \prefsOf{\notK}) \succeq_i f(T_i,\prefsOf{K}, \prefsOf{\notK})
    \end{align}
\end{definition}

% OLD DEFINITION:
%\begin{definition}
%A manipulation $P_i$ is \emph{profitable-and-safe-given-$k$-known-agents} if for some subset $K \subseteq N \setminus \{a_i\}$ with $|K| = k$ and some preferences for them $\prefsOf{K} \in \domainsOf{K}$, the following holds:
%    \begin{align}
%    \label{eq:k-manipulation}
%            \quad \exists \prefsOf{\notK}:~ f(P_i,\prefsOf{K}, \prefsOf{\notK}) \succ_i f(T_i,\prefsOf{K}, \prefsOf{\notK})
%            \\
%    \label{eq:k-safe-manipulation}
%            \text{ and } \quad   \forall \prefsOf{\notK}:~ f(P_i,\prefsOf{K}, \prefsOf{\notK}) \succeq_i f(T_i,\prefsOf{K}, \prefsOf{\notK})
%    \end{align}
%\end{definition}
In words: The agents in $K$ are those  whose preferences are \emph{Known} to $a_i$; the agents in $\notK$ are those whose preferences are unknown to $a_i$.
Given that the preferences of the known agents are $\prefsOf{K}$, \Cref{eq:k-manipulation} says that there exist a preference profile of the unknown agents that makes the manipulation profitable for agent~$a_i$; while \Cref{eq:k-safe-manipulation} says that the manipulation is safe -- it is weakly preferred over telling the truth for any preference profile of the unknown-agents.

The previous two definitions are special cases of \Cref{def:given-K}: \Cref{eq:manipulation} --- which defines a profitable manipulation --- is equivalent to $P_i$ being profitable given $\emptyset$; and \Cref{eq:safe-manipulation} --- which defines a safe manipulation --- is equivalent to $P_i$ being safe given $\emptyset$.

%
% (a) , (b) , (c) ; and (3) ; such that agent~$i$ has a safe manipulation given $\prefsOf{K}$.
% That is, agent~$i$ has an alternative report $P_i \in \domain_i$,  $P_i\neq T_i$, such that:
%

% In this case, we say that agent~$i$ has a safe-manipulation given $\prefsOf{K}$.
%
% In this case, we say that agent~$i$ has $k$-known-agents safe-manipulation.
%A profitable-and-safe manipulation (with no known-agents) is a special case in which $K=\emptyset$.

\begin{definition}
A mechanism $f$ is called \emph{$k$-known-agents safely-manipulable} if for some agent $a_i$, some subset $K\subseteq N\setminus \{a_i\}$ with $|K|=k$ and some preferences for them $\prefsOf{K}$,
there exists a manipulation $P_i$ that is both profitable and safe for $a_i$ given $K$ and $\prefsOf{K}$.
\end{definition}

% \onlyComsoc{
% \begin{proposition}
% \label{prop:monotonicity}
%     Let $k \in \{0, \ldots, n-2\}$.
%     If a mechanism is $k$-known-agents safely-manipulable, then it is also $(k+1)$-known-agents safely-manipulable.
% \end{proposition}
% }

% \hideComsoc{
\begin{propositionrep}
\label{prop:monotonicity}
    Let $k \in \{0, \ldots, n-2\}$.
    If a mechanism is $k$-known-agents safely-manipulable, then it is also $(k+1)$-known-agents safely-manipulable.
\end{propositionrep}

\begin{proof}
    By definition, some agent~$a_i$ has a profitable-and-safe-manipulation-given-$k$-known-agents. 
    That is, there exists a subset $K \subseteq N \setminus \{a_i\}$ with $|K| = k$ and some preference profile for them $\prefsOf{K} \in \domainsOf{K}$, such that \eqref{eq:k-manipulation} and \eqref{eq:k-safe-manipulation} hold.
    Let $a_j \in \notK$.
    Consider the preferences $P_j$ that $a_j$ has in some profile satisfying \eqref{eq:k-manipulation} (profitable). 
    Define  $K^+ := K \cup \{a_j\}$ and construct a preference profile where the preferences of the agents in $K$ remain $\prefsOf{K}$, and $a_j$'s preferences are set to $P_j$.
    Since \eqref{eq:k-manipulation} holds for $P_j$, the same manipulation remains profitable given the new set of known-agents.
    Moreover, \eqref{eq:k-safe-manipulation} continues to hold, as the set of unknown agents has only shrunk.
    Thus, the mechanism is also $(k+1)$-known-agents safely manipulable.
\end{proof}
% }

\Cref{prop:monotonicity}  (proved in the appendix) justifies the following definition:
\begin{definition}%[The RAT-Degree]
    The \emph{RAT-degree} of a mechanism $f$ is the minimum $k$ for which the mechanism is $k$-known-agent safely-manipulable, or $n$ if there is no such $k$.
\end{definition}
Intuitively, a mechanism with a higher RAT-degree is harder to manipulate, as a risk-avoiding agent would need to collect more information in order to find a safe manipulation.

\begin{observation}
(a) A mechanism is truthful~ if-and-only-if its RAT-degree is $n$.

(b) A mechanism is RAT if-and-only-if its RAT-degree is at least $1$.
\end{observation}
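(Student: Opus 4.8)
The final statement is an \emph{Observation} with two parts, each an if-and-only-if relating RAT-degree to truthfulness (part a) and to RAT (part b). The plan is to unfold the definition of RAT-degree and match it against the definitions of truthful and RAT given in \Cref{sec:preliminaries}. Both parts are essentially bookkeeping over the special cases $k=0$ and $k=n$, using the remark (already established in the excerpt) that profitable/safe manipulation in the original sense coincides with profitable/safe \emph{given} $K=\emptyset$, together with the monotonicity guaranteed by \Cref{prop:monotonicity}.

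\textbf{Part (b).} First I would prove the equivalence between RAT and RAT-degree $\geq 1$. The RAT-degree is by definition the minimum $k$ for which the mechanism is $k$-known-agents safely-manipulable, or $n$ if no such $k$ exists. So RAT-degree $\geq 1$ holds exactly when the mechanism is \emph{not} $0$-known-agents safely-manipulable. Now $0$-known-agents safely-manipulable means: for some agent $a_i$ and the (unique, empty) set $K=\emptyset$, there is a manipulation that is profitable and safe given $\emptyset$. By the observation already made in the text, profitable (resp. safe) given $\emptyset$ is exactly \eqref{eq:manipulation} (resp. \eqref{eq:safe-manipulation}), i.e.\ the original profitable/safe notions. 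Hence $0$-known-agents safely-manipulable is literally the definition of \emph{safely-manipulable}, whose negation is \emph{RAT}. Chaining these equivalences gives: $f$ is RAT $\iff$ $f$ is not safely-manipulable $\iff$ $f$ is not $0$-known-agents safely-manipulable $\iff$ RAT-degree $\geq 1$.

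\textbf{Part (a).} For the truthful case, I would argue RAT-degree $=n$ $\iff$ truthful. The RAT-degree equals $n$ precisely when there is \emph{no} $k\in\{0,\dots,n-1\}$ for which the mechanism is $k$-known-agents safely-manipulable. By \Cref{prop:monotonicity}, if the mechanism were $k$-known-agents safely-manipulable for any $k\leq n-1$, it would be so for all larger such $k$; contrapositively, non-manipulability for \emph{any} $k$ is equivalent to non-manipulability for the largest relevant value $k=n-1$. So RAT-degree $=n$ $\iff$ the mechanism is not $(n-1)$-known-agents safely-manipulable. When $k=n-1$ we have $\notK$ consisting of the single agent $a_i$ itself, so $\prefsOf{\notK}=P_i$ is determined and the quantifiers over $\prefsOf{\notK}$ in \eqref{eq:k-manipulation} and \eqref{eq:k-safe-manipulation} are vacuous: the safety condition \eqref{eq:k-safe-manipulation} holds automatically, so every profitable manipulation given full knowledge is also safe. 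Thus $(n-1)$-known-agents safely-manipulable reduces to the existence of an agent $a_i$, a complete profile $\prefsOf{K}$ of the others, and a report $P_i$ with $f(P_i,\prefsOf{K})\succ_i f(T_i,\prefsOf{K})$ — which is exactly the definition of \emph{manipulable} (a profitable manipulation exists). Negating gives truthful $\iff$ RAT-degree $=n$.

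\textbf{Main obstacle.} The only genuinely substantive step is the collapse of the safety condition at $k=n-1$: I must verify that when the set of unknown agents is exactly $\{a_i\}$, the universal quantifier in \eqref{eq:k-safe-manipulation} ranges over a single forced choice (the manipulator's own report $P_i$) and hence safety is free, so that ``profitable given $n-1$ known agents'' already implies ``safe given $n-1$ known agents.'' This is what makes the worst-case-knowledge notion coincide with plain manipulability at the top of the spectrum, and it is where I would be most careful to state the quantifier reasoning precisely; everything else is direct substitution of definitions combined with \Cref{prop:monotonicity}.
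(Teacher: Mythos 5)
Your proposal is correct and matches the argument the paper leaves implicit (the Observation is stated without proof, as it follows by unfolding the definitions exactly as you do, using \Cref{prop:monotonicity} to reduce part (a) to the case $k=n-1$). One small correction: when $k=n-1$ we have $K=N\setminus\{a_i\}$ and hence $\notK = N\setminus(\{a_i\}\cup K)=\emptyset$, not $\{a_i\}$ — the quantifiers over $\prefsOf{\notK}$ are vacuous because they range over an empty product, not because $a_i$'s own report is "determined"; your conclusion (safety is free and profitability given $n-1$ known agents collapses to classical manipulability) is unaffected.
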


\Cref{fig:hierarchy-RAT-Degree} illustrates the relation between classes of different RAT-degree. 
% \eden{Not sure about the diagram, I added another option.}

% \begin{figure}[h]
%     \centering
%     \includegraphics[width=\linewidth]{}
%     \caption{Hierarchy of the Manipulability and RAT-Degree Classes}
%     \label{fig:hierarchy-RAT-Degree}
% \end{figure}
% % \eden{to add somewhere that all the mechanism that have degrees $1, \ldots, n$ are RAT}

% ===================================
% \begin{figure}[h]
%     \centering
%     \includegraphics{}
%     \caption{Hierarchy of the Manipulability and RAT-Degree Classes. KA stands for Known-Agents.}
%     \label{fig:hierarchy-RAT-Degree}
% \end{figure}

\begin{figure}[h]
  \centering
\begin{tikzpicture}[node distance = 1.25cm, text width=2.5cm, text height=0.25cm, text depth=0.1cm,                        rectangle, draw, text centered, ] 
% axis
\draw[black,line width=1pt] (0,0)--(11.5,0);
% ==========
% down brace
%1
\draw[decorate,decoration={brace,mirror,amplitude=10pt},thin] (0,0) -- (3,0);
%2
\draw[decorate,decoration={brace,mirror,amplitude=10pt},thin] (3,0) -- (4.5,0);
%3
\draw[decorate,decoration={brace,mirror,amplitude=10pt},thin] (4.5,0) -- (6,0);
%4
\draw[decorate,decoration={brace,mirror,amplitude=10pt},thin] (8.5,0) -- (10,0);
%5
\draw[decorate,decoration={brace,mirror,amplitude=10pt},thin] (10,0) -- (11.5,0);
% ==========
% up brace
%1
\draw[thin,draw=gray!70] (0,.3) -- (3,.3); 
\draw[thin, draw=gray!70] (0,.1) -- (0,0);
\draw[thin, dashed, draw=gray!70] (3,.3) -- (3,0);
\draw[thin, draw=gray!70] (0,0.3) -- (0,0.2);
\draw (1.35,.2) node[above]{\mbox{Safely-Manipulable}};
%2
\draw[thin,draw=gray!70] (0,.8) -- (4.5,.8); 
\draw[thin,dashed, draw=gray!70] (4.5,.8) -- (4.5,0);
\draw (2.1,.7) node[above]{\mbox{$1$-KA Safely-Manipulable}};
\draw[thin, dashed, draw=gray!70] (0,2.6) -- (0,0.4);
% \draw[thin, draw=gray!70] (0,0.8) -- (0,0.75);
%2
\draw[thin,draw=gray!70] (0,1.3) -- (6,1.3); 
\draw[thin,dashed, draw=gray!70] (6,1.3) -- (6,0);
\draw (2.9,1.2) node[above]{\mbox{$2$-KA Safely-Manipulable}};
% \draw[thin, dashed, draw=gray!70] (0,1.3) -- (0,0.85);
%3
\draw[thin,draw=gray!70] (0,2.1) -- (8.5,2.1); 
\draw[thin, dashed, draw=gray!70] (8.5,2.1) -- (8.5,0);
\draw (5.2,2) node[above]{\mbox{$(n-2)$-KA Safely-Manipulable}};
%
% \draw[thin, draw=gray!70] (0,2.1) -- (0,2.05);
%
%4
\draw[thin,draw=gray!70] (0,2.6) -- (10,2.6); 
\draw[thin, dashed, draw=gray!70] (10,2.6) -- (10,0);
\draw (6,2.5) node[above]{\mbox{$(n-1)$-KA Safely-Manipulable}};
% ==========
% graybox
\filldraw[draw=black,color=lightgray!18] (-1,-.95) rectangle (11.5,-0.4);
% degrees
\draw (0,-.4) node[below]{RAT-Degree:};
\draw (1.5,-.4) node[below]{$0$};
\draw (3.75,-.4) node[below]{$1$};
\draw (5.25,-.4) node[below]{$2$};
\draw (7.25,-.4) node[below]{$\cdots$};
\draw (9.25,-.4) node[below]{$n-1$};
\draw (10.75,-.4) node[below]{$n$};
% ==========
% top-down-level
%1
\draw[thin, dashed,draw=gray!70] (10,-1.3) -- (10,0);
\draw[thin, draw=gray!70] (10,-1.3) -- (10,-1.2);
\draw[thin, draw=gray!70] (11.5,-1.3) -- (11.5,-1.2);
\draw[thin, dashed, draw=gray!70] (11.5,-1.1) -- (11.5,0);
\draw[thin, dashed, draw=gray!70] (11.5,-1.2) -- (11.5,-1.65);
\draw[thin, draw=gray!70] (11.5,-1.8) -- (11.5,-1.7);
\draw[thin, draw=gray!70] (10,-1.3) -- (11.5,-1.3);
\draw (10.75,-1.3) node[below]{Truthful};
%2
\draw[thin, dashed, draw=gray!70] (3,-1.8) -- (3,0);
\draw[thin, draw=gray!70] (3,-1.8) -- (11.5,-1.8);
\draw (6.25,-1.8) node[below]{\mbox{Risk-Avoiding Truthful (RAT)}};
% ==========
\end{tikzpicture}
  \caption{\centering Hierarchy of the Manipulability and Truthfulness Classes with respect to the RAT-Degree.
The horizontal axis represents the RAT-Degree, from $0$ (safely-manipulable) %= $0$-known-agents safely-manipulable) 
to $n$ (truthful).
Labels above the axis correspond to Manipulability Classes, while labels below the axis correspond to Truthfulness Classes. KA~stands for Known-Agents.}
    \label{fig:hierarchy-RAT-Degree}
\end{figure}

\subsection{An Intuitive Point of View}\label{sec:intuitive}
Consider \Cref{tab:safe-manip-i}. We adopt the point of view of a particular agent $a_i$. The rows $T_i, P^1_i, \ldots, P^3_i$ correspond to the possible reports of agent $a_i$, where $T_i$ is the truthful report and the rest are potential manipulations.
The columns $\prefsExcI^1, \prefsExcI^2, \ldots$ represent possible strategy profiles of the remaining $n-1$ agents. The values $x_1, x_2 \ldots$ indicate the utility of agent $a_i$ under each of these profiles when she reports truthfully.

When the risk-avoiding agent has no information ($0$-known-agents), a profitable-and-safe manipulation is a row in the table that represents an alternative report $P_i \neq T_i$, that (strictly) dominates $T_i$.
That is, in each column, the outcome of the manipulation is at least as good as the truthful outcome, and in at least one column it is strictly better. In this example, $P^1_i$ satisfies this property. 

When the risk-avoiding agent has more information ($k$-known-agents, when $k >0$), it is equivalent to considering a strict subset of the columns.
For instance, suppose the agent can infer—based on the information she has over some $k$ other agents—that only the profiles $\prefsExcI^3, \prefsExcI^4, \prefsExcI^5$ are possible. Then $P^2_i$ is safe and profitable given this set. Notice that the utilities of agent $a_i$ in profiles not among $\prefsExcI^3, \prefsExcI^4, \prefsExcI^5$ are irrelevant, since $a_i$ considers them impossible.

Lastly, when the risk-avoiding agent has a full information ($(n-1)$-known-agents), she knows the exact strategy profile of the other agents, so it is equivalent to consider only one column in which the manipulation is profitable. $P^3_i$ in the table illustrates this type of manipulation. 

\renewcommand{\arraystretch}{1.3}
\begin{table}[h]
    \centering
    \begin{tabular}{c|c|c|c|c|c|c|c|c}
             & $\prefsExcI^1$ & $\prefsExcI^2$ & $\prefsExcI^3$ & $\prefsExcI^4$ & $\prefsExcI^5$& $\prefsExcI^6$& $\prefsExcI^7$& $\ldots$ \\
             \hline
             $T_i$ & $x_1$ & $x_2$ & $x_3$ & $x_4$ & $x_5$ & $x_6$ & $x_7$ & $\ldots$\\
             \hline
             $P^1_i \neq T_i$ & \cellcolor{black!10} $\geq x_1$ & \cellcolor{black!10} $\geq x_2$ & \cellcolor{black!10} $\geq x_3$ & \cellcolor{black!35} $>x_4$ & \cellcolor{black!10} $\geq x_5$ & \cellcolor{black!10} $\geq x_6$ & \cellcolor{black!10} $\geq x_7$ & $\ldots$\\
             \hline
             $P^2_i \neq T_i$ &  &  & \cellcolor{black!10} $\geq x_3$ & \cellcolor{black!35} $>x_4$ & \cellcolor{black!10} $\geq x_5$ &  & & \\
             \hline
             $P^3_i \neq T_i$ &  &  &  & \cellcolor{black!35} $>x_4$ &  &  & & \\
             \hline
        \end{tabular}
    \caption{A Safe-And-Profitable Manipulation from an Agent Perspective.
    Dark-gray cells mean the value must be strictly higher, light-gray means it must be at least as high, and white means any value is allowed.}
    \label{tab:safe-manip-i}
\end{table}

\renewcommand{\arraystretch}{1.2}

\section{Auctions for a Single Good}
\label{sec:single-item-auction}
We consider a seller owning a single good, and $n$ potential buyers (the agents). 
The true preferences  $T_i$ of buyer $a_i$ are given by real values $v_i \geq 0$, representing her happiness from receiving the good. 
The reported preferences $P_i$ are the ``bids'' $b_i \geq 0 $.
A mechanism in this context has to determine the \emph{winner} --- the agent who will receive the good, and the \emph{price} --- how much the winner will pay.
We assume that agents are quasi-linear -- meaning their valuations can be interpreted in monetary units. 
% Accordingly, 
Thus, the utility of the winning agent is the valuation minus the price; while the utility of the other agents is zero.

\paragraph{Results.} \Cref{tab:auction-sum} provides a summary of our results. 
The two most well-known mechanisms in this context are first-price auction and second-price auctions. 
First-price auction maximizes the seller's revenue when all buyers are truthful; but this assumption is, of course, unrealistic, as it is known to be manipulable. In fact, it is even safely-manipulable, so its RAT-degree is $0$ (\Cref{thm:first-price-auction}).
We then prove that a first-price auction with a (fixed) positive discount has RAT-degree $1$.

Second-price auction is known to be truthful, so its  RAT-degree is $n$.
However, it has some important practical disadvantages \cite{ausubel2006lovely}. In particular, when buyers are risk-averse, the expected revenue of a second-price auction is lower than that of a first-price auction \cite{nisan2007algorithmic};
even when the buyers are risk-neutral, a risk-averse \emph{seller} would prefer the revenue distribution of a first-price auction \citep{krishna2009auction}.%
%\footnote{We are grateful to Michael Greinecker for the information.
% https://economics.stackexchange.com/q/59899/385
% https://economics.stackexchange.com/a/24369/385}

This raises the question of whether it is possible to combine the advantages of both auction types.
Indeed, we prove that 
%Lastly, we establish the following surprising and intriguing result:
% EREL: I do not like to compliment my own results... better to let the readers see for themselves.
any auction that applies a weighted average between the first-price and the second-price\footnote{The average price auction is mentioned as an exercise in  \cite{krishna2009auction} (Problem 2.4).} achieves a RAT-degree of $n-1$, which is very close to being fully truthful (RAT-degree~$n$). 
This implies that a manipulator agent would need to obtain information about all $n-1$ other agents to safely manipulate -- which is a very challenging task.
The seller’s revenue from such an auction is higher compared to the second-price auction, giving this mechanism a significant advantage in this context. 
This result opens the door to exploring new mechanisms that are not truthful but come very close it. 
Such mechanisms may enable desirable properties that  are unattainable with truthful mechanisms.

\iffalse
\paragraph{Conclusion.}
By choosing a high value for the parameter $w$, the Average-Price Auction becomes similar to the first-price auction, and therefore may attain a similar revenue in practice, but with better strategic properties. 
The average-price auction is sufficiently simple to test in practice; we find it very interesting to check how it fares in comparison to the more standard auction types.

\erel{
Maybe cite \citet{braverman2016interpolating}, which --- similarly to our average auction --- interpolates between truthful and non-truthful auctions.
% Also mention that the average auction is mentioned as an exercise in a textbook (email from Asaf Romm). EDEN: done.
}
\fi

\begin{table}[h]
    \centering
    \begin{tabular}{|l|c|}
    \hline 
    Mechanism & RAT-Degree  \\
    \hhline{|=|=|}
    First-Price  & $0$ \\
    \hline
    First-Price With Positive Discount & $1$ \\
    \hline
    Average Between First and Second Price  & $n-1$ \\
    \hline
    Second-Price & $n$  \\
    \hline
    \end{tabular}
    \caption{Auctions for a Single Good: Summary of Results}
    \label{tab:auction-sum}
\end{table}

% \eden{To myself: to check whether 1st-price w.discount is new}

% =============================

\subsection{First-Price Auction}\label{sec:auction-first-price}
\begin{toappendix}
    \subsection{First-Price Auction}
\end{toappendix}
In first-price auction, the agent who bids the highest price wins the good and pays her bid; the other agents get and pay nothing. 

It in well-known that the first-price auction is not truthful.
We start by proving that the situation is even worse: first-price auction is safely-manipulable, meaning that its RAT-degree is $0$.

\begin{theoremrep}\label{thm:first-price-auction}
First-price auction is safely manipulable (RAT-degree = $0$).
\end{theoremrep}

\begin{proofsketch}
    A truthful agent always gets a utility of $0$,  whether he wins or loses.
    On the other hand, an agent who manipulates by bidding slightly less than $v_i$ gets a utility of at least $0$ when he loses and a strictly positive utility when he wins. 
    Thus, this manipulation is safe and profitable.
\end{proofsketch}

\begin{proof}
To prove the mechanism is safely manipulable, we need to show an agent and an alternative bid, such that the agent always \emph{weakly} prefers the outcome that results from reporting the alternative bid over reporting her true valuation, and \emph{strictly} prefers it in at least one scenario.
We prove that the mechanism is safely-manipulable by all agents with positive valuations.
    
Let $i \in N$ be an agent with valuation $v_i > 0$, we shall now prove that bidding $b_i < v_i$ is a safe manipulation.
    
We need to show that for any combination of bids of the other agents, bidding $b_i$ does not harm agent~$i$, and that there exists a combination where bidding $b_i$ strictly increases her utility. 
    To do so, we consider the following cases according to the maximum bid of the other agents:
    \begin{itemize}
        \item The maximum bid is smaller than $b_i$: if agent~$i$ bids her valuation $v_i$, she wins the good and pays $v_i$, results in utility $0$. 
        However, by bidding $b_i < v_i$, she still wins but pays only $b_i$, yielding a positive utility. 
        
        Thus, in this case, agent~$i$ strictly increases her utility by lying.

        \item The maximum bid is between $b_i$ and $v_i$: if agent $i$ bids her valuation $v_i$, she wins the good and pays $v_i$, resulting in utility $0$. 
        By bidding $b_i < v_i$, she loses the good but pays noting, also resulting in utility $0$.
        
        Thus, in this case, bidding $b_i$ does not harm agent~$i$.

        \item  The maximum bid is higher than $v_i$: Regardless of whether agent~$i$ bids her valuation $v_i$ or $b_i < v_i$, she does not win the good, resulting in utility $0$.
        
        Thus, in this case, bidding $b_i$ does not harm agent~$i$.
    \end{itemize}

\end{proof}

% ======================================
\subsection{First-Price With Positive Discount Auction}\label{sec:action-first-price-w-discount}
\begin{toappendix}
    \subsection{First-Price With Positive Discount Auction}
\end{toappendix}

In first-price auction with discount, the agent $i$ with the highest bid $b_i$ wins the item and pays $(1-t)b_i$, where $t\in(0,1)$ is a constant. As before, the other agents get and pay nothing.

We prove that, although this minor change does not make the  mechanism truthful, it increases the degree of trustfulness for risk-avoiding agents. 
However, it is still quite vulnerable to manipulation as knowing the strategy of one other agent might be sufficient to safely-manipulate it.

\begin{theoremrep}
\label{auction-with-discount}
The RAT-degree of the First-Price Auction with Discount is $1$.
\end{theoremrep}

% \eden{to try to shorten the proofs}

\newcommand{\thmFpawdText}{
To prove that the RAT-degree is $1$, we need to show that the mechanism is (a) \emph{not} safely-manipulable, and (b) 
$1$-known-agents safely-manipulable.
We prove each of these in a lemma.
}

\thmFpawdText

\begin{toappendix}
\thmFpawdText
\end{toappendix}

\begin{lemmarep}
    First-Price Auction with Discount is not safely-manipulable. 
\end{lemmarep}
\begin{proofsketch}
    % The key observation is that h
    Here, unlike in the first-price auction, whenever the agent wins the good, she gains positive utility.  This implies that no manipulation can be safe. If she under-bids her value, she risks losing the item, which strictly decreases her utility. If she over-bids, she might end up paying more than necessary, reducing her utility without increasing her chances of winning.
\end{proofsketch}
\begin{proof}
We need to show that, for each agent $i$ and any bid $b_i\neq v_i$, at least one of the following is true: either (1) for any combination of bids of the other agents, agent $i$ weakly prefers the outcome from bidding $v_i$; or (2) there exists such a combination for which $i$ strictly prefers the outcome from bidding $v_i$. We consider two cases.

Case 1: $v_i = 0$. In this case condition (1) clearly holds, as bidding $v_i$ guarantees the agent a utility of $0$, and no potential outcome of the auction can give $i$ a positive utility.

Case 2: $v_i > 0$. In this case we prove that condition (2) holds. We consider two sub-cases:
\begin{itemize}
\item Under-bidding ($b_i<v_i$): 
    whenever $\displaystyle \max_{j\neq i}b_j \in (b_i,v_i)$, 
%    Consider any combination of bids of the other agents, in which the maximum bid is strictly smaller than $v_i$ and strictly greater than $b_i$.    In this case, 
when agent $i$ bids truthfully she wins the good and pays $(1-t) v_i$, resulting in utility $t v_i > 0$;
but when she bids $b_i$ she does not win, resulting in utility $0$.

\item Over-bidding ($v_i<b_i$): 
    whenever $\displaystyle \max_{j\neq i} < v_i$, 
%    Consider any combination of bids of the other agents, in which the maximum bid is strictly smaller than $v_i$.    In this case, 
when agent $i$ bids truthfully her utility is $t v_i$ as before;
when she bids $b_i$ she wins and pays $(1-t)b_i > (1-t)v_i$, so her utility is less than $t v_i$.
\end{itemize}
In both cases lying may harm agent $i$. Thus, she has no safe manipulation. 
\end{proof}

\begin{lemmarep}
    First-Price Auction with Discount is $1$-known-agents safely-manipulable. 
\end{lemmarep}
\begin{proofsketch}
    Consider the case where the manipulator $a_i$ knows there is another agent $a_j$ bidding $b_j \in (v_i, v_i/(1-t))$.
    When biding truthfully, she loses and gets zero utility. 
    However, by bidding any value $b_i\in (b_j, v_i/(1-t))$, she either wins and gains positive utility or loses and remains with zero utility. Thus it is safe and profitable. 
\end{proofsketch}
\begin{proof}
% Exists, Exists, Forall
we need to identify an agent $i$, for whom there exists another agent $j \neq i$ and a bid $b_j$, such that if $j$ bids $b_j$, then agent $i$ has a safe manipulation.
% (i.e., a bid different from its valuation that its outcome is always weakly preferred and is strictly preferred in at least one scenario).

Indeed, let $i \in N$ be an agent with valuation $v_i >0$ and let $j \neq i$ be another agent who bids some value  $b_j \in (v_i, v_i/(1-t))$. 
We prove that bidding any value $b_i\in (b_j, v_i/(1-t))$ is a safe manipulation for $i$.

If $i$ truthfully bids $v_i$, she loses the good (as $b_j > v_i$), and gets a utility of $0$.

If $i$ manipulates by bidding some $b_i\in (b_j, v_i/(1-t))$, then she gets either the same or a higher utility, depending on the maximum bid among the unknown agents ($b^{\max} := \displaystyle \max_{\ell\neq i, \ell\neq i}b_{\ell}$):
    \begin{itemize}
        \item If $b^{\max} < b_i$, then $i$ wins and pays $(1-t)b_i$, resulting in a utility of $v_i - (1-t)b_i > 0$, as $(1-t)b_i < v_i$. Thus, $i$ strictly gains by lying.

        \item  If $b^{\max} > b_i$, then $i$ does not win the good, resulting in utility $0$. Thus, in this case, bidding $b_i$ does not harm $i$.
        
        \item If $b^{\max} = b_i$, then one of the above two cases happens (depending on the tie-breaking rule).
    \end{itemize}
In all cases $b_i$ is a safe manipulation, as claimed.
\end{proof}

\subsection{Average Between First and Second Price Auction}\label{sec:auction-avg-price}

\begin{toappendix}
    \subsection{Average Between First and Second Price Auction}
\end{toappendix}

% \eden{I think this name can be a bit confusing because it can be average between all prices}
In the Average-First-Second-Price (AFSP) Auction, the agent $i$ with the highest bid $b_i$ wins the item and pays $w b_i + (1-w) b^{\max}_{-i}$, where $\displaystyle b^{\max}_{-i} := \max_{j\neq i} b_j$ is the second-highest bid, and $w\in(0,1)$ is a fixed constant. That is, the price is a weighted average between the first price and the second price.

We show that this simple change makes a significantly difference -- the RAT-degree increases to $n-1$. 
This means that a manipulator agent would need to obtain information about all other agents to safely manipulate the mechanism --- a very challenging task in practice.

\begin{theoremrep}
\label{auction-average-price}
The RAT-degree of the Average-Price Auction is $(n-1)$.
\end{theoremrep}
The theorem is proved using the following two lemmas.
\begin{toappendix}
    The theorem is proved using the following two lemmas.
\end{toappendix}

\begin{lemmarep}
The Average-Price Auction is not $(n-2)$-known-agents safely-manipulable.
\end{lemmarep}

\begin{proofsketch}
Let $a_i$ be a potential manipulator, $b_i$ a potential manipulation, $K$ be a set of known agents, with $|K| = n-2$, and $\mathbf{b}_{K}$ be a vector that represents their bids.
We prove that the only unknown-agent $a_j$ can make any manipulation either not profitable or not safe.
We denote by $b^{\max}_{K}$ the maximum bid among the agents in $K$ and consider each of the six possible orderings of $v_i$, $b_i$ and $b^{\max}_{K}$.
In two cases ($v_i < b_i < b^{\max}_{K}$ or $b_i < v_i < b^{\max}_{K}$) the manipulation is not profitable;
in the other four cases, the manipulation is not safe.
\end{proofsketch}

% === prev: (backup):
% \begin{proofsketch}
% \erel{TODO: try to shorten sketch}
% Let $i\in N$ be an agent with true value $v_i>0$, and a manipulation $b_i \neq v_i$. We show that this manipulation is unsafe even when knowing the bids of $n-2$ of the other agents.

% Let $K$ be a subset of $(n-2)$ of the remaining agents (the agents in $K$ are the ``known agents''), and let $\mathbf{b}_{K}$ be a vector that represents their bids.
% Lastly, let $j$ be the only agent in $N\setminus (K\cup \{i\})$.
% We need to prove that at least one of the following is true: either (1) the manipulation is not profitable --- for any possible bid of agent~$j$, agent~$i$ weakly prefers the outcome from bidding $v_i$ 
% % $f(v_i, b_j, b_{K})$
% % rather than bidding $b_i$; 
% over the outcome from bidding $b_i$;
% or (2) the manipulation is not safe --- there exists a bid for agent~$j$, such that agent~$i$ strictly prefers the outcome from bidding $v_i$.

% Let $\displaystyle b^{\max}_{K} := \max_{\ell\in K}b_{\ell}$.
% %\eden{should say somewhere that we neglect equalities}\\
% %\eden{Cases (for us):}
% We consider each of the six possible orderings of $v_i$, $b_i$ and $b^{\max}_{K}$.
% In two cases ($v_i < b_i < b^{\max}_{K}$ or $b_i < v_i < b^{\max}_{K}$) the manipulation is not profitable;
% in the other four cases, the manipulation is not safe.
% \end{proofsketch}

\begin{proof}
Let $i\in N$ be an agent with true value $v_i>0$, and a manipulation $b_i \neq v_i$. We show that this manipulation is unsafe even when knowing the bids of $n-2$ of the other agents.

Let $K$ be a subset of $(n-2)$ of the remaining agents (the agents in $K$ are the ``known agents''), and let $\mathbf{b}_{K}$ be a vector that represents their bids.
Lastly, let $j$ be the only agent in $N\setminus (K\cup \{i\})$.
We need to prove that at least one of the following is true: either (1) the manipulation is not profitable --- for any possible bid of agent~$j$, agent~$i$ weakly prefers the outcome from bidding $v_i$ 
% $f(v_i, b_j, b_{K})$
% rather than bidding $b_i$; 
over the outcome from bidding $b_i$;
or (2) the manipulation is not safe --- there exists a bid for agent~$j$, such that agent~$i$ strictly prefers the outcome from bidding $v_i$.

Let $\displaystyle b^{\max}_{K} := \max_{\ell\in K}b_{\ell}$.
%\eden{should say somewhere that we neglect equalities}\\
%\eden{Cases (for us):}
We consider each of the six possible orderings of $v_i$, $b_i$ and $b^{\max}_{K}$ (cases with equalities are contained in cases with inequalities, according to the tie-breaking rule):

\begin{itemize}
    \item 
%    (Cases 1 and 3) 
    $v_i < b_i < b^{\max}_{K}$ or $b_i < v_i < b^{\max}_{K}$:  
    In these cases (1) holds, as for any bid of agent $j$, agent $i$ never wins. Therefore the manipulation is not profitable.
%    We show that in this case the first condition holds.     Since $b^{\max}_{K}$ is higher than both -- the valuation of agent $i$, $v_i$, and her alternative bid, $b_i$; regardless of what agent $j$ bids, agent $i$ does not win the good and gets a utility $0$.     Thus, in this case, agent~$i$ is indifferent between telling the truth and manipulating.
    
    \item 
%    (Case 2) 
    $v_i < b^{\max}_{K} < b_i$:
We show that (2) holds. Assume that $j$ bids any value $b_j \in (v_i, b_i)$.
When $i$ bids truthfully, she does not win the good so her utility is $0$.
But when $i$ bids $b_i$, she wins and pays a weighted average between $b_i$ and $\max(b_j, b_K^{\max})$. As both these numbers are strictly greater than $v_i$, the payment is larger than $v_i$ as well, resulting in a negative utility. Hence, the manipulation is not safe.
    
    % \item $b_i < v_i < b^{\max}_{K}$: \eden{same as case (1)}     We show that in this case the first condition holds.     Since $b^{\max}_{K}$ is higher than the valuation of agent $i$, $v_i$, and of its and alternative bid, $b_i$; no matter what agent $j$ bids, agent $i$ does not win the item and gets a utility $0$.     Thus, in this case, telling the truth weakly dominants this alternative strategies.
    
\item 
%(Case 4) 
$b^{\max}_{K} < v_i < b_i$: We show that (2) holds.
Assume that $j$ bids any value $b_j < b^{\max}_{K}$.
When $i$ tells the truth, she wins and pays $w v_i + (1-w)b^{\max}_{K}$; 
but when $i$ bids $b_i$, she still wins but pays a higher price, $w b_i + (1-w)b^{\max}_{K}$, so her utility decreases.   
Hence, the manipulation is not safe.
    
\item 
%(Case 5) 
$b_i < b^{\max}_{K} < v_i$: We show that (2) holds.
Assume that agent $j$ bids any value $b_j < b_i$.
When $i$ tells the truth, she wins and pays $w v_i + (1-w)b^{\max}_{K} < v_i$, resulting in a positive utility.
But when $i$ bids $b_i$, she does not win and her utility is $0$.
Hence, the manipulation is not safe.
    
\item 
%(Case 6) 
$b^{\max}_{K} < b_i < v_i$: We show that (2) holds.
Assume that $j$ bids any value $b_j\in(b_i,v_i)$.
When $i$ tells the truth, she wins and pays $w v_i + (1-w)b_j < v_i$, resulting in a positive utility.
But when $i$ bids $b_i$, she does not win and her utility is $0$.
Hence, the manipulation is not safe.

\item 
$b_i < b^{\max}_{K} < v_i$ or $b^{\max}_{K} < b_i < v_i$:
We show that (2) holds. 
Assume that $j$ bids any value $b_j\in(b_i,v_i)$.
When $i$ tells the truth, she wins and pays $w v_i + (1-w)\max(b_j, b^{\max}_{K})$, which is smaller than $v_i$ as both $b_j$ and $b^{\max}_{K}$ are smaller than $v_i$. Therefore, $i$'s utility is positive.
But when $i$ bids $b_i$, she does not win and her utility is $0$.
Hence, the manipulation is not safe.
\end{itemize}
\end{proof}

\begin{lemmarep}
The Average-Price Auction is $(n-1)$-known-agents safely-manipulable.
\end{lemmarep}
\begin{proofsketch}
Consider any combination of bids of the other agents in which all the bids are strictly smaller than $v_i$. 
Let $b^{\max}_{-i}$ be the highest bid among the other agents. 
Then any alternative bid $b_i \in (b^{\max}_{-i}, v_i)$ is a safe manipulation.
\end{proofsketch}
\begin{proof}
% Forall, Forall, Exists
given an agent $i \in N$, we need to show an alternative bid $b_i\neq v_i$ and a combination of $(n-1)$ bids of the other agents, such that the agent strictly prefers the outcome resulting from its untruthful bid over her true valuation.

Consider any combination of bids of the other agents in which all the bids are strictly smaller than $v_i$. 
Let $b^{\max}_{-i}$ be the highest bid among the other agents. 
We prove that any alternative bid $b_i \in (b^{\max}_{-i}, v_i)$ is a safe manipulation.

When agent $i$ bids her valuation $v_i$, she wins the good and pays $w v_i + (1-w) b^{\max}_{-i}$, yielding a (positive) utility of 
\begin{align*}
    v_i- w v_i - (1-w)b^{\max}_{-i}
    =
    (1-w) (v_i - b^{\max}_{-i}).
\end{align*}
But when $i$ bids $b_i$, as $b_i > b^{\max}_{-i}$, she still wins the good but pays $w b_i + (1-w) b^{\max}_{-i}$, which is smaller as $b_i<v_i$; therefore her utility is higher.
% resulting in higher utility of
%\begin{align*}
%    &
%    v_i-w v_i - (1-w)b^{\max}_{-i}
%    > 
%    v_i-\frac{1}{2}(v_i + b^{\max}_{-i}) && \text{(As $b_i < v_i$)}
%\end{align*} 
\end{proof}

\paragraph{Conclusion.}
By choosing a high value for the parameter $w$, the Average-Price Auction becomes similar to the first-price auction, and therefore may attain a similar revenue in practice, but with better strategic properties. 
The average-price auction is sufficiently simple to test in practice; we find it very interesting to check how it fares in comparison to the more standard auction types.

\erel{
Maybe cite \citet{braverman2016interpolating}, which --- similarly to our average auction --- interpolates between truthful and non-truthful auctions.
% Also mention that the average auction is mentioned as an exercise in a textbook (email from Asaf Romm). EDEN: done.
}

% \newpage
\section{Indivisible Goods Allocations}\label{sec:indivisible-good-aloc}
In this section, we consider several mechanisms to allocate $m$ indivisible goods $G = \{g_1, \ldots, g_m\}$ among the $n$ agents.
Here, the true preferences $T_i$ of agent $a_i$ are given by $m$ real values: $v_{i,\ell}  \geq 0$ for any $g_{\ell} \in G$, representing her happiness from receiving the good $g_{\ell}$. 
The reported preferences $P_i$ are real values $r_{i,\ell} \geq 0 $.
We assume the agents have \emph{additive} valuations over the goods.
Given a bundle $S\subseteq G$, let $v_i(S)=\sum_{g_\ell\in S}v_{i,\ell}$ be agent $a_i$'s utility upon receiving the bundle $S$.
A mechanism in this context gets $n$ (potentially untruthful) reports from all the agents and determines the \emph{allocation} -- a partition $(A_1,\ldots,A_n)$ of $G$, where $A_i$ is the bundle received by agent $a_i$.

\paragraph{Results.} \Cref{tab:goods-alloc-sum} summarizes our results. 
We start by considering a simple mechanism, the \emph{utilitarian} goods allocation -- which assigns each good to the agent who reports the highest value for it.
This mechanism always returns an allocation that maximizes the social welfare (i.e., sum of utilities) and is thus \emph{Pareto Optimal} --- there does not exist another allocation that all agents weakly prefer and at-least one strictly prefers.
This mechanism is safely manipulable (RAT-degree = $0$). 
We then show that the RAT-degree can be increased to $1$ by requiring normalization --- the values reported by each agent are scaled such that the set of all items has the same value for all agents.

We then consider the famous \emph{round-robin} mechanism. It is easy to compute (specifically, polynomial) and guarantees \emph{envy-freeness up to one good (EF1)} --- each agent weakly prefers their own bundle over each other agent's bundle by the removal of at most one of the goods. 
\citet{amanatidis2017truthful} prove that EF1 is incompatible with  truthfulness for $m\geq 5$, which implies that the degree is at most $n-1$. % the paper shows it also for proportionality 
% \eden{Is this true only for $m \geq 5$? (Application 4.6.)}
We prove that the situation is much worse, the RAT-degree of round-robin is at most $1$. 
The proof relies on weak preferences (i.e., preferences that allow ties).
In sharp contrast, we prove that when all agents have strict preferences, the RAT-degree is the best possible $n - 1$.
%
%
% \hideComsoc{\rmark{However, we prove that, when all agents have strict preferences, the RAT-degree is the best possible, $n-1$.}}

\rmark{Next, we turn to the Max-Nash-Welfare (MNW) rule, which selects an allocation that maximizes the product of agents’ valuations.
This rule always outputs an allocation that is EF1 and Pareto optimal, and it also provides a good approximation to the maximin share (MMS) guarantee (see \cite{caragiannis2019unreasonable} for more details).
We examine three variants of the MNW mechanism, which differ in how they handle tie-breaking when multiple allocations achieve the same Nash welfare.
We show that the RAT-degree of all these variants is at most~$1$.} 

Lastly, we design a new mechanism that we call \emph{volatile priority}. It satisfies EF1 and attains the best possible RAT-degree of $n-1$ for all types of preferences;
but does not run in polynomial time.
The main observation behind our approach is that, in many cases, there are multiple solutions (here, allocations) that satisfy the required fairness properties (here, EF1). 
We use a priority order over the agents, determined by their reported valuations, to select among these solutions.
% : the mechanism chooses one that is best for the highest-priority agent, then among the remaining options the best for the second-highest, and so on.
As a result, a manipulating agent who lacks information about \emph{all} other agents' valuations may unintentionally lower her priority by misreporting, and thus end up worse off. This makes manipulations risky.

This raises the following open question regarding the computational complexity:
\begin{open}\label{open-alloc-1}
    Is there a polynomial time EF1 goods allocation rule with RAT-degree $n-1$?
\end{open}

\rmark{A natural candidate is The Envy-Cycle Elimination Mechanism by \citet{lipton2004approximately}.}
\begin{open}\label{open:envy-elimination}
What is the RAT-degree of The Envy-Cycle Elimination Mechanism by \citet{lipton2004approximately}?
\end{open}

\begin{table}[h]
    \centering
    \begin{tabular}{|l|l|c|l|}
    \hline 
    \multicolumn{2}{|c|}{Mechanism} &RAT-Degree  & Properties\\
    \hhline{|==|=|=|}
    \multicolumn{2}{|l|}{Utilitarian} & $0$ & Pareto Efficient\\
    \hline
    \multicolumn{2}{|l|}{\multirow{2}{*}{Normalized Utilitarian}} & \multirow{2}{*}{$1$} & Pareto Efficient \\
     \multicolumn{2}{|l|}{} &  & (w.r.t. the normalized values)\\
    \hline
    \multicolumn{2}{|l|}{Max-Nash-Welfare} & $\leq 1$ \textcolor{white}{$\leq$} & EF1, Pareto Efficient\\
    \hline
    \multirow{2}{*}{Round-Robin (for $n > m$)} & Any preferences & $1$ & \multirow{2}{*}{EF1, Polynomial-time}\\
    \cline{2-3}
     & Strict preferences & $n-1$ & \\
    \hline
    \multicolumn{2}{|l|}{Volatile Priority*} & $n-1$ & EF1\\
    \hline
    \end{tabular}
    \caption{\centering Indivisible Goods Allocations: Summary of Results. (*) marks mechanisms introduced in this paper.}
    \label{tab:goods-alloc-sum}
\end{table}

% ===========================================
\subsection{Utilitarian Goods Allocation}\label{sec:utilitarian-alloc}
\begin{toappendix}
    \subsection{Utilitarian Goods Allocation}
\end{toappendix}
The \emph{utilitarian} rule aims to maximize the sum of agents' utilities. 
When agents have additive utilities, this goal can be achieved by assigning each good to an agent who reports the highest value for it.
This section analyzes this mechanism.

We make the practical assumption that agents’ reports are bounded from above by some maximum possible value $r_{max} >0$.
%\erel{I think we can remove this assumption, but I kept it since it is common in practice.}

We assume that in cases where multiple agents report the same highest value for some good, the mechanism employs some tie-breaking rule to allocate the good to one of them.
However, the tie-breaking rule must operate independently for each good, meaning that the allocation for one good cannot depend on the tie-breaking outcomes of other goods.
\erel{This assumption it not obvious: it is reasonable to assume that, if an agent loses one item due to tie-breaking, her priority will increase in tie-breaking on other items.
Can it be removed easily (without complicating the proofs too much)?}
\eden{I think this assumption cannot be removed easily - it allows us to analyze each good independently of the other goods.. I'll try to think about it}
\eden{NEW: NO. I think that if we apply such a rule, it will no longer be a safe manipulation to report $r_{max}$ for all items.. I think this change will have a substantial effect, but I’m still not sure exactly how.}

%We further assume that there is at least one agent who has a value different than $0$ and $r_{max}$ for at least one of the goods. Otherwise, for each good, all agents' (true) values are $0$ or $r_{max}$; and in this case the mechanism is trivially truthful.
%\erel{I do not think we need to make this assumption explicitly. To prove manipulation, we can simply take one agent with non-extreme reports.}

\begin{theoremrep}
\label{prop:auction-knownagents}
The Utilitarian allocation rule is safely manipulable (RAT-degree = 0).
\end{theoremrep}

\begin{proofsketch}
    Manipulating by reporting the highest possible value, $r_{max}$, for all goods is both profitable and safe. It is profitable because if the maximum report among the other agents for a given good lies between the manipulator's true value and their alternative bid, $r_{max}$, the manipulator's utility strictly increases. It is safe because, in all other cases, the utility remains at least as high.
\end{proofsketch}

%\erel{ In the cake-cutting section, this idea is explained much more succinctly; see  \Cref{obs:utilitarian-cake-cutting}. Maybe we can do the same here? Then we also do not need the $r_{\max}$.}

\begin{proof}
    To prove the mechanism is safely manipulable, we need to show one agent that has an alternative report, such that the agent always \emph{weakly} prefers the outcome that results from reporting the alternative report over reporting her true valuations, and \emph{strictly} prefers it in at least one scenario.
    
     Let $a_1 \in N$ be an agent who has a value different than $0$ and $r_{max}$ for at least one of the goods. Let $g_1$ be such good -- that is, $0 < v_{1,1} < r_{max}$.
     We prove that reporting the highest possible value, $r_{max}$, for all goods is a safe manipulation.
     Notice that this report is indeed a manipulation as it is different than the true report in at least one place.
     \erel{We can also consider the manipulation that increases only $v_{1,l}$ to $r_{max}$. Then we have fewer cases to consider: for good $l$ the manipulation is safe and profitable, and for the other goods there is no change at all.}
    
    We need to show that for any combination of reports of the other agents, this report does not harm agent~$a_i$, and that there exists a combination where it strictly increases her utility. 

Since the utilities are additive and tie-breaking is performed separately for each good, we can analyze each good independently. 
    The following proves that for all goods, agent $a_1$ always weakly prefers to \er{manipulate}. Case 4 (marked by *) proves that for the good $g_1$, there exists a combination of reports of the others, for which agent~$a_1$ strictly prefers the outcome from \er{manipulating}.
    Which proves that it is indeed a safe profitable manipulation.

    Let $g_{\ell} \in G$ be a good. 
    We consider the following cases according to the value of agent $a_1$ for the good $g_{\ell}$, $v_{1, \ell}$; and the maximum report among the other agents for this good:
    % , denoted by $r_{-i, \ell}^{max}$: 
    \begin{itemize}
        \item $v_{1,\ell} = r_{max}$: In this case, both the truthful and the untruthful reports are the same.

        \item $v_{1,\ell} =0$: Agent~$a_1$ does not care about this good, so regardless of the reports which determines whether or not agent~$a_1$ wins the good, her utility from it is $0$.

        Thus, in this case, agent $a_1$ is indifferent between telling the truth and manipulating.

        \item $0 < v_{1,\ell} < r_{max}$ and the maximum report of the others is strictly smaller than $v_{1,\ell}$: Agent~$a_1$ wins the good in both cases -- whether she reports her true value or reports $r_{max}$.
        
        Thus, in this case, agent $a_1$ is indifferent between telling the truth and manipulating.

        \item (*) $0 < v_{1,\ell} < r_{max}$ and the maximum report of the others is greater than $v_{i,j}$ and smaller than $r_{max}$: when agent~$a_1$ reports her true value for the good $g_{\ell}$, then she does not win it. However, by bidding $r_{max}$, she does. 

        Thus, in this case, agent~$a_1$ strictly increases her utility by lying (as her value for this good is positive).

        \item $0 < v_{1,\ell} < r_{max}$ and the maximum report of the others equals $r_{max}$: 
        when agent $a_1$ reports her true value for the good, she does not win it. 
        But by bidding $r_{max}$, she may win the good (depending on the tie-breaking rule).

        Thus, in this case, agent $a_1$ 
        \er{either strictly gain or does not lose from manipulating.}
%        strictly increases her (expected) utility by lying.
    \end{itemize}
\end{proof}

% This mechanism has a zero-information safe manipulation: reporting  the highest possible value for each item increases the agent's chances to get the item, and has no risk.

\subsection{Normalized Utilitarian Goods Allocation}\label{sec:normalized-utilitarian-alloc}

\begin{toappendix}
    \subsection{ Normalized Utilitarian Goods Allocation}
\end{toappendix}

In the \emph{normalized} utilitarian allocation rule, the agents' reports are first normalized such that each agent's values sum to a given constant $V >0$.
Then, each good is given to the agent with the highest \emph{normalized} value.
% As before, if several agents report the same highest normalized value for a particular good, the good will be assigned to one of them according to some tie-breaking rule that operates independently for each good (see \Cref{sec:utilitarian-alloc} for more details).
%
% Notice that when there is only one good, the mechanism allows only one strategy; thus, in particular, there are no manipulations.
% \eden{ The case where there are exactly two goods is special and analyzed separately at the end of the section OR  in the appendix)}.
We focus on the case of at least three goods.

\begin{theoremrep}
\label{thm:normalized-utilitarian-goods}
For $m \geq 3$ goods,  the RAT-degree of the Normalized Utilitarian allocation rule is $1$.
\end{theoremrep}

We prove \Cref{thm:normalized-utilitarian-goods} using several lemmas that analyze different cases. 

The first two lemmas prove that the rule is not 
safely-manipulable, so its RAT-degree is at least~$1$.
\Cref{claim:normalized-agent-who-likes-single-good} addresses agents who value only one good positively, while \Cref{claim:normalized-agent-who-likes-at-least-two} covers agents who value at least two goods positively.

% copy to appendix
\begin{toappendix}
    We prove \Cref{thm:normalized-utilitarian-goods} using several lemmas that analyze different cases. 

    The first two lemmas prove that the rule is not 
    safely-manipulable, so its RAT-degree is at least~$1$.
    \Cref{claim:normalized-agent-who-likes-single-good} addresses agents who value only one good positively, while \Cref{claim:normalized-agent-who-likes-at-least-two} covers agents who value at least two goods positively.
\end{toappendix}

\begin{lemmarep}
\label{claim:normalized-agent-who-likes-single-good}
An agent who values only one good positively cannot safely manipulate the Normalized Utilitarian allocation rule.
\end{lemmarep}

\begin{proofsketch}
Due to the normalization requirement, 
any manipulation by such an agent involves reporting a lower value for the only good she values positively, while reporting a higher values for some goods she values at 0. This reduces her chances of winning her desired good and raises her chances of winning goods she values at zero, ultimately decreasing her utility. Thus, the manipulation is neither profitable nor safe.
\end{proofsketch}

\begin{proof}
    % We prove that telling the truth is a strictly dominate strategy for such agents.
    %
    Let $a_1$ be an agent who values only one good and let $g_1$ be the only good she likes. That is, her true valuation is $v_{1,1} = V$ and $v_{1,\ell} = 0$ for any $\ell \neq 1$ (any good $g_{\ell}$ different than $g_1$).
    
    To prove that agent $a_1$ does \emph{not} have a safe manipulation, we need to show that for any report for her either (1) for any reports of the other agents, agent $a_1$ weakly prefers the outcome from telling the truth; or (2) there exists a reports of the other agents, for which agent $a_1$ strictly prefers the outcome from telling the truth.

    Let $(r_{1,1}, \ldots, r_{1,m})$ be an alternative report for the $m$ goods. We assume that the values are already normalized.
    We shall now prove that the second condition holds (lying may harm the agent). 
    
    First, as the alternative report is different, we can conclude that $r_{1,1} < V$.
    We denote the difference by $\epsilon = V - r_{1,j}$.
    Next, consider the following reports of the other agents (all agents except $a_1$): $V - \frac{1}{2}\epsilon$ for item $g_1$ and $\frac{1}{2}\epsilon$ for some other good. 
    
    When agent $a_1$ reports her true value she wins her desired good $g_1$, which gives her utility $V >0$. 
    However, when she lies, she loses good $g_1$ and her utility decreases to $0$ (winning goods different than $g_1$ does not increases her utility).
    
    That is, lying may harm the agent. 
\end{proof}

% If each agent values only one good, this implies that the mechanism is not safely manipulable.
% Next, we assume that there is at least one agent who values at least two of the goods, and prove that such agent cannot safely manipulate the mechanism as well. 

\begin{lemmarep}
\label{claim:normalized-agent-who-likes-at-least-two}
An agent who values positively at least two goods cannot safely manipulate the Normalized Utilitarian allocation rule.
\end{lemmarep}

% \newcommand{\increasedInd}{\ell^{\uparrow}}
% \newcommand{\decreasedInd}{\ell^{\downarrow}}
% EREL: suggestion for a different notation (my eyes could not notice the difference between u$\uparrow$ and $\downarrow$..)
\newcommand{\increasedInd}{\mathrm{inc}}
\newcommand{\decreasedInd}{\mathrm{dec}}

\begin{proofsketch}
    Since values are normalized, any manipulation by such an agent must involve increasing the reported value of at least one good $g_{\increasedInd}$ while decreasing the value of at least one other good $g_{\decreasedInd}$. We show that such a manipulation is not safe, by considering the case where all other agents report as follows: they assign a value of $0$ to $g_{\increasedInd}$, a value between the manipulator's true and reported value for $g_{\decreasedInd}$, and a value slightly higher than the manipulator’s report for all other goods (it is possible to construct such reports that are non-negative and normalized). 
    
    With these reports, the manipulation causes the manipulator to lose the good $g_{\decreasedInd}$ which has a positive value for her, whereas   
    she wins the good $g_{\increasedInd}$ with or without the manipulation, and does not win any other good. Hence, the manipulation strictly decreases her total value.
\end{proofsketch}

\begin{proof}
    Let $a_1$ be an agent who values at least two good, $\valT{1}{1}, \ldots, \valT{1}{m}$ her true values for the $m$ goods, and $\repT{1}{1}, \ldots, \repT{1}{m}$ a manipulation for $a_1$.
    We need to show that the manipulation is either not \emph{safe} -- there exists a combination of the other agents' reports for which agent $a_1$ strictly prefers the outcome from telling the truth; or not \emph{profitable} -- for any combination of reports of the other agents, agent $a_1$ weakly prefers the outcome from telling the truth. 
    We will show that the manipulation is not safe by providing an explicit combination of other agents' reports.

    First, notice that since the the true values and untruthful report of agent $a_1$ are different and they sum to the same constant $V$, there must be a good $g_{\increasedInd}$ whose value was increased (i.e., $ \valT{1}{\increasedInd} < \repT{1}{\increasedInd}$), and a good $g_{\decreasedInd}$ whose value was decreased (i.e., $ \valT{1}{\decreasedInd} > \repT{1}{\decreasedInd}$).
    
    Next, let $\epsilon := \min \left\{\frac{1}{m-1}\repT{1}{\increasedInd},~ \frac{1}{2}(\valT{1}{\decreasedInd} - \repT{1}{\decreasedInd})\right\}$, notice that $\epsilon > 0$.
    Also, let $c := \repT{1}{\increasedInd} - \epsilon$, notice that $c > 0$ as well \er{(here we use the condition $m\geq 3$).}

    We consider the combination of reports in which all agents except $a_1$ report the following values, denoted by $r(1), \ldots, r(m)$:
    \begin{itemize}
        \item For good $g_{\increasedInd}$ they report $r(\increasedInd) := 0$.

        \item For good $g_{\decreasedInd}$ they report 
%        $\repT{1}{\decreasedInd} + \epsilon$:
        % (a value higher by $\epsilon$ than the report of agent $a_1$).
        $r(\decreasedInd) := \repT{1}{\decreasedInd} + \epsilon$.

        \item For the rest of the goods, $g_\ell \in G \setminus \{g_{\increasedInd}, g_{\decreasedInd}\}$, they report 
%        $ \repT{1}{\ell} + \frac{1}{m-2} c$:
         % (a value higher by $\frac{1}{m-2} c$ than the report of agent $a_1$)
%        $\forall g_\ell \in G \setminus \{g_{\increasedInd}, g_{\decreasedInd}\}\colon \quad 
        $r(\ell) := \repT{1}{\ell} + \frac{1}{m-2} c$.
        
    \end{itemize}

We prove that the above values constitute a legal report --- they are  non-negative and normalized to $V$.

First, we show that the sum of values in this report is $V$:
    \begin{align*}
        \sum_{\ell =1}^m r(\ell) &= r(\increasedInd) + r(\decreasedInd) +\sum_{g_{\ell} \in G \setminus \{g_{\increasedInd}, g_{\decreasedInd}\}} r(\ell) 
        \\
        & = 0 + (\repT{1}{\decreasedInd} + \epsilon) + \sum_{g_{\ell} \in G \setminus \{g_{\increasedInd}, g_{\decreasedInd}\}} \left(\repT{1}{\ell} + \frac{1}{m-2} c\right) 
        \\
        & = (\repT{1}{\decreasedInd} + \epsilon) + \sum_{g_{\ell} \in G \setminus \{g_{\increasedInd}, g_{\decreasedInd}\}} \repT{1}{\ell} + (m-2)\frac{1}{m-2} c
        \\
        & = (\repT{1}{\decreasedInd} + \epsilon) + \sum_{g_{\ell} \in G \setminus \{g_{\increasedInd}, g_{\decreasedInd}\}} \repT{1}{\ell} + (\repT{1}{\increasedInd} - \epsilon) = \sum_{\ell =1}^m \repT{1}{\ell} = V.
    \end{align*}
    
Second, we show that all the values are non-negative:
    \begin{itemize}
        \item Good $g_{\increasedInd}$: it is clear as $r(\increasedInd) =0$.

        \item Good $g_{\decreasedInd}$: since  $r(\decreasedInd)$ is strictly higher than the (non-negative) report of agent $a_1$ by $\epsilon >0 $, it is clearly non-negative.

        \item Rest of the goods, $g_{\ell} \in G \setminus \{g_{\increasedInd}, g_{\decreasedInd}\}$: 
        since $\epsilon = \min \{\frac{1}{m-1}\repT{1}{\increasedInd}, ~\frac{1}{2}(\valT{1}{\decreasedInd} - \repT{1}{\decreasedInd})\}$, it is clear that $\epsilon \leq \frac{1}{m-1}\repT{1}{\increasedInd}$. As $m \geq 3$ and $\repT{1}{\increasedInd} >0$, we get that $c = \repT{1}{\increasedInd} - \epsilon = \frac{m-2}{m-1} \repT{1}{\increasedInd}$ is higher than $0$.
        As $r(\ell)$ is strictly higher than the (non-negative) report of agent $a_1$ by $c >0 $, it is clearly non-negative.
    \end{itemize}

Now, we prove that, given these reports for the $n-1$ unknown agents, agent $a_1$ strictly prefers the outcome from reporting truthfully to the outcome from manipulating.

    We look at the two possible outcomes for each good -- the one from telling and truth and the other from lying, and show that the outcome of telling the truth is always either the same or better, and that for at least one of the goods that agent $a_1$ wants (specifically, $g_\decreasedInd$) it is strictly better.

    % \eden{maybe: we show that the bundle agent $a_1$ gets when she is truthful is a subset (contained in the ?) of the the bundle she gets when she lies}
    \begin{itemize}
        \item For good $g_\increasedInd$ we consider two cases.
        \begin{enumerate}
            \item If $\valT{1}{\increasedInd} = 0$: when agent $a_1$ is truthful we have a tie for this good as $r(\increasedInd) = 0$.
            When agent $a_1$ manipulates, she wins the good (as $\repT{1}{\increasedInd} > \valT{1}{\increasedInd} = 0 = r(\increasedInd)$).
            However, as $\valT{1}{\increasedInd} = 0$, in both cases, her utility from this good is $0$.

            \item If $\valT{1}{\increasedInd} > 0$:
            Whether agent $a_1$ says is truthful or not, she wins the good as $\repT{1}{\increasedInd} > \valT{1}{\increasedInd} > 0 =r(\increasedInd)$.
            Thus, for this good, the agent receives the same utility (of $\valT{1}{\increasedInd}$) when telling the truth or lying.

        \end{enumerate}

        \item For good $g_\decreasedInd$: when agent $a_1$ is truthful, she wins the good since $r(\decreasedInd) < \valT{1}{\decreasedInd}$:
        \begin{align*}
            r(\decreasedInd) &= \repT{1}{\decreasedInd} + \epsilon \\
            &= \repT{1}{\decreasedInd} + \min \{\frac{1}{m-1}\repT{1}{\increasedInd},~ \frac{1}{2}(\valT{1}{\decreasedInd} - \repT{1}{\decreasedInd})\}\\
            &\leq \repT{1}{\decreasedInd} +  \frac{1}{2}(\valT{1}{\decreasedInd} - \repT{1}{\decreasedInd}) \\
            &= \frac{1}{2}(\valT{1}{\decreasedInd} + \repT{1}{\decreasedInd})< \frac{1}{2}(\valT{1}{\decreasedInd} + \valT{1}{\decreasedInd}) = \valT{1} {\decreasedInd} && \text{(as $\repT{1}{\decreasedInd} < \valT{1}{\decreasedInd}$)}
        \end{align*}
        But when agent $a_1$ manipulates, she loses the good since $r(\decreasedInd) > \repT{1}{\decreasedInd}$ (as $r(\decreasedInd) = \repT{1}{\decreasedInd} + \epsilon$ and $\epsilon > 0$).
        
        As the real value of agent $a_1$ for this good is positive, the agent strictly prefers telling the truth for this good.
%        (telling the truth is strictly better)

        \item Rest of the goods, $g_{\ell} \in G \setminus \{g_{\increasedInd}, g_{\decreasedInd}\}$:
        When agent $a_1$ is truthful, all the outcomes are possible -- the agent either wins or loses or that there is a tie.    
        
        However, as for this set of goods the reports of the other agents are $r(\ell) = \repT{1}{q}+\frac{1}{n-2}c > \repT{1}{\ell}$, when agent $a_1$ manipulates, she always loses the good.
        Thus, her utility from lying is either the same or smaller (since losing the good is the worst outcome).

%        (telling the truth is weakly better)
    \end{itemize}

    Thus, the manipulation may harm the agent.
\end{proof}

The last lemma shows that the RAT-degree is at most $1$, thus completing the proof of the theorem.
\begin{lemmarep}\label{normalized-1-known}
    With $m \geq 3$ goods, Normalized Utilitarian is $1$-known-agent safely-manipulable.
\end{lemmarep}

\begin{proofsketch}
Consider a scenario where there is a known agent who reports $0$ for some good $g$ that the manipulator wants and slightly more than the manipulator’s true values for all other goods. In this case, by telling the truth, the manipulator has no chance to win any good except $g$. Therefore, reporting a 
value of $V$ for $g$ and a value of $0$ for all other goods is a safe manipulation.

The same manipulation is also profitable, since it is possible that the reports of all $n-2$ unknown agents for $g$ are larger than the manipulator’s true value and smaller than $V$. In such a case, the manipulation causes the manipulator to win $g$, which strictly increases her utility.
\end{proofsketch}

\begin{proof}
    Let $a_1$ be an agent and let $\valT{1}{1}, \ldots, \valT{1}{m}$ be her values for the $m$ goods.
    We need to show (1) an alternative report for agent~$a_1$, (2) another agent~$a_2$, and (3) a report agent ~$a_2$; such that 
    for any combination of reports of the remaining $n-2$ (unknown) agents, agent $a_1$ weakly prefers the outcome from lying, and that there exists a combination for which agent $a_1$ strictly prefers the outcome from lying.

    Let $g_{\ell^+}$ be a good that agent $a_1$ values (i.e., $\valT{1}{\ell^+} >0$).
    
    Let $a_2$ be an agent different than $a_1$.
    We consider the following report for agent~$a_2$: first, $\repT{2}{\ell^+} :=0$, and $\repT{2}{\ell} := \repT{1}{\ell} + \epsilon$ for any good $g_{\ell}$ different than $g_{\ell^+}$, where $\epsilon := \frac{1}{m-1} \valT{1}{\ell^+} $.
    Notice that $\epsilon >0$.

    We shall now prove that reporting $V$ for the good $g_{\ell^+}$ (and $0$ for the rest of the goods) is a safe manipulation for $a_1$ given that $a_2$ reports the described above.

    When agent~$a_1$ reports her true values, then she does not win the goods different than $g_{\ell^+}$ -- this is true regardless of the reports of the remaining $n-2$ agents, as agent~$a_2$ reports a higher value $\repT{1}{\ell} < \repT{1}{\ell} + \epsilon = \repT{2}{\ell}$.
    For the good $g_{\ell^+}$,  we only know that $\repT{2}{\ell^+} = 0 > \valT{1}{\ell^+} > \repT{1}{\ell^+} = V$, meaning that it depends on the reports of the $(n-2)$ remaining agents. 
    We consider the following cases, according to the maximum report for $g_{\ell^+}$ among the remaining agents:
    \begin{itemize}
        \item If the maximum is smaller than $\valT{1}{\ell^+}$: agent $a_1$ wins the good $\valT{1}{\ell^+}$ in both cases (when telling the truth or lies). 

        (the same)

        \item If the maximum is greater than $\valT{1}{\ell^+}$ but smaller than $\repT{1}{\ell^+} = V$: when agent $a_1$ tells the truth, she does not win the good. 
        However, when she lies, she does.
        
        (lying is strictly better). 

        \item If the maximum equals $\repT{1}{\ell^+} = V$: when agent $a_1$ tells the truth, she does not win the good. 
        However, when she lies, we have tie for this good.
        Although agent $a_1$ is risk-averse, having a chance to win the good is strictly better than no chance at all.
        
        (lying is strictly better). 
    \end{itemize}

\end{proof}

% ========================================
\subsection{Round-Robin Goods Allocation}\label{sec:round-robin}
\begin{toappendix}
    \subsection{Round-Robin Goods Allocation}
\end{toappendix}
% \eden{Why do we use the word “item”? If it’s intentional (e.g., if the results are more general(, then I think we should state that explicitly.}
In round-robin goods allocation, the agents are arranged according to some predetermined order $\pi$. 
There are $m$ rounds, corresponding to the number of goods. 
In each round, the agent whose turn it is (based on $\pi$) takes their most preferred item from the set of items that remain unallocated at that point.
When there are multiple items that are equally most preferred, the agent breaks the tie according to a fixed item-priority ordering.
Note that normalization is not important for Round-Robin, as it is not affected by scaling the valuations.

If there are at most $n$ items, then the rule is clearly truthful.
Thus, we assume that there are $m\geq n+1$ items.
Even the slight increment to $m=n+1$ makes a significant difference:

\begin{theorem}
    The RAT-degree of Round-Robin is: \Hquad 
    (a)~$n$ when $m\leq n$, \Hquad (b)~$1$ when $m > n$; \Hquad and (c)~$n-1$ when $m > n$ and all agents have strict preferences.
\end{theorem}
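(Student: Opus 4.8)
The plan is to prove matching lower and upper bounds on the RAT-degree in each of the three regimes, after one reduction: an agent who reaches her turn only once picks a favorite item from a menu that her own report cannot influence, so truthful reporting is dominant for her and she has no profitable manipulation. Hence only agents with at least two picking turns matter. Part (a) is then immediate: when $m\le n$ every agent picks at most once, so the mechanism is truthful and, by the Observation, its RAT-degree is $n$.

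For part (b) I would show degree exactly $1$ by proving that round-robin is RAT even under weak preferences (degree $\ge 1$) and that it becomes $1$-known-agents safely-manipulable (degree $\le 1$). For the lower bound, take any agent $a_i$ and any profitable manipulation $P_i$, and exhibit one punishing profile: let the other agents covet the item $a_i$ skips at the first turn $t^*$ where $P_i$ deviates (the homogeneous profile in which everyone adopts $T_i$ realizes this cleanly for strict deviations; a tie-breaking deviation that ``protects'' one of two indifferent top items is punished instead by a profile in which the others covet the \emph{other} top item, so the protection backfires). In either case the skipped favorite is taken by the next agent and never returns, and a short induction shows $a_i$'s bundle is weakly, and at $t^*$ strictly, worse — so no profitable manipulation is safe. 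For the upper bound I would give an explicit weak-preference instance: $a_i$ is indifferent between two top items $X,Y$ and has a single lower-valued fallback $Z$, all other items worthless, while the one known agent covets $Y$ and breaks its remaining ties toward worthless items. Grabbing $Y$ first is profitable (whenever the unknown agents avoid $X$, $a_i$ ends with both $X$ and $Y$) and safe (truthfully $a_i$ can never hold both $X$ and $Y$, since the known agent takes $Y$, so her truthful utility is capped, whereas the manipulation secures $Y$ plus at least the fallback).

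For part (c), under strict preferences, the upper bound is easy: with all $n-1$ other agents known there is no residual uncertainty, so any profitable manipulation is automatically safe, and round-robin is manipulable for $m>n$ (a short strict instance, with a single known coveter, suffices); hence the degree is at most $n-1$. The substantive claim is the lower bound, that round-robin is \emph{not} $(n-2)$-known-agents safely-manipulable. Here I would fix the manipulator $a_i$ (with $\ge 2$ turns), an arbitrary known set $K$ of size $n-2$, and the single unknown agent $a_j$, and argue every manipulation is unprofitable or unsafe. A profitable manipulation must, at its first effective turn $t^*$, make $a_i$ skip her strictly-favorite available item $x$ in favor of some $x'\prec_i x$; since $a_i$ picks again only $n$ turns later, $a_j$ picks at least once in between, so configuring $a_j$ to grab $x$ (when the known agents do not already do so) punishes the deviation.

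The step I expect to be the main obstacle is the coupling argument that turns ``the skipped item is grabbed'' into ``net harm,'' together with its mirror image used for safety in part (b). After the deviation at $t^*$ the truthful and manipulated executions visit different menus and the other agents react differently, so I would couple the two runs item-by-item and show that, once $x$ is removed, the manipulated run's available pool is dominated (under $v_i$) by the truthful run's at each of $a_i$'s later turns, whence her cumulative utility only drops, with strictness inherited from $x\succ_i x'$; I also need $a_j$ to be correctly positioned relative to $a_i$'s turns, which holds because in each round $a_j$ picks between two consecutive picks of $a_i$. Running the same monotone-coupling lemma in reverse is exactly what certifies safety of the weak-preference manipulation in part (b) uniformly for all $m>n$, including the delicate large-$m$ case where several unknown agents could otherwise exhaust $a_i$'s valued items before her second turn.
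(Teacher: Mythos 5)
Your high-level plan (matching bounds in each regime, induction over the items for the lower bounds) mirrors the paper's, and part (a) together with your part (c) upper bound (no unknown agents, so profitable implies safe) are fine. But two of your key steps do not hold as stated. The central ``monotone coupling'' you invoke for both lower bounds --- that once $a_i$ skips $x$ for $x'$, the manipulated run's available pool is dominated by the truthful run's at each of her later turns, whence her utility only drops --- is false: after the deviation the manipulated pool \emph{contains} $x$ while the truthful pool does not, so the manipulated pool can be strictly better at her next turn; this is precisely how profitable round-robin manipulations work (grab $x'$ now, recover $x$ later). The paper's arguments do not prove domination; they force the two pools to become \emph{identical} shortly after the deviation. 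In part (b), the punishing profile has $a_2$ rank both $x$ and $x'$ as her top two, so both items are consumed in the first two turns of both runs and the induction hypothesis applies to the common residue; and the pure tie-break deviation ($v_i(x)=v_i(x')$, where there is no strict loss at $t^*$, contrary to your claim of strictness there) requires a separate argument via an intermediate valuation and a permutation symmetry establishing ``profitable for some profile $\Rightarrow$ harmful for some other profile.'' Your one-line ``the protection backfires'' does not supply this. In part (c), the paper tracks a ``fragile asymmetry'' ($|S\setminus S'|=|S'\setminus S|=1$) through the known agents' turns --- and the differing items need not remain $x$ and $x'$ by the time $a_j$ moves --- then has $a_j$ value exactly the two currently-differing items highest so that her pick equalizes the pools; merely ``configuring $a_j$ to grab $x$'' leaves the pools off by one swap and does not close the argument.

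Separately, your explicit instance for the part (b) upper bound is not safe once $n\ge 4$, because of the positive-valued fallback $Z$. The manipulation does not ``secure the fallback'': with two unknown agents one can arrange that $Z$ survives to $a_i$'s second turn in the truthful run but is consumed before it in the manipulated run (e.g.\ one unknown ranks $X$ first and then a worthless item, the other ranks first the worthless item that the known agent takes in the manipulated run and then $Z$), yielding truthful utility $v(X)+v(Z)$ against manipulated utility $v(Y)=v(X)$. The fix is exactly the paper's construction: give the manipulator positive value \emph{only} on the two tied items, so her truthful utility is pinned at $v(X)$ and grabbing $Y$ first is trivially safe.
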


As (a) is trivial, we start by proving (b) using \Cref{RR-weak-prefs}.
The proof of the upper bound uses weak preferences; in contract, \Cref{thm:RR-strict-degree-n-1} proves that when all agents have strict preferences, the RAT-degree becomes~$n-1$, yielding part~(c).

\begin{lemmarep}\label{RR-weak-prefs}
    With $m\geq n+1$ items, the RAT-degree of Round-Robin is $1$.
\end{lemmarep}

\begin{proofsketch}
    We start by showing that Round-Robin is $1$-known-agent safely-manipulable.
    Consider the first turn of the agent who picks first, and the case where her two most preferred items are of equal value. If she knows that favorite item of the agent who picks right after is the item she planned to take first, she can safely manipulate by taking the other one instead. In some cases, this change allows her to obtain both items (since the second agent will then take a different one), and in all other cases she still ends up with the same utility as before. Hence, the manipulation is both profitable and safe.

    Next, we prove that Round-Robin is \emph{not} safely-manipulable. 
    We first show that it is never safe to manipulate in a way that results in picking an item that is strictly less preferred than the best available one as it might reduce the manipulator’s utility. Thus, the only potentially profitable manipulations are those that swap between equally valued items. 
    However, we show that if such a manipulation is profitable for some profile of the others’ preferences, we can construct a symmetric profile where the same manipulation is harmful.
\end{proofsketch}

\begin{toappendix}

To prove that the RAT-degree is exactly $1$, \Cref{claim:RR-RAT1-tie} proves that the RAT-degree is at most~$1$, while \Cref{claim-RR-degree-not-0} shows that it is at least~$1$.

\begin{lemma}\label{claim:RR-RAT1-tie}
    With $m\geq n+1$ items, Round-Robin is $1$-known-agent safely-manipulable.
\end{lemma}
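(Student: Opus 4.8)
The plan is to exhibit one manipulator, one known agent, and an explicit safe-and-profitable manipulation. Fix the round-robin order and let $a_1$ be the agent who picks first and $a_2$ the agent who picks second; these will be the manipulator and the known agent, so $K=\{a_2\}$. The structural feature I want to exploit is that, under weak preferences, $a_1$ can have two equally-most-preferred items, and the fixed item-priority tie-break forces her truthful first pick. I will arrange that this forced pick is the ``wrong'' one: truthfully she grabs the item that no earlier agent contests and thereby hands her other favorite to $a_2$, whereas a manipulation lets her secure the contested item immediately while the uncontested one may come back to her on a later turn.

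Concretely, among the two items I use, let $u$ be the one with higher item-priority and $c$ the one with lower priority. I set $a_1$'s true valuation so that $u$ and $c$ are each worth some $v>0$ and every other item is worth $0$; since $u$ outranks $c$ in the tie-break, truthful $a_1$ takes $u$ on her first turn. I set $a_2$'s (known) valuation so that $c$ is $a_2$'s unique favorite, with $a_2$'s second-favorite being some item other than $u$. The manipulation is for $a_1$ to report $c$ as her strict favorite, so that she takes $c$ on her first turn instead.

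For safety I would compute the truthful payoff exactly: $a_1$ takes $u$, then $a_2$ immediately takes $c$ (it is available and is $a_2$'s favorite), so both of $a_1$'s valued items are gone and all her later picks are worth $0$; her truthful utility is exactly $v$, for \emph{every} profile of the unknown agents. Under the manipulation $a_1$ takes $c$ (worth $v$) on her first turn, so whatever happens afterwards her utility is at least $v$. Hence the manipulation is weakly better against every profile of $a_3,\dots,a_n$ — it is safe given $\{a_2\}$. For profitability I exhibit one unknown profile on which $u$ survives round~$1$: since $m\ge n+1$ there are at least $n-1$ items other than $u$ and $c$, so I can let $a_2$'s second-favorite and each favorite of $a_3,\dots,a_n$ be distinct items all different from $u$; then nobody takes $u$ in round~$1$, and because $m\ge n+1$ the agent $a_1$ gets a second turn, on which $u$ is her only remaining positively-valued item and she takes it. Her manipulated utility is then $2v>v$.

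Putting these together, the reported manipulation is both safe and profitable for $a_1$ given the single known agent $a_2$, which is exactly $1$-known-agent safe-manipulability. The step needing the most care is safety: it is clean here only because $a_1$ values precisely two items. If she also valued some middle item positively, a successor grabbing that middle item in the manipulated run could cascade into $a_1$ losing it and ending up strictly worse, so the construction deliberately zeroes out every item except $u$ and $c$. The hypothesis $m\ge n+1$ is used twice — to guarantee that $a_1$ has a second pick at all, and to supply enough distinct ``junk'' items for the profitable profile.
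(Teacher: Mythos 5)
Your proposal is correct and follows essentially the same route as the paper's proof: the paper also makes the first picker value exactly two items equally (its $g_1, g_2$ playing the roles of your $u, c$), uses the known second agent's preference for the lower-priority tied item to pin the truthful utility at exactly one item's worth, and has the manipulator demote the contested item to second place in her report so she grabs it first and possibly recovers the other on her second turn. The only cosmetic difference is the witness profile for profitability (the paper clones $a_2$'s valuation for all unknown agents rather than giving them distinct favorites), which changes nothing substantive.
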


\begin{proof}
    Let $\pi$ be the order according to agents' indices: $a_1,a_2,\ldots,a_n$.
    Let agent $a_1$'s valuation be such that $v_{1,1}=v_{1,2}=1$ and $v_{1,3}=\cdots = v_{1,m}=0$.
    Suppose agent $a_1$ knows that agent $a_2$ will report the valuation with $v_{2,1}=0$, $v_{2,2}=1$, and $v_{2,3}=\cdots=v_{2,m}=0.5$.
    We show that agent $a_1$ has a safe and profitable manipulation by reporting $v_{1,1}'=0.5$, $v_{1,2}'=1$, and $v_{1,3}'=\cdots=v_{1,m}'=0$.

    Firstly, we note that agent $a_1$'s utility is always $1$ when reporting her valuation truthfully, regardless of the valuations of agents $a_3,\ldots,a_n$.
    This is because agent $a_1$ will receive item $g_1$ (by the item-index tie-breaking rule) and agent $a_2$ will receive item $g_2$ in the first two turns.
    The allocation of the remaining items does not affect agent $a_1$'s utility. 
%    \edeng{I think the word 'round' might sound like a sequence of turns for all the agents (we even use it with this meaning in the proof of Lemma 5.7). Maybe it would be better to use 'turn' instead?} \biaoshuai{Sounds good to me.}

    Secondly, after misreporting, agent $a_1$ will receive item $g_2$ in the first turn, which already secures agent $a_1$ a utility of at least $1$. Therefore, agent $a_1$'s manipulation is safe.

    Lastly, if the remaining $n-2$ agents report the same valuations as agent $a_2$ does, it is easy to verify that agent $a_1$ will receive item $g_1$ in the $(n+1)$-th turn.
    In this case, agent $a_1$'s utility is $2$.
    Thus, the manipulation is profitable.
\end{proof}

% ----------- NEW:

\begin{lemma}\label{claim-RR-degree-not-0}
Round-robin is not safely-manipulable.
\end{lemma}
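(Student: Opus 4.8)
The plan is to show that for any agent $a_i$ and any manipulation $P_i \neq T_i$, the manipulation is either not profitable or not safe. The key structural observation is that in round-robin, what matters at each of $a_i$'s turns is only \emph{which item she picks} from the currently-available set, and truthful play always picks a most-preferred available item. So I would first classify manipulations by the effect they have on $a_i$'s sequence of picks relative to truthful play.

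First I would argue that any ``downgrading'' manipulation is unsafe. Concretely, consider the first turn of $a_i$ at which her manipulated report causes her to pick an item $g$ that she values strictly less than the best item $g^*$ available to her truthful self at that same turn (i.e., $v_{i,g} < v_{i,g^*}$). I would construct a preference profile for the other agents in which, from this turn onward, the truthful trajectory would have secured $g^*$ and nothing that the manipulation gains later compensates: the simplest such profile has every other agent value only the items that $a_i$ does \emph{not} pick, so that the set of items eventually reaching $a_i$ is the same under both reports, and the only difference is that truthful play grabs the high-value $g^*$ while the manipulation grabs the low-value $g$. This yields $f(T_i,\prefsExcI)\succ_i f(P_i,\prefsExcI)$, violating safety \eqref{eq:safe-manipulation}. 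Hence the only manipulations that can possibly be safe are those in which, at every turn, $a_i$ picks an item of value \emph{equal} to the best available — i.e. the manipulation only ever swaps among items that are tied in $a_i$'s true valuation.

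The harder part is the remaining case: manipulations that only reshuffle choices among equally-valued items. Here I would show such a manipulation cannot be both profitable and safe by a symmetry argument. Suppose the manipulation is profitable, so there is a profile $\prefsExcI$ under which it strictly helps; profitability can only arise because taking a tied item $g$ instead of the truthful $g^*$ changes what later agents pick, eventually routing some additional desired item to $a_i$. The claim is that I can build a ``mirror'' profile $\prefsExcI'$ for the unknown agents on which the same manipulation strictly \emph{hurts} $a_i$. The idea is to take the profile that exhibited profitability and modify the later agents' reports so that the item $g^*$ that truthful play would have relinquished is exactly the one some subsequent agent would have competed for — so that by truthfully keeping $g^*$ the agent blocks a later agent and is then routed the extra item, whereas the manipulated swap gives that extra item away. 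Because the preferences allowing this routing are symmetric in the roles of $g$ and $g^*$ (both tied for $a_i$), swapping the corresponding values in the others' reports flips the sign of the gain. Producing this mirror profile explicitly — and checking it is a legal profile that makes truthful play strictly better — is the main obstacle, since it requires tracking how the single altered pick propagates through all later rounds of the round-robin.

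Combining the two cases: every manipulation either downgrades at some turn (hence is unsafe by the first argument) or only swaps among tied items (hence, if profitable, is unsafe by the mirror-profile argument). Therefore no manipulation is simultaneously profitable and safe, so Round-Robin is not safely-manipulable, which is exactly the claim. I expect the tie-swapping case to demand the most care, both in isolating \emph{why} such a swap can ever be profitable and in constructing the symmetric harmful profile; the downgrading case should be routine once the right ``others want only what $a_i$ rejects'' profile is written down.
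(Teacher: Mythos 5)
Your case split (downgrading picks versus swaps among equally-valued items) matches the structure of the paper's argument, but both of your key constructions have problems. The downgrading construction is concretely wrong as stated: if the other agents value only items that $a_i$ does not want, then the relinquished item $g^*$ is not taken by anyone else and simply remains in the pool for $a_i$ to collect at a later turn, so the two bundles can end up identical (e.g.\ $n=2$, $m=3$, true values $3,2,0$, manipulation picks $g_2$ first, $a_2$ wants only $g_3$: both reports yield $\{g_1,g_2\}$). The correct move is the opposite of yours: make some other agent rank the two contested items $g^*$ and $g$ at the top, so that both leave the pool within the first two turns regardless of what $a_i$ does; then the remaining game is played on an identical item set and the first-turn loss $v_{i,g^*}-v_{i,g}>0$ is what must be shown to dominate. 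Even then you cannot conclude immediately, because the manipulated report may also alter $a_i$'s later picks on that common remainder; the paper handles this by induction on the number of items, which reduces every comparison to the first pick and defers the rest to the induction hypothesis.

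For the tie-swapping case you have correctly identified the idea (a symmetric profile on which the same swap is harmful) but you explicitly leave its construction as "the main obstacle," and that construction is the real content of the proof. The paper resolves it by introducing an intermediate valuation that breaks the tie by $\epsilon$ in favor of the swapped item, invoking the induction hypothesis for that intermediate report, and then, in the profitable subcase, permuting the other agents' valuations by the cycle that maps the tied prefix $g_1,\ldots,g_z$ to $g_z,g_1,\ldots,g_{z-1}$; symmetry of round-robin then converts a profile witnessing profitability into one witnessing harm. Without the inductive framework and this explicit permutation, your plan does not yet constitute a proof: tracking how a single altered pick propagates through all later rounds directly, as you propose, is exactly the difficulty the induction is designed to avoid.
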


\begin{proof}
The proof is by induction on $m$. For the basis, we assume $m\leq n$.
Then every agent $a_i$ gets at most one item. With a truthful report this single item is the best remaining item according to $\bv_i$; hence, it is at least as good as any other bundle, so no manipulation is profitable.

For the induction step, we suppose the lemma holds for any number of items up to $m-1$ for some $m\geq n+1$. We aim to show it for $m$.

Consider first an agent $a_i$ for $i\geq 2$. The first $i-1$ picks are not affected at all by $a_i$. Hence, from $a_i$'s perspective, only the remaining $m-i+1$ items are relevant. As $m-i+1 < m$, the claim holds by the induction hypothesis.

It remains to prove the induction step for $a_1$ (the first picker). 
Renumber the items by descending order of $a_1$'s true values, and subject to that, by the fixed item-priority ordering.
Let $\bv_1'$ be a potential manipulation of agent $a_1$, and let $g_z$ be the item ranked first by $\bv_1'$. Denote by $A_1$ the bundle that $a_1$ gets by reporting truthfully $\bv_1$, and by $A_1'$ the bundle that $a_1$ gets by reporting  $\bv_1'$.
Note that these bundles depend on the valuations of agents $2,\ldots,n$.
Consider three cases.

\paragraph{Case 0: $g_z = g_1$.}
Under both $\bv_1'$ and $\bv_1$, $a_1$ takes $g_1$ in the first turn, so the set of remaining items in both cases is $M':=M\setminus\{g_1\}$.
As $|M'|=m-1$, we can apply the induction hypothesis to this set. By the induction hypothesis there are two options:
\begin{enumerate}[label=(\roman*), align=left]
    \item $\bv_1'$ is not safe, so $v_1(A_1\cap M') > v_1(A_1'\cap M')$ for \emph{some} valuations of agents $2,\ldots,n$ on $M'$.
    
    \item $\bv_1'$ is not profitable, so $v_1(A_1\cap M') \geq v_1(A_1'\cap M')$ for \emph{all} valuations of agents $2,\ldots,n$ on $M'$.
\end{enumerate}

As $a_1$ picks $g_1$ both by $v_1$ and by $v_1'$, we have $A_1  = (A_1\cap M') \cup \{g_1\}$ and 
$A_1'= (A_1' \cap M') \cup \{g_1\}$.
Hence, the same inequalities hold for the sets $A_1$ and $A_1'$, which concludes the proof for this case.

\paragraph{Case 1: $g_z\neq g_1$ and $v_{1,z} < v_{1,1}$.}
Note that we cannot use the same argument as in Case 0, as the set of items remaining after the first turn is different when $a_1$ reports $\bv_1$ or $\bv_1'$.

We show that $\bv_1'$ is not a safe manipulation by showing a specific set of valuations of agents $2,\ldots,n$ for which $v_1(A_1) > v_1(A_1')$.
Let $a_2$'s valuation be such that she ranks items $g_1,g_z$ as the top two (and her values for the remaining items is arbitrary). In this case, items $g_1,g_z$ are allocated in the first two turns, whether agent $a_1$ truthfully reports $\bv_1$ or misreports $\bv_1'$.
So in both the truthful and the manipulated scenarios, the set of remaining items is $M'' := M\setminus \{g_1,g_z\}$.
As $|M''|=m-2$, we can apply the induction hypothesis to this set.
By the induction hypothesis, there are two options:
\begin{enumerate}[label=(\roman*), align=left]
    \item $v_1(A_1\cap M'') > v_1(A_1'\cap M'')$ for \emph{some} valuations of agents $2,\ldots,n$  on $M''$. 
    We extend $\bv_2$ to $M$ by making $a_2$ rank $g_1,g_z$ first, and extend the valuations of agents $3,\ldots,n$ to $M$ arbitrarily.
    For these extended valuations, 	
    we again have $A_1 = (A_1\cap M'')\cup \{g_1\}$ and 
    $A_1' = (A_1'\cap M'')\cup \{g_z\}$. Hence, for these valuations,  $v_1(A_1) > v_1(A_1')$.	

    \item $v_1(A_1\cap M'')\geq v_1(A_1'\cap M'')$ for \emph{all} valuations of agents $2,\ldots,n$ on $M''$. 
    For all valuations of the other agents in which $a_2$ picks $g_1$ first, 
    we have $A_1 = (A_1\cap M'')\cup \{g_1\}$ and 
    $A_1' = (A_1'\cap M'')\cup \{g_z\}$. Hence, for these valuations,  $v_1(A_1) > v_1(A_1')$.
\end{enumerate}

In both cases, $\bv_1'$ is not safe.

\paragraph{Case 2: $g_z\neq g_1$ and $v_{1,z}=v_{1,1}$.}
Note that we cannot use the argument of Case 1:  at the first bullet $v_1(A_1\cap M'') > v_1(A_1'\cap M'')$, all we can say is that, for \emph{some} valuation profiles (those in which $a_2$ ranks $g_1,g_z$ first), the manipulation is not profitable; we cannot say that it not profitable for \emph{all} valuation profiles, nor that it is harmful for any valuation profile.

To use the induction hypothesis, we define an ``intermediate'' manipulation $\bv_1^\top$ as follows:
\begin{align*}
v^\top_{1,z} &= v_{1,z}+\epsilon;
\\
v^\top_{1,j} &= v_{1,j}  & j\neq z
\end{align*}
Denote by $A_1^\top$ the bundle that $a_1$ gets by reporting $v_1^\top$.
Under both $\bv_a^\top$ and $\bv_1'$, $a_1$ takes $g_z$ in the first turn. Similarly to Case 0, we can apply the induction hypothesis to the set of remaining items $M':=M\setminus\{g_z\}$ and to the valuation $v_1^\top$. There are two options:
\begin{enumerate}[label=(\roman*), align=left]
    \item $v_1^\top(A_1^\top\cap M') > v_1^\top(A_1'\cap M')$ for \emph{some} valuations of agents $2,\ldots,n$ on $M'$.
    
    \item $v_1^\top(A_1^\top\cap M') \geq v_1^\top(A_1'\cap M')$ for \emph{all} valuations of agents $2,\ldots,n$ on $M'$.
\end{enumerate}

As $v_1$ and $v_1^\top$ are identical on $M'$, in both cases (i) and (ii), the same inequalities hold for $v_1$.
Moreover,
As $a_1$ picks $g_z$ both by $v_1^\top$ and by $v_1'$, we have $A_1^\top = (A_1^\top\cap M') \cup \{g_z\}$ and 
$A_1'= (A_1' \cap M') \cup \{g_z\}$.
Hence, the same inequalities hold for the sets $A_1^\top$ and $A_1'$, regardless of how the valuations of agents $2,\ldots,n$ are extended to $M$. That is, at least one of the following holds:
\begin{enumerate}[label=(\roman*), align=left]
    \item $v_1(A_1^\top) > v_1(A_1')$ for \emph{some} valuations of agents $2,\ldots,n$ on $M'$;
    
    \item $v_1(A_1^\top) \geq v_1(A_1')$ for \emph{all} valuations of agents $2,\ldots,n$ on $M'$.
\end{enumerate}

We now consider each of these cases in turn.

For case (i), denote by $\bv_2,\ldots,\bv_n$ some valuations of agents $2,\ldots,n$ on $M'$ for which $v_1(A_1^\top) > v_1(A_1')$. 
We can assume w.l.o.g. that these valuations induce a strict ranking. This is because we can perturb the values for items with the same value such that items with a higher priority have slightly higher values, which does not affect the round-robin procedure at all.
Hence, we extend these valuations to valuations $\bv^+_2,\ldots,\bv^+_n$ on $M$, by ``inserting'' $g_z$ just after $g_1$.

Now, let us compare $A_1^\top$ to $A_1$.
Consider the two round-robin processes for the two profiles $(\bv_1,\bv_2^+,\ldots,\bv_n^+)$ and $(\bv_1^\top,\bv_2^+,\ldots,\bv_n^+)$.
The two processes differ at exactly two turns: the first turn where agent $a_1$ takes $g_1$ in the former process and $g_z$ in the latter process, and one turn in the middle where some agent (which may be $a_1$, or some other agent) takes $g_z$ in the former process and this same agent takes $g_1$ in the latter process.
Thus, we have either (a) $A_1=A_1^\top$ or (b) $A_1\setminus A_1^\top=\{g_1\}$ and $A_1^\top\setminus A_1=\{g_z\}$.
In both cases, $v_1(A_1)=v_1(A_1^\top)$, as $v_{1,1}=v_{1,z}$.

Combining this inequality with the inequality in (i) gives $v_1(A_1) > v_1(A_1')$, which implies that reporting $\bv_1'$ is harmful for $\bv_2^+,\ldots,\bv_n^+$. This is sufficient for concluding that $\bv_1'$ is not safe.

For case (ii) we show that, if $\bv_1'$ is a profitable manipulation, then it cannot be safe.
Assume that $\bv_1'$ is profitable, and let 
$\bv_2,\ldots,\bv_n$ be some valuations of agents $2,\ldots,n$ on $M$ for which $v_1(A_1') > v_1(A_1)$
(as before, we assume these valuations induce strict rankings).
By the inequality in (ii), this implies 
that also $v_1(A_1^\top) > v_1(A_1)$.
Consider again the two round-robin processes for the two profiles $(\bv_1,\bv_2,\ldots,\bv_n)$ and $(\bv_1^\top,\bv_2,\ldots,\bv_n)$.
The only difference between them is in the ranking of agent 1 to the items $g_1,\ldots,g_z$, 
which is $1,2,\ldots,z-1,z$ in the former case 
and $z,1,2,\ldots,z-1$ in the latter case.

Construct new valuations 
$\bv_2^\pi,\ldots,\bv_n^\pi$ that are permuted version of $\bv_2,\ldots,\bv_n$ under the permutation $\pi$ that maps $(1,2,\ldots,z-1,z,z+1,\ldots,m)$ to $(z,1,2,\ldots,z-1,z+1,\ldots,m)$.
Denote by $B_1$ the outcome of round-robin for the profile $(\bv_1,\bv_2^\pi,\ldots,\bv_n^\pi)$ and
by $B_1^\top$ the outcome of round-robin for the profile
$(\bv_1^\top,\bv_2^\pi,\ldots,\bv_n^\pi)$.
By symmetry, $B_1$ is the same as $A_1^\top$ up to permuting  $1,2,\ldots,z-1,z$ to $z,1,2,\ldots,z-1$,
and similarly 
$B_1^\top$ is the same as $A_1$ up to permuting  $1,2,\ldots,z-1,z$ to $z,1,2,\ldots,z-1$.
But since $v_1(g_1)=\cdots = v_1(g_z)$, this implies
$v_1(B_1) = v_1(A_1^\top)$ and 
$v_1(B_1^\top) = v_1(A_1)$.
Combining these equations with $v_1(A_1^\top) > v_1(A_1)$
yields
$v_1(B_1) > v_1(B_1^\top)$;
combining with the inequality in (ii) 
yields 
$v_1(B_1) > v_1(B_1')$.
Hence, reporting $\bv_1'$ is harmful for 
$\bv_2^\pi,\ldots,\bv_n^\pi$. 
This is sufficient for concluding that $\bv_1'$ is not safe.
\end{proof}

\end{toappendix}

\begin{lemmarep}\label{thm:RR-strict-degree-n-1}
     When all agents have strict preferences, the RAT-degree of Round-Robin is $(n-1)$.
\end{lemmarep}

\begin{proofsketch}
    The challenging part is to show that it is not $(n-2)$-known-agents safely-manipulable.
    The idea is to track the set of available items after the manipulator's turn — once under truthful reporting and once under misreporting — up to the turn of the unknown agent.
    We then show that there always exists a valuation for the unknown agent that causes the two bundles to become equal by the time the manipulator plays again.
    Since preferences are strict, the manipulator strictly loses in her first turn, and from that point on faces identical sets of available items.
    By an inductive argument, she can only lose from this point on.
\end{proofsketch}

\begin{toappendix}
    
To prove that the RAT-degree is exactly $(n-1)$,  \Cref{lemma:RR-strict-upper} shows that the degree is at most $(n-1)$, while \Cref{lemma:RR-strict-lower} shows that it is at least $(n-1)$.

\begin{lemma}\label{lemma:RR-strict-upper}
    When agents report strict preferences, round-robin is $(n-1)$-known-agents safely-manipulable.
\end{lemma}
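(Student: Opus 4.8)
The plan is to exploit the fact that when $k=n-1$ the set of unknown agents $\notK = N\setminus(\{a_i\}\cup K)$ is empty. Consequently the universal quantifier in the safety condition \eqref{eq:k-safe-manipulation} and the existential quantifier in the profitability condition \eqref{eq:k-manipulation} both range over a single (empty) profile, so a manipulation that is \emph{strictly} profitable against a fully-specified profile of the other agents is \emph{automatically} safe. Hence, to prove this upper bound on the RAT-degree, it suffices to exhibit one strict-preference profile together with an agent who strictly gains by misreporting; the mechanism is then $(n-1)$-known-agents safely-manipulable by inspection, with $K=N\setminus\{a_i\}$ and $\prefsOf{K}$ the chosen reports of the others.

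For the construction I would take the manipulator to be the first agent $a_1$ in the round-robin order $\pi$ (relabelling if necessary), and give her the strict true valuation $v_{1,1}=100$, $v_{1,2}=99$, and $v_{1,\ell}<1/m$ for every $\ell\geq 3$, with all values distinct; the exact small values are irrelevant beyond the fact that together they sum to less than $1$. For the remaining agents I would impose only two properties: every agent $a_2,\dots,a_n$ ranks $g_1$ \emph{last}, and $a_2$ ranks $g_2$ \emph{first} (all other rankings arbitrary but strict). The idea is that $g_1$ is coveted only by $a_1$, while $g_2$ is contested by $a_2$, so $a_1$ can safely grab $g_2$ early and still recover $g_1$ later.

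Next I would compute the two outcomes. Under truthful reporting $a_1$ takes $g_1$ on turn~$1$ and $a_2$ immediately takes $g_2$; since $m>n$, at least two items remain at each turn of the first round, so none of $a_2,\dots,a_n$ is ever forced onto their last-ranked item $g_1$, and $g_2$ is already gone before $a_1$ plays again. Thus $a_1$'s truthful bundle contains $g_1$ but not $g_2$, so its value is below $100+1=101$. Under the manipulation in which $a_1$ swaps her top two reported items (reporting $g_2\succ g_1\succ g_3\succ\cdots$), $a_1$ grabs $g_2$ first; this blocks $a_2$, who now takes a non-$g_1$ item, and again nobody takes $g_1$ during the first round, so $g_1$ is still available on $a_1$'s second turn (turn $n+1$), where it is her top remaining reported item and she takes it. Her manipulated bundle therefore contains both $g_1$ and $g_2$, giving value at least $199$. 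Comparing the two, $v_1(\text{manipulation})\geq 199 > 101 > v_1(\text{truthful})$, so the swap is strictly profitable against the fully known reports of $a_2,\dots,a_n$, and hence safe.

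The step I expect to require the most care is making the bundle comparison watertight: in the later rounds the two runs may hand $a_1$ different low-value items, so rather than tracking them I would rely on the value gap — forcing $g_1$ and $g_2$ to dominate the combined worth of all other goods — which renders the comparison immune to whatever happens after the first two picks. The other point to verify carefully is that $g_1$ genuinely survives until $a_1$'s second turn, which is exactly where the hypotheses $m>n$ and ``$g_1$ ranked last by every other agent'' enter the argument.
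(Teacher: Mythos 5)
Your proposal is correct and follows essentially the same route as the paper's proof: the first picker swaps her top two reported items, the other agents all rank $g_1$ last so that it survives the first round, and safety is vacuous because there are no unknown agents. The only cosmetic difference is that the paper tracks the exact symmetric difference of the two bundles (showing the available sets coincide after her second turn, so she trades $g_{n+1}$ for $g_2$), whereas you sidestep the later rounds with a value-gap argument; both are valid.
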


\begin{proof}
    Let agent $a_1$ be the agent who picks first according to the order $\pi$, and let $v_{1,1} > v_{1,2} > \cdots > v_{1,m}$ be her valuation for the items.
    Suppose agent $a_1$ knows that every other agent $j$ (i.e., every $j \neq i$) reports a valuation in which $v_{j,2} > v_{j,3} > \cdots > v_{j,m} > v_{j,1}$.
    We show that agent $a_1$ has a safe and profitable manipulation by reporting $v_{1,2}' > v_{1,1}' > v_{1,3}'> \cdots > v_{1,m}'$.

    When agent $a_1$ reports truthfully, in the first round she takes item $g_1$. The other agents then take the items $\{g_2,\ldots, g_m\}$, so in the second round, agent $a_1$ takes $g_{n+1}$.

    However, when agent $a_1$ misreports, in the first round she takes item $g_2$. As a result, the other agents take the item $\{g_1,g_3,\ldots, g_{m+1}\}$. This means that in the second round agent $a_1$ takes $g_{1}$.

    Note that after agent $a_1$’s second turn, the set of available items is exactly the same in both cases. Therefore, the rest of the allocation proceeds identically, and the only difference between agent $a_1$’s resulting bundles is in the first two items.

    Since agent $a_1$ strictly prefers item $g_2$ over $g_{n+1}$ then the manipulation is profitable. 
    The manipulation is vacuously safe as there are no unknown agents. 

\end{proof}

\begin{lemma}\label{lemma:RR-strict-lower}
    When agents report strict preferences, round-robin is \emph{not} $(n-2)$-known-agents safely-manipulable.
\end{lemma}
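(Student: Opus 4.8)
I want to prove that when all agents report strict preferences, Round-Robin is \emph{not} $(n-2)$-known-agents safely-manipulable. Fix a potential manipulator $a_i$, a manipulation $\bv_1'$, a set $K$ of $n-2$ known agents with reported preferences $\prefsOf{K}$, and let $a_j$ be the single unknown agent. I must show that the manipulation is either not profitable given $K$ and $\prefsOf{K}$, or not safe given $K$ and $\prefsOf{K}$ --- i.e.\ that there is a report for $a_j$ making $a_i$ strictly worse off under $\bv_1'$ than under truthful reporting. As in \Cref{RR-weak-prefs}, I would first reduce to the case that $a_i$ is the first picker: the turns before $a_i$'s first turn are unaffected by $a_i$'s report, so from $a_i$'s perspective only the items remaining at her first turn matter, and I can recurse on a smaller instance. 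So assume $a_i=a_1$ picks first, and renumber items by descending true value, $v_{1,1}>v_{1,2}>\cdots$ (strict, so ties are gone). Let $g_z$ be the item $\bv_1'$ ranks first. If $g_z=g_1$ the manipulation takes the same first item and we recurse on $M\setminus\{g_1\}$ exactly as in Case~0 of \Cref{RR-weak-prefs}. So assume $g_z\neq g_1$, hence $v_{1,z}<v_{1,1}$ strictly; the manipulator strictly loses value on her first pick.

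The core of the argument is a \emph{coupling} of the two round-robin runs --- truthful versus manipulated --- played out turn by turn, tracking the two sets of still-available items $R^T_t$ (truthful run) and $R^M_t$ (manipulated run) after each turn $t$. Initially, after $a_1$'s first turn, $R^T = M\setminus\{g_1\}$ and $R^M=M\setminus\{g_z\}$; these two sets differ by the single swap $g_1\leftrightarrow g_z$, and crucially $g_1,g_z$ both lie above everything $a_1$ has yet to touch. The known agents in $K$ proceed deterministically (their reports are fixed), and I want to show that the available-set discrepancy stays confined to the pair $\{g_1,g_z\}$ until $a_j$'s turn, and that there is a choice of $\bv_j$ that \emph{re-synchronizes} the two runs: I design $a_j$'s report so that $a_j$ takes $g_z$ in the truthful run and $g_1$ in the manipulated run (or whichever of the two is available in each run), after which the two available-item sets coincide. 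From that point on, both runs face identical item sets and $a_1$'s remaining picks are identical in the two runs. Hence $a_1$'s bundles differ only in the first item: she gets $g_1$ truthfully and $g_z$ under manipulation (plus one common tail), so $v_1(A_1)=v_1(A_1'\setminus\{g_z\}\cup\{g_1\})>v_1(A_1')$ because $v_{1,1}>v_{1,z}$. This exhibits the harmful $\bv_j$ and kills safety.

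The main obstacle --- and the step I expect to demand the most care --- is constructing $\bv_j$ so that the two runs genuinely re-synchronize at $a_j$'s turn. The difficulty is that before $a_j$ plays, the known agents may themselves have picked differently in the two runs precisely because of the $g_1\leftrightarrow g_z$ discrepancy: in one run $g_1$ was already taken and in the other $g_z$ was, so a known agent whose top available item was $g_1$ or $g_z$ could diverge, potentially propagating the discrepancy beyond the single pair. I would handle this by an invariant argument: I claim that at every turn up to $a_j$'s, the symmetric difference of the two available sets is contained in $\{g_1,g_z\}$, and that whenever a known agent's choice could differ across the two runs, it differs only by swapping $g_1$ for $g_z$, leaving the symmetric difference still inside $\{g_1,g_z\}$ (this uses that $g_1,g_z$ are $a_1$'s two top items but may be ranked arbitrarily by others). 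Maintaining this invariant, I let $\bv_j$ rank so that $a_j$ absorbs exactly the outstanding discrepancy --- taking $g_z$ when $g_z$ is the available one and $g_1$ when $g_1$ is --- thereby equalizing the two available sets. Once equal, a clean induction (mirroring the induction on $m$ in \Cref{RR-weak-prefs}) finishes: on identical item sets with strict preferences the manipulator can never recover the value strictly lost on the first pick. The combination of the strict first-pick loss with re-synchronized tails is exactly what forces $v_1(A_1)>v_1(A_1')$.
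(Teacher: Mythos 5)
Your overall strategy---reduce to the first picker, couple the truthful and manipulated runs, track the two available-item sets until the unknown agent's turn, and then choose the unknown agent's valuation so as to re-synchronize the runs and lock in the strict loss from the first pick---is the same strategy the paper uses to prove \Cref{lemma:RR-strict-lower}. The problem is that the invariant you rely on is false. You claim that the symmetric difference of the two available sets stays contained in $\{g_1,g_z\}$, i.e., that a known agent whose pick differs across the two runs "differs only by swapping $g_1$ for $g_z$." Consider a known agent who ranks $g_1$ first and some item $c\notin\{g_1,g_z\}$ second. In the manipulated run $g_1$ is still available, so she takes $g_1$; in the truthful run $g_1$ is gone, so she takes $c$. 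The available sets become $M\setminus\{g_1,c\}$ and $M\setminus\{g_z,g_1\}$, with symmetric difference $\{g_z,c\}\not\subseteq\{g_1,g_z\}$. At that point your prescription for $a_j$---take $g_z$ when $g_z$ is available and $g_1$ when $g_1$ is---cannot re-synchronize the runs: $g_1$ is available in neither run, yet the runs still differ, and the item $a_j$ must absorb in the manipulated run is $c$, which your construction never mentions.

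The correct invariant, which is what the paper establishes (its "fragile asymmetry property"), is weaker: at every turn before the unknown agent's, either the two sets have become identical or they differ by exactly one item in each direction, but the identity of those two items can drift arbitrarily as the known agents play (this is exactly cases 3 and 4 of the paper's case analysis). With that invariant the repair is immediate: $a_j$'s valuation should place the two \emph{current} unique items, whatever they are at her turn, above every common item, so that she absorbs whatever discrepancy remains. One further caveat, on which the paper itself is terse: after re-synchronization the two runs face identical available sets, but $a_1$ still reports $\bv_1$ in one and $\bv_1'$ in the other, so her later picks can diverge again and greedy picking is not in general a best response against fixed opponents in round-robin. Concluding that she "can never recover the value strictly lost on the first pick" therefore needs an additional inductive comparison of the two continuations on the common item set, in the spirit of \Cref{claim-RR-degree-not-0}; it does not follow merely from the sets coinciding at $a_j$'s turn.
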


\begin{proof}    
    We prove the lemma by induction on the number of items, $m$.

    For the base case, notice that when $m \leq n$, the round-robin mechanism is strategy-proof, which implies that it is not $(n-2)$-known-agents safely-manipulable.
    Next, assume the lemma holds for any instance with fewer than $m$ items, and we shall prove it for $m$ items.

    We consider the first agent in the picking order $\pi$, denoted $a_1$, as after her turn, the remaining agents face a problem with $m-1$ items, to which the induction hypothesis applies.
    
    Let $a_i \neq a_1$ be the only agent whose preferences are unknown to $a_1$, and let $K := N \setminus \{a_1, a_i\}$ be the set of known agents.
    We show that for any manipulation by $a_1$ given $K$, there exists a valuation for $a_i$ such that the manipulation is either not profitable or not safe.

    Consider $a_1$'s first turn. If $a_1$ picks the same item under both her true and manipulated valuations, we are done: her subsequent turns involve fewer than $m$ items, and by the induction hypothesis, she cannot benefit from manipulating at that point given $(n-2)$-known-agents.

    Thus, we assume that in her first turn, $a_1$ picks item $g$ under her true valuation, and item $g' \neq g$ under the manipulated one. 
    Since preferences are strict, we know that $v_{1,g} > v_{1,g'}$, so her tentative utility after this turn is strictly lower under the manipulation.

    Let $S$ and $S'$ be the sets of available items after the truthful and manipulated choices, respectively. These sets differ by exactly two items, one item in each direction: $S \setminus S' = \{g\}$ and $S' \setminus S = \{g'\}$.
    We call this the \emph{fragile asymmetry property}.

    Intuitively, if this property holds when the turn of the unknown agent arrives, she can "break the asymmetry" between the two scenarios: there exists a valuation for $a_i$ such that after his turn the two resulting sets of available items become identical.
    As a result, all subsequent turns until the next turn of $a_1$ unfold in exactly the same way in both scenarios. This means that $a_i$ only loses from the manipulation.
    % - as before this turn she has a lower utility from manipulating and from there on she face the same set of items so a similar argument can be apply. 
    We describe the specific valuation of the known agent later. 
    
    We now want to track the evolution of the sets $S$ and $S'$ up from $a_1$'s turn until the turn of the unknown agent $a_j$, to show that either the fragile asymmetry property is preserved through the turns of the known agents or that the sets become identical even sooner.
    When an agent paces two sets that satisfy the fragile asymmetry property, there are four possibilities:
    \begin{enumerate}
        \item The agent picks the same item in both cases (her most favorite available item is in $S \cap S'$). 
        
        In this case, both $S$ and $S'$ are reduced by the same item, and the property is preserved.
    
        \item The agent prefers the unique item in $S \setminus S'$ when choosing from $S$, and the unique item in $S' \setminus S$ when choosing from $S'$. 
        
        In this case, both sets become identical after the pick.
    
        \item The agent prefers the unique item in $S \setminus S'$ when choosing from $S$, but a common item when choosing from $S'$. 
        
        In this case, $S$ is reduced by its unique item, and $S'$ is reduced by one of the common items — so, the updated sets still satisfy the property (but with different unique items). \eden{I'm not sure this is clear}
    
        \item The agent prefers a common item when choosing from $S$, but the unique item in $S' \setminus S$ when choosing from $S'$. 

        As before, the updated sets still satisfy the property - $S$ is reduced by one of the common items, and $S'$ is reduced by its unique item.
    \end{enumerate}

    When the turn of the known-agent $a_i$ arrives, the two sets of available items are either identical or satisfy the  fragile asymmetry property.
    If they are identical, then any valuation of agent $a_i$ will work. \eden{is it clear enough?}
    Otherwise, let $s_1$ and $s_2$ be the unique items in $S_1$ and $S_2$, respectively. 
    % Notice that they might be different than $g$ and $g'$.
    There are different valuation for $a_i$ that will "break the asymmetry"---consider any valuation in which the values of $s_1$ and $s_2$ are higher than the values of all the common items in both sets. 
    For instance, $v_{i,s_1} = v_{i,s_2} = 1$ and $v_{i,j} =0$ for all other items $j \in G \setminus \{s_1, s_2\}$.
\end{proof}

\end{toappendix}

% ===============================
\eden{================ NEW ===========}
\subsection{Max-Nash-Welfare item allocation}\label{sec:nash-item-alloc}

The Max-Nash-Welfare rule is defined by finding an allocation $(A_1,\ldots,A_n)$ that maximizes the Nash welfare $\prod_{i=1}^nv_i(A_i)$.
\er{\citet{caragiannis2019unreasonable} do not specify how to break ties when there is more than one allocation with the same Nash welfare. Here, we assume that ties are broken by a pre-specified agent priority. Specifically,}
when the optimal Nash welfare is positive and there are multiple allocations that maximize the Nash welfare, the mechanism outputs the allocation in the following lexicographical way: among all these allocations, consider those that maximize agent $a_1$'s valuation, and subject to this, maximize agent $a_2$'s valuation, and so on.
When the optimal Nash welfare is $0$ (i.e., at least one agent's valuation is $0$ for any allocation), there are many natural variants.
Before describing them, let 
$$\mathcal{S}=\{S\subseteq\{a_1,\ldots,a_n\}\mid\text{there exists an allocation where agents in }S\text{ have positive valuations}\},$$
and
$$\mathcal{S}^{\max}=\{S\in\mathcal{S}\mid |S|\geq |S'|\mbox{ for any }S'\in\mathcal{S}\}.$$
\begin{itemize}
\item[MNW1:] Choose $S\in \mathcal{S}^{\max}$ with lexicographically optimal indices of agents (i.e., include $a_1\in S$ if there exists $S\in\mathcal{S}^{\max}$ that contains $a_1$, and, subject to this, include $a_2\in S$ if possible, and so on). Then find an allocation that maximizes $\prod_{a_i\in S}v_i(A_i)$. 
\item[MNW2:] Find the value of $\max_{S\in\mathcal{S}^{\max}}\max_{(A_1,\ldots,A_n)}\prod_{a_i\in S}v_i(A_i)$. Find a lexicographically optimal $S$ where there exists an allocation such that $\prod_{a_i\in S}v_i(A_i)$ equals this value.
Output the corresponding allocation.
\item[MNW3:] Normalize the valuations of all agents such that $\sum_{j=1}^mv_{i,j}=1$ for each agent $a_i$. Then perform MNW2.
\end{itemize}

In words, all versions output an allocation that maximizes the number of agents with positive valuations.
MNW1 performs the tie-breaking for the agent set first, and then proceeds to maximize the product
\er{(similarly to the definition in \citet{caragiannis2019unreasonable}).}
MNW2 maximizes the product first, and then performs the tie-breaking among those agent sets that maximize the product.
MNW3 is the variant of MNW2 by first normalizing the valuations
\er{(note that normalization has no effect on MNW1 because the selection of $S$ does not depend on the agents' values).}

\begin{theorem}
For any $m\geq2$ and any $n$, the RAT-degree of MNW2 is $0$ regardless of tie-breaking.

The RAT-degrees of MNW1 and MNW3 \er{with lexicographic agent-priority tie-breaking} is at most $1$.
\end{theorem}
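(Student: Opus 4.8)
My plan is to handle the three rules separately: prove RAT-degree $0$ for MNW2 by exhibiting a \emph{no-knowledge} safe-and-profitable manipulation, and prove the upper bound of $1$ for MNW1 and MNW3 by exhibiting a \emph{one-known-agent} manipulation (which, by \Cref{prop:monotonicity}, already gives degree $\le 1$).

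For MNW2 I would let the manipulator $a_1$ value a single good, $v_{1,1}=1$ and $v_{1,\ell}=0$ for $\ell\ge 2$, and take the manipulation that reports an inflated value $M>1$ for $g_1$ (and $0$ elsewhere). The key structural observation is that replacing a positive value by another positive value changes \emph{which} goods each agent values positively not at all; hence it alters neither $\mathcal{S}$ nor $\mathcal{S}^{\max}$, and in particular it does not change whether the instance lies in the positive-Nash-welfare regime or in the zero regime. In the positive regime $a_1$ must receive $g_1$ under any report (it is her only positively-valued good and the product is positive only when every agent is positive), so truth and manipulation coincide. In the zero regime the selected set maximizes $\prod_{a_i\in S} v_i(A_i)$, and since $a_1\in S$ forces $g_1\to a_1$, inflating $g_1$ multiplies the objective of \emph{every} candidate set containing $a_1$ by the same factor while leaving all sets not containing $a_1$ unchanged. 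Consequently a set containing $a_1$ that was selected truthfully stays selected (so $a_1$ keeps $g_1$), and if truthfully $a_1$ was excluded then after inflation she is either still excluded (utility $0$, unchanged) or included (utility $1$). This monotonicity gives safety and, crucially, holds regardless of the tie-breaking rule. For profitability I would use the witness profile in which all other agents also demand only $g_1$ at a value above $1$: truthfully $\mathcal{S}^{\max}$ selects a set excluding $a_1$, whereas a sufficiently large $M$ selects a set containing $a_1$, strictly raising her utility from $0$ to $1$. This yields degree $0$ for every $m\ge 2$ and $n$ and for every tie-breaking rule.

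For MNW1 and MNW3 the degree-$0$ manipulation provably fails, which is exactly why these rules are stronger: in MNW1 the set $S$ is chosen by cardinality and agent index rather than by the product, so inflation cannot move $a_1$ into $S$; in MNW3 normalization caps every report, so inflation is impossible. Instead I would exhibit a one-known-agent manipulation that reduces the outcome to a manipulable, essentially fully-determined sub-competition. Concretely, I would take the manipulator $a_1$ to truly value exactly two goods, $g_1\succ_1 g_2$, choose a single known agent $a_2$ whose report pins down the relevant part of the allocation, and let the manipulation over-emphasize $g_1$. The intended effect is that, given $a_2$'s report, $a_1$ always receives a single good that is either $g_1$ or her truthful fallback $g_2$, so that emphasizing $g_1$ can only \emph{upgrade} her (safety), while for some reports of the remaining $n-2$ agents it strictly upgrades her from $g_2$ to $g_1$ (profitability). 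For MNW3 I would first fix the normalized reports and then run the same argument on the normalized instance.

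The main obstacle is the safety verification for MNW1 and MNW3. I must guarantee, for \emph{every} report of the $n-2$ unknown agents, both that $a_1$ remains in the selected set with a single-good bundle and that the emphasis never drives the product-maximizing allocation to hand her a good strictly worse than her truthful one. This is a matching/product-maximization case analysis in which the known agent's report must be engineered to rule out every harmful configuration — the delicate case being an unknown agent who covets $a_1$'s fallback good and whose presence could, in principle, cause emphasis to cost $a_1$ a good she would otherwise have kept — and for MNW3 the normalization constraint couples all coordinates, so I must additionally show that shifting weight onto $g_1$ does not perturb the outcome on the remaining goods. I expect this case analysis, rather than the mere exhibition of the manipulation, to be the bulk of the work.
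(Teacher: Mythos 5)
Your MNW2 argument is correct and is essentially the paper's (the paper exhibits the same single-good manipulator with an inflated report and asserts safety and profitability; your monotonicity argument for why inflation can only move $a_1$ \emph{into} the selected set, uniformly over tie-breaking rules, is a sound filling-in of that assertion). One small fix: quantify $M$ before the witness profile --- fix the manipulation, say $v'_{1,1}=2$, and then choose the unknown agents' reports in $(1,2)$; ``sufficiently large $M$'' after the fact inverts the order of quantifiers in the definition of a profitable manipulation.

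For MNW1 and MNW3 there is a genuine gap: you have the right shape of construction (a manipulator valuing two goods, one known agent, a report that over-emphasizes $g_1$), but you explicitly defer the safety verification, and the route you sketch --- a per-profile case analysis of which good $a_1$ ends up holding, with special attention to unknown agents who ``covet the fallback good'' --- is not the argument that closes it. The paper's safety proof avoids tracking the allocation profile-by-profile entirely, via two uniform bounds. Take $v_1=(2/3,\,1/3,\,0,\ldots)$, the known agent $a_2$ reporting $(1-\varepsilon,\,\varepsilon,\,0,\ldots)$, and the manipulation $v'_1=(1,0,\ldots,0)$. First, under \emph{any} reports of the unknowns, the truthful outcome gives $a_1$ at most one of $\{g_1,g_2\}$: if she held both, $a_2$'s utility would be $0$, and moving one of them to $a_2$ would strictly enlarge the set of positively-valued agents, contradicting $S\in\mathcal{S}^{\max}$. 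Hence her truthful utility is at most $2/3$. Second, under the manipulation she is \emph{guaranteed} $g_1$: otherwise her value is $0$, and transferring $g_1$ to her either enlarges $S$ or improves it lexicographically (MNW1), or --- because normalization forces $v_{i,1}\le 1=v'_{1,1}$ --- cannot decrease the product and improves $S$ lexicographically (MNW3). Hence her manipulated utility equals $2/3$ on every profile, and safety follows by comparing the two uniform bounds. Note that both steps lean on features you have not pinned down: $a_2$ must value \emph{only} $g_1$ and $g_2$ for the first step, and the second step for MNW3 is exactly where the normalization cap is used. Without these two observations the ``bulk of the work'' you anticipate does not reduce to a finite case analysis, so as written the upper bound of $1$ for MNW1 and MNW3 is not yet established.
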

\begin{proof}
We first show that the RAT-degree for MNW2 is $0$.
Suppose agent $a_1$'s valuation is such that $v_{1,1}=1$ and $v_{1,j}=0$ for $j\neq 1$.
It is easy to see that reporting $\bv_1'$ with $v_{1,1}'=2$ and $v_{1,j}'=0$ for $j\neq1$ is a safe and profitable manipulation.

Now, consider the other two variants.
Let agent $a_1$'s valuation be such that $v_{1,1}=2/3$, $v_{1,2}=1/3$, and $v_{1,j}=0$ for $j\notin\{1,2\}$.
Let agent $a_2$'s valuation be such that $v_{2,1}=1-\varepsilon$, $v_{2,2}=\varepsilon$, and $v_{2,j}=0$ for $j\notin\{1,2\}$, where $\varepsilon>0$ is a very small real number.
Suppose agent $a_1$ sees agent $a_2$'s valuation vector.
We show that it is safe and profitable for agent $a_1$ to report $\bv_1'$ with $v_{1,1}'=1$ and $v_{1,j}=0$ for $j\neq1$.

We first show that the manipulation is profitable under both MNW1 and MNW3.
Suppose the remaining $n-2$ agents value all items at $0$.
It is straightforward to check that, both with and without manipulation, $S = \{a_1,a_2\}$; agent $a_1$ receives $\{g_2\}$ without manipulation and $\{g_1\}$ with manipulation.
% EREL: There are no ties, so this does not depend on tie-breaking.

To show that the manipulation is safe, it suffices to show that, under both MNW1 and MNW3,
\begin{enumerate}
    \item agent $a_1$ receives at most one item from $\{g_1,g_2\}$ without manipulation, and
    \item agent $a_1$ surely receives $g_1$ with manipulation.
\end{enumerate}
To see (1), if agent $a_1$ receives both $g_1$ and $g_2$, then the valuation of agent $a_2$ becomes $0$.
We can then adjust the allocation by giving one of $g_1$ and $g_2$ to agent $a_2$, which strictly enlarges the set of agents with positive valuations, contradicting $S\in\mathcal{S}^{\max}$.
% EREL: this does not depend on tie-breaking.

To see (2), consider the valuation profile where agent $a_1$ manipulates. 

In MNW1,
if agent $a_1$ does not receive $g_1$, her valuation becomes $0$.
Adjusting the allocation by transferring $g_1$ from some other agent $a_i$'s bundle (with $i>1$) to $a_1$'s bundle will either make $S$ larger or improve $S$ lexicographically.

In MNW3, suppose agent $a_1$ does not receive $g_1$.
If transferring $g_1$ from some other agent $a_i$'s bundle (with $i>1$) to $a_1$'s bundle does not increase $|S|$, this reallocation will not reduce the product (since $v_{1,1}'=1$ and $v_{i,1}\leq 1$ under normalized valuations), and it will improve $S$ lexicographically if the product is unchanged.
\end{proof}

\er{
Given the superior fairness properties of the MNW rule, the following question is interesting.
\begin{open}\label{open:allocation-mnw}
	Are there any variants of MNW with a RAT degree larger than 1?
\end{open}
}

% ==============================
\subsection{Volatile Priority Goods Allocation: An EF1 Mechanism with RAT-degree $n-1$}
\label{sect:indivisible-EF1-n-1}
% \biaoshuai{new}
\begin{toappendix}
    \subsection{Volatile Priority Goods Allocation: An EF1 Mechanism with RAT-degree $n-1$}
\end{toappendix}

In this section, we propose a new EF1 mechanism that has RAT-degree $n-1$.
The mechanism starts by applying a given priority function~$\Gamma$ (to be defined shortly), which takes as input a valuation profile~$(v_1,\ldots,v_n)$ and outputs two agent indices, $i^+$ and $i^-$. Agent~$a_{i^+}$ is referred to as the \emph{mechanism-favored agent}, and agent~$a_{i^-}$ as the \emph{mechanism-unfavored agent} (the reason of both names will be clear soon).
Let~$\efallocations_{i^-}$ denote the set of all EF1 allocations~$(A_1,\ldots,A_n)$ in which agent~$a_{i^-}$ is not envied by any other agent; that is, for every agent~$a_i$, we have~$v_i(A_i) \geq v_i(A_{i^-})$. 
Note that~$\efallocations_{i^-}$ is non-empty: the allocation returned by the round-robin mechanism when~$a_{i^-}$ is placed last in the picking order~$\pi$ belongs to this set.
The mechanism outputs the allocation in~$\efallocations_{i^-}$ that maximizes the valuation of the mechanism-favored agent, i.e., it selects an allocation that maximizes~$v_{i^+}(A_{i^+})$. When there
are multiple maximizers, the rule breaks the tie in an arbitrary consistent way.

We show that the priority function~$\Gamma$ can be constructed in a way that we refer to as \emph{volatile}: lacking knowledge of even a single agent's valuation makes manipulations risky. That is, for any agent $a_i$, any manipulation for $a_i$, any subset of $(n-2)$-known agents, and any valuations for them; there exists a valuation for the only remaining known-agent agent $a_j$ such that: (1) when~$a_i$ reports truthfully, she is selected as the mechanism-favored agent; (2) when~$a_i$ misreports, she becomes the mechanism-unfavored agent; and (3) this change in priority results in a strictly lower utility for~$a_i$.
A formal definition is provided in the appendix. 

\begin{toappendix}
    \begin{definition}
    A selection rule $\Gamma$ is called \emph{volatile}
    if 
    for any six indices of agents $i,j,i^+,i^-,i^{+'},i^{-'}$ with $i\neq j$, $i^+\neq i^-$, and $i^{+'}\neq i^{-'}$, any good $g_{\ell^\ast}\in G$, any set of $n-2$ valuation profiles $\{v_k\}_{k\notin \{i,j\}}$, and any two reported valuation profiles $v_i,v_i'$ of agent $a_i$ with $v_i\neq v_i'$ (i.e., $v_{i,\ell}\neq v_{i,\ell}'$ for at least one good $g_{\ell}$), there exists a valuation function $v_j$ of agent $a_j$ such that
    \begin{itemize}
    \item $v_{j,\ell^\ast}>0$, and $v_{j,\ell}=0$ for any $\ell\neq\ell^\ast$, and
    \item $\Gamma$ outputs $i^+$ and $i^-$ for the valuation profile $\{v_k\}_{k\notin\{i,j\}}\cup\{v_i\}\cup\{v_j\}$;
    \item $\Gamma$ outputs $i^{+'}$ and $i^{-'}$ for the valuation profile $\{v_k\}_{k\notin\{i,j\}}\cup\{v_i'\}\cup\{v_j\}$.
\end{itemize}
\end{definition}

In other words, a manipulation of agent $i$ from $v_i$ to $v_i'$ can affect the output of $\Gamma$ in any possible way (from any pair $i^+,i^-$ to any other pair $i^{+'},i^{-'}$), depending on the report of agent $j$.

\end{toappendix}

Any volatile  priority rule $\Gamma$ can be used by our mechanism; to conclude its description we show that such a rule exists.

% \biaoshuai{Probably put the proof of the following proposition in the appendix?}
\begin{proposition}
\label{prop:volatile}
    There exists a volatile priority rule $\Gamma$ for goods allocation.
\end{proposition}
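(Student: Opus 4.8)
The plan is to construct $\Gamma$ so that one agent can, by choosing the magnitude of her single positive value, completely dictate the output of $\Gamma$ — and dictate it \emph{differently} depending on how a second agent reports. Concretely, I would have $\Gamma$ first locate the agent holding the unique globally-largest entry of the profile (the entry $v_{k,\ell}$ strictly exceeding all others) and treat that agent as an ``instruction carrier''. In the volatility setup this carrier will be agent $a_j$: since the $n-2$ fixed profiles together with the two reports $v_i,v_i'$ involve only finitely many values, bounded by some $M$, I can let $a_j$ report $t := v_{j,\ell^\ast} > M$ at the prescribed good $g_{\ell^\ast}$ and $0$ elsewhere, which makes $(j,\ell^\ast)$ the unique global maximum whether $a_i$ reports $v_i$ or $v_i'$. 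Thus $\Gamma$ unambiguously identifies $a_j$ in both scenarios, and $a_j$'s report respects the single-good constraint of the definition.

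Next I would have $\Gamma$ read an entire ``decision rule'' out of the real number $t$. Let $\mathcal{I}$ be the finite set of tuples $(i,\ell_0,p_0,p_1)$ with $i$ an agent index, $\ell_0$ a good index, and $p_0,p_1$ ordered pairs of distinct indices. Fix a surjection $\psi:[0,1)\to \mathcal{I}\times\mathbb{R}$, which exists by a cardinality argument (and can be made explicit by interleaving with a bijection $(0,1)\to\mathbb{R}$). Define $\Gamma$ to compute $\psi(\{t\}) = \big((i,\ell_0,p_0,p_1),\theta\big)$ from the fractional part of $t$, then output $p_0$ if the inspected reported value $v_{i,\ell_0}<\theta$ and $p_1$ otherwise; on any profile with no unique global maximum (or invalid decoded data) $\Gamma$ outputs the default pair $(1,2)$. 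Hence $\Gamma$ is total and always returns two distinct indices. Routing the payload through $\{t\}$ is what keeps every fiber of the decoding unbounded above, so I stay free to choose the integer part of $t$ as large as needed to exceed $M$.

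To verify the three required conditions for given $i,j,\ell^\ast,\{v_k\}_{k\notin\{i,j\}},v_i,v_i'$ and targets $(i^+,i^-),(i^{+'},i^{-'})$, I would pick a good $g_{\ell_0}$ on which $v_i$ and $v_i'$ differ (it exists since $v_i\neq v_i'$). Assuming w.l.o.g.\ $v_{i,\ell_0}<v_{i,\ell_0}'$, choose a threshold $\theta\in(v_{i,\ell_0},v_{i,\ell_0}']$ and the tuple $(i,\ell_0,p_0,p_1)$ with $p_0=(i^+,i^-)$, $p_1=(i^{+'},i^{-'})$ (in the reverse case swap $p_0,p_1$ and take $\theta\in(v_{i,\ell_0}',v_{i,\ell_0}]$). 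By surjectivity there is $\phi_0\in[0,1)$ with $\psi(\phi_0)=\big((i,\ell_0,p_0,p_1),\theta\big)$; set $t$ with fractional part $\phi_0$ and integer part exceeding $M$, and let $a_j$ report $v_{j,\ell^\ast}=t$, $v_{j,\ell}=0$ otherwise. In both scenarios $(j,\ell^\ast)$ is the unique maximum, so $\Gamma$ decodes the same rule; when $a_i$ reports $v_i$ we have $v_{i,\ell_0}<\theta$ and $\Gamma$ outputs $p_0=(i^+,i^-)$, while when $a_i$ reports $v_i'$ we have $v_{i,\ell_0}'\geq\theta$ and $\Gamma$ outputs $p_1=(i^{+'},i^{-'})$, exactly as required.

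The main obstacle — and the only genuinely delicate point — is packing a \emph{real-valued} threshold together with the finite discrete instruction into the single real parameter $t$ while simultaneously keeping $t$ large enough to mark $a_j$ as the identifiable carrier. This is precisely why I route the discrete-plus-real payload through the fractional part $\{t\}$ (making the instruction invariant under adding integers) and reserve the integer part for dominating $M$; the remaining verifications are routine bookkeeping. I would also stress that $\Gamma$ need not be continuous or monotone — an arbitrary surjection $\psi$ suffices — since the mechanism only ever invokes $\Gamma$ pointwise on a single profile at a time.
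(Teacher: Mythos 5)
Your proposal is correct and follows essentially the same strategy as the paper: agent $a_j$'s single large value is made the unique global maximum so that $\Gamma$ can locate it and decode from it an instruction specifying which entry of $a_i$'s report to inspect and how to map the result onto the two target pairs. The only differences are implementation details — the paper decodes a binary string and inspects a single bit of $v_i$'s encoding, mapping it via $(a\cdot s+b)\bmod\binom{n}{2}$, whereas you decode the fractional part via a surjection and compare $v_{i,\ell_0}$ against a real threshold — and both variants equally rely on reported values being unbounded and unnormalized.
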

\begin{proof}
    The rule $\Gamma$  first finds the maximum value among all agents and all goods: $\displaystyle v^\ast := \max_{i\in[n],\ell\in[m]}v_{i,\ell}$.
    It then views the value $v^\ast$ as a binary string that encodes the following information:
    \begin{itemize}
        \item the index $i$ of an agent $a_i$; 
        \item  \rmark{and the index $\ell$ of a good $g_\ell$} \eden{I think it was accidentally removed before submission}
        \item a non-negative integer $t$,
        \item two non-negative integers $a,b$, between $0$ and ${n \choose 2}$.
        % \item four non-negative integers $a^+,b^+,a^-,b^-$.
    \end{itemize}
    We append $0$'s as most significant bits to $v^\ast$ if the length of the binary string is not long enough to support the format of the encoding.
   If the encoding of $v^\ast$ is longer than the length enough for encoding the above-mentioned information, we take only the least significant bits in the amount required for the encoding.

    The mechanism-favored agent $a_{i^+}$ and the mechanism-unfavored agent $a_{i^-}$ are then decided in the following way.
    Let $s\in\{0,1\}$ be the bit at the $t$-th position of the binary encoding of the value $v_i(g_\ell)$. \eden{I think the definition of $\ell$ is missing}

    Let $p := (a\cdot s + b) \bmod {n \choose 2}$.
    Each value of $p$ corresponds to a pair of different agents $(i^+, i^-)$.
    
    % Let $i^+ := (a^+\cdot s+b^+ \bmod n) + 1$. 
    % if $a^+\cdot s+b^+\in[n]$, and let $i^+=1$ otherwise.
    % Let $i^- := (a^-\cdot s+b^- \bmod n) + 1$.
    % If $i^+=i^-$, we make them different by setting $i^+:=1, i^-:=n$.
    % if $a^-\cdot s+b^-\in [n]$, and let $i^-=n$ otherwise.
    % \erel{Can we take $i^+ = (a^+\cdot s + b^+) \bmod n$?}

    To see that $\Gamma$ is volatile, suppose $v_i$ and $v_i'$ are different in the $t$-th bits of their binary encoding.
    We construct a value $v^*$ that encodes the integers
    $i,t,a,b$ where
    \begin{enumerate}
        \item the $t$-th bit of $v_{i}$ is $s$ and the $t$-th bit of $v_{i}'$ is $s'$ for $s\neq s'$;
        \item The pair $(i^+,i^-)$ corresponds to the integer $(a\cdot s + b) \bmod {n \choose 2}$.
        \item The pair $(i^{+'},i^{-'})$ corresponds to the integer $(a\cdot s' + b) \bmod {n \choose 2}$.
        % $i^+=a^+\cdot s+b^+$ and $i^{+'}=a^{+}\cdot s'+b^{+}$;
        % \item $i^-=a^-\cdot s+b^-$ and $i^{-'}=a^{-}\cdot s'+b^{-}$.
    \end{enumerate}
    (1) can always be achieved by some encoding rule.
    To see (2) and (3) can always be achieved, assume $s=1$ and $s'=0$ without loss of generality.
We can then take $b := $ the integer corresponding to the pair $(i^{+'},i^{-'})$, and $a := - b + $ the integer corresponding to the pair $(i^{+},i^{-})$, modulo ${n\choose 2}$.
    
    % We can then set $b^+=i^{+'}$ and $a^+=i^+-b^+$.
    % We can then set $b^+=i^{+'}$ and $a^+=i^+-b^+$.
    % Similar construction can show that (3) is always achievable.
    
    We then construct a valuation $v_j$ such that $v_{j,\ell^\ast}$ is the largest and is equal to $v^*$.
    In case $v^*$ is not large enough, we increase it as needed by adding most significant digits.
\end{proof}

\begin{remark}
The proof of \Cref{prop:volatile} requires that the mechanism does \emph{not} normalize the valuations, nor place any upper bound on the reported values. Suppose there were an upper bound $V$ on the value. $V$ encodes some agent $i$, bit number $t$, and integers $a,b$. It is possible that these numbers give the highest priority to agent $i$. In that case, agent $i$ could manipulate by reporting the value $V$.
\end{remark}

It is easy to see that the mechanism always outputs an EF1 allocation; we further prove that:
\begin{theoremrep}\label{thm:volatile-good-alloc-degree-n-1}
    The RAT-degree of the Volatile Priority Goods Allocation is $n-1$.
\end{theoremrep}

\begin{proofsketch}
    For every profitable manipulation by $a_i$, and for every unknown agent $a_j$, the volatility of $\Gamma$ implies that, for some possible valuation $v_j$, 
    a truthful report by $a_i$ leads to $a_i$ being the favored agent and $a_j$ being the unfavored agent, whereas the manipulation leads to 
    $a_i$ being the unfavored agent and $a_j$ being the favored agent. We use this fact  to prove that the manipulation may be harmful for $a_i$.
\end{proofsketch}

\begin{toappendix}
    Before we proceed to the proof, we first define some additional notions.
We say that $(A_1,\ldots,A_n)$ is a \emph{partial allocation} if $A_i\cap A_j=\emptyset$ for any pair of $i,j\in[n]$ and $\bigcup_{i=1}^nA_i\subseteq G$.
The definition of EF1 can be straightforwardly extended to partial allocations.
Given a possibly partial allocation $(A_1,\ldots,A_n)$, we say that agent $a_i$ \emph{strongly envies} agent $a_j$ if $v_i(A_i)<v_i(A_j\setminus\{g\})$ for any $g\in A_j$, i.e., the EF1 criterion from $a_i$ to $a_j$ fails.
Given $t \in[n]$, we say that a (possibly partial) allocation $(A_1,\ldots,A_n)$ is \emph{EF1 except for $t$} if for any pair $i,j\in[n]$ with $i\neq t$ we have $v_i(A_i)\geq v_i(A_j\setminus \{g\})$ for some $g\in G$.
In words, the allocation is EF1 except that agent $a_t$ is allowed to strongly-envy others.

We first prove some lemmas which will be used later.
\begin{lemma}\label{prop:partialtocomplete}
    Fix a valuation profile. Let $(A_1,\ldots,A_n)$ be a partial EF1 allocation. There exists a complete EF1 allocation $(A_1^+,\ldots,A_n^+)$ such that $v_i(A_i^+)\geq v_i(A_i)$ for each $i\in [n]$.
\end{lemma}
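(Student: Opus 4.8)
The plan is to build the complete allocation from the given partial one by running the \emph{envy-cycle elimination} procedure of \citet{lipton2004approximately}, but seeded with the partial EF1 allocation $(A_1,\ldots,A_n)$ instead of the empty allocation. Because valuations are additive, every operation I use will either add a good to some agent's current bundle or move a participating agent to a bundle she strictly prefers; in both cases no agent's value for her \emph{own} bundle can go down. Tracking this monotonicity across all steps will directly give the desired conclusion $v_i(A_i^+)\geq v_i(A_i)$ for every $i\in[n]$.

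Concretely, I maintain the \emph{envy graph} on the agents, placing a directed edge $a_i\to a_j$ whenever $v_i(A_i)<v_i(A_j)$, and I interleave two operations while unallocated goods remain. \textbf{(Cycle rotation.)} If the envy graph contains a directed cycle $a_{i_1}\to a_{i_2}\to\cdots\to a_{i_k}\to a_{i_1}$, I give each $a_{i_t}$ the bundle $A_{i_{t+1}}$ of the agent it envies (indices mod $k$). \textbf{(Allocation to a source.)} Once the graph is acyclic it has a source $a_s$ with in-degree $0$ — an unenvied agent — and I assign one still-unallocated good $g$ to $a_s$.

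I would then establish two invariants. First, \emph{EF1 is preserved}. Under a cycle rotation, every agent's own value weakly increases and the multiset of bundles is merely permuted, so any agent that was EF1 toward every bundle before remains EF1 toward every bundle after. Under allocation to a source $a_s$: the receiver only gains value, so she envies no one more than before; and for every other agent $a_k$ we had $v_k(A_k)\geq v_k(A_s)$ (since $a_s$ was a source), hence $v_k(A_k)\geq v_k\bigl((A_s\cup\{g\})\setminus\{g\}\bigr)$, so the EF1 condition from $a_k$ to $a_s$ holds after removing the single good $g$; all other pairs are untouched. Second, \emph{values never decrease}: adding a good can only raise the owner's value, and a rotation sends each participating agent to a strictly more valued bundle, so the value each agent assigns to her current bundle is non-decreasing at every step. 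Composing over the whole run yields $v_i(A_i^+)\geq v_i(A_i)$.

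The main obstacle to pin down carefully is \emph{termination and well-definedness of the loops}. The inner cycle-elimination loop must end: here I would invoke the standard argument that resolving a cycle strictly reduces the number of edges in the envy graph — mapping each new edge $a_i\to a_j$ to the old edge $a_i\to a_{\sigma(j)}$, where $\sigma$ is the rotation permutation, gives an injection into the old edge set whose image omits all $k\geq 2$ former cycle edges (no agent envies its own new bundle). Hence after finitely many rotations the graph is acyclic and a source is guaranteed to exist. The outer loop terminates because each source-allocation step reduces the number of unallocated goods by exactly one. With both invariants and termination in hand, the output is a complete EF1 allocation that dominates the starting partial allocation agent-by-agent, which is precisely the claim.
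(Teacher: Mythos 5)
Your proposal is correct and takes essentially the same approach as the paper: the paper's proof simply seeds the envy-graph procedure of Lipton et al.~with the given partial EF1 allocation and invokes its monotonicity property, which is exactly the argument you spell out in detail (EF1 preservation, value monotonicity, and termination).
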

\begin{proof}
    Construct the envy-graph for the partial allocation $(A_1,\ldots,A_n)$ and then perform the \emph{envy-graph procedure} proposed by~\citet{lipton2004approximately} to obtain a complete allocation $(A_1^+,\ldots,A_n^+)$.
    The monotonic property of the procedure directly implies this proposition.
\end{proof}

\begin{lemmarep}\label{prop:maximuminexception}
    Fix a valuation profile and an arbitrary agent $a_{t}$. Let $\efallocations$ be the set of all complete EF1 allocations. Let ${\efallocations}^{-t}$ be the set of all possibly partial allocations that are EF1 except for possibly $a_t$. The allocation in $\efallocations$ that maximizes agent $a_t$'s utility is also the one in ${\efallocations}^{-t}$ that maximizes $a_t$'s utility.
\end{lemmarep}
In other words, if $a_t$ gets the maximum possible value subject to EF1, she cannot get a higher value by agreeing to give up the EF1 guarantee for himself.
This claim is trivially true for share-based fairness notions such as proportionality, but quite challenging to prove for EF1; see appendix.

% \biaoshuai{I think this proof should go to the appendix.}
\begin{proof}
    Assume $t=1$ without loss of generality.
    The allocation space $\efallocations$ is clearly a subset of ${\efallocations}^{-1}$.
    %Assume for the sake of contradiction that, for all possibly partial allocations in ${\efallocations}^{-1}$, agent $a_1$ strongly envies someone else.
    Let $\mathcal{Y}\subseteq {\efallocations}^{-1}$ be the set of all possibly partial allocations with agent $a_1$'s utility maximized, and assume $\mathcal{Y}\cap\efallocations=\emptyset$ for the sake of contradiction.
    Let $(A_1,\ldots,A_n)$ be  an allocation that minimizes $|\bigcup_{i=1}^nA_i|$ (minimizes the total number of goods allocated) among all allocations in %${\efallocations}^{-1}$.
    $\mathcal{Y}$.

    For each $i\neq 1$, agent $a_i$ will strongly envy some other agent if an arbitrary good is removed from $A_i$, for otherwise, the minimality is violated.
    We say that an agent $a_i$ \emph{champions} agent $a_j$ if the following holds.
    \begin{itemize}
        \item $a_i$ strongly envies $a_j$ for $i=1$;
        \item for $i\neq 1$, let agent $a_i$ removes the most valuable good from each $A_k$ (for $k\neq i$) and let $A_k^-$ be the resultant bundle; then the championed agent, agent $a_j$, is defined by the index $k$ with the maximum $v_i(A_k^-)$.
    \end{itemize}
    We then construct a \emph{champion graph} which is a directed graph with $n$ vertices where the vertices represent the $n$ agents and an edge from $a_i$ to $a_j$ represents that agent $a_i$ champions agent $a_j$.
    By our definition, each vertex in the graph has at least one outgoing edge, so the graph must contain a directed cycle $C$.

    Consider a new allocation $(A_1',\ldots,A_n')$ defined as follows. For every edge $(a_i,a_j)$ in $C$, let agent $a_i$ remove the most valuable good from $A_{j}$ and then take the bundle.
    We will show that $v_1(A_1')\geq v_1(A_1)$ and $(A_1',\ldots,A_n')\in {\efallocations}^{-1}$ (and so $(A_1',\ldots,A_n')\in \mathcal{Y}$), which will contradict the minimality of $(A_1,\ldots,A_n)$.

    It is easy to see $v_1(A_1')\geq v_1(A_1)$. If agent $a_1$ is not in the cycle $C$, then her utility is unchanged. Otherwise, she receives a bundle that she previously strongly envies, and one good is then removed from the bundle. The property of strong envy guarantees that $v_1(A_1')>v_1(A_1)$.

    To show that $(A_1',\ldots,A_n')\in {\efallocations}^{-1}$, first consider any agent $a_i$ with $i\neq 1$ that is not in $C$.
    Agent $a_i$'s bundle is unchanged, and she will not strongly envy anyone else as before (as only item-removals happen during the update).
    
    Next consider any agent $a_i$ with $i\neq 1$ that is in $C$.
    Let $a_j$ be the agent such that $(a_i,a_j)$ is an edge in $C$.
    Let $A_j^-$ be the bundle with the most valuable good (according to $v_i$) removed from $A_j$.
    By our definition, agent $a_i$ receives $A_j^-$ in the new allocation.
    We will prove that agent $a_i$, by receiving $A_j^-$, does not strongly-envy any of the original bundles $A_k$, for any $k\in [n]$.
    
    Our definition of championship ensures that this is true for any $k\neq i$, as the new bundle of $a_i$ is at least as valuable for $a_i$ than every other bundle with an item removed.
    
    It remains to show that this holds for $k=i$.
    As we have argued at the beginning, in the original allocation $(A_1,\ldots,A_n)$, removing any item from $A_i$ would make agent $a_i$ strongly envy some other agent.
    By our definition of championship, when one good $g'$ is removed from $A_i$, agent $a_i$ strongly envies $a_j$, which implies $a_i$ thinks $A_j^-$ is more valuable than $A_i\setminus\{g'\}$.
    Therefore, in the new allocation, by receiving $A_j^-$, agent $a_i$ does not strongly envy $A_i$.
    
    We have proved that agent $a_i$, by receiving the bundle $A_j^-$, does not strongly envy any of the $n$ original bundles $A_1,\ldots,A_n$.
    Since the new allocation only involves item removals, agent $a_i$ does not strongly envy anyone else in the new allocation.
    
    Hence, the new allocation is in %$\efallocations_{1^-}$, 
    $\mathcal{Y}$,
    which contradicts the minimality of $(A_1,\ldots,A_n)$.
\end{proof}

We are now ready to prove \Cref{thm:volatile-good-alloc-degree-n-1}.

\begin{proof}[Proof of \Cref{thm:volatile-good-alloc-degree-n-1}]
Let $i,j$ be two arbitrary agents.
Fix $n-2$ arbitrary valuations $\{v_k\}_{k\notin\{i,j\}}$ for the remaining $n-2$ agents.
Consider two arbitrary valuations for agent $a_i$, $v_i$ and $v_i'$, with $v_i\neq v_i'$, where $v_i$ is $a_i$'s true valuation.
We will show that switching from $v_i$ to $v_i'$ is not a safe manipulation.

Let $\ell^\ast$ be some good that $a_i$ values positively, that is, $v_{i,\ell^\ast}>0$.
By the volatility of $\Gamma$, we can construct the valuation of agent $a_j$ such that 
\begin{itemize}
    \item $v_{j,\ell^\ast}>0$, and $v_{j,\ell}=0$ for any $\ell\neq\ell^\ast$;
    \item if agent $a_i$ truthfully reports $v_i$, then agent $a_i$ is mechanism-favored and agent $a_j$ is mechanism-unfavored; if agent $a_i$ reports $v_i'$ instead, then agent $a_j$ is mechanism-favored and agent $a_i$ is mechanism-unfavored.
\end{itemize}

Let $(A_1,\ldots,A_n)$ be the allocation output by $\Psi$ when agent $a_i$ reports $v_i$ truthfully, and $(A_1',\ldots,A_n')$ be the allocation output by $\Psi$ when agent $a_i$ reports $v_i'$.
Our objective is to show that $v_i(A_i)>v_i(A_i')$.

Let us consider $(A_1',\ldots,A_n')$ first.
We know that $g_{\ell^\ast}\in A_j'$.
To see this, notice that $A_j'$ maximizes agent $a_j$'s utility as long as $g_{\ell^\ast}\in A_j'$.
In addition, there exists a valid allocation $(A_1',\ldots,A_n')$ output by $\Psi$ with $g_{\ell^\ast}\in A_j'$: consider the round-robin mechanism with agent $a_j$ be the first and agent $a_i$ be the last under the order $\pi$.

Consider a new allocation $(A_1'',\ldots,A_n'')$ in which $g_{l^*}$ is moved from $a_j$ to $a_i$, that is,
\begin{itemize}
\item $A_i''=A_i'\cup\{g_{\ell^\ast}\}$,
\item $A_j''=A_j'\setminus\{g_{\ell^\ast}\}$,
\item $A_k''=A_k'$ for $k\notin\{i,j\}$.
\end{itemize}
Notice that $(A_1'',\ldots,A_n'')$ is EF1 except for $i$: 
\begin{itemize}
    \item agent $a_j$ will not envy any agent $a_k$ with $k\neq i$ (as each bundle $A_k''$ has a zero value for $j$), and agent $a_j$ will not envy agent $a_i$ upon removing the item $g_{\ell^\ast}$ from $A_i''$;
    \item no other agent $k$ strongly envies agent $i$: given that no one envies agent $i$ in $(A_1',\ldots,A_n')$ (as agent $i$ is mechanism-unfavored), no one strongly envies agent $i$ in $(A_1'',\ldots,A_n'')$;
    \item no agent strongly envies agent $a_j$, as $A_j''\subsetneq A_j'$;
    \item any two agents in $N\setminus\{a_i,a_j\}$ do not strongly envy each other, as their allocations are not changed.
\end{itemize}

Now, consider $(A_1,\ldots,A_n)$, which is the allocation that favors agent $a_i$ when agent $a_i$ truthfully reports $v_i$.
We can assume $A_j=\emptyset$ without loss of generality.
If not, we can reallocate goods in $A_j$ to the remaining $n-1$ agents while keeping the EF1 property among the remaining $n-1$ agents (\Cref{prop:partialtocomplete}).
Agent $a_j$ will not strongly envy anyone, as removing the good $g_{\ell^\ast}$ kills the envy.
Thus, the resultant allocation is still EF1 and no one envies the empty bundle $A_j$.
In addition, by \Cref{prop:partialtocomplete}, the utility of each agent $a_k$ with $k\neq j$ does not decrease after the reallocation of $A_j$.

Let ${\yefallocations}$ be the set of all EF1 allocations of the item-set $G$ to the agent-set $N\setminus\{a_j\}$.
Let ${\yefallocations}^{-i}$ be the set of all possibly partial allocations of the item-set $G$ to the agent-set $N\setminus\{a_j\}$ that are EF1 except for agent $i$.
The above argument shows that $(A_1,\ldots,A_{j-1},A_{j+1},\ldots,A_n)$ is an allocation in ${\yefallocations}$ that maximizes agent $a_i$'s utility.
We have also proved that $(A_1'',\ldots,A_{j-1}'',A_{j+1}'',\ldots,A_n'')\in {\yefallocations}^{-i}$.
By \Cref{prop:maximuminexception}, we have $v_i(A_i)\geq v_i(A_i'')$.
In addition, we have $A_i''=A_i'\cup\{g_{\ell^\ast}\}$, $g_{\ell^\ast}\notin A_i'$, and $v_{i,\ell^\ast}>0$ (by our assumption), which imply $v_i(A_i'')>v_i(A_i')$.
Therefore, $v_i(A_i)>v_i(A_i')$.
\end{proof}
\end{toappendix}

% Due to space limitations, full details are provided in the appendix (see our \href{https://arxiv.org/abs/2502.18805}{extended version}).

\begin{remark}
Our Volatile Priority algorithm does not run in polynomial time, as it requires to maximize the utility of a certain agent subject to EF1. which is %not known to be polynomial-time solvable.
an NP-hard problem (e.g., the proof in Appendix A.2 of~\citet{barman2019fair} can easily imply this).
We do not know if a polynomial-time EF1 algorithm with a high RAT-degree exists.
\end{remark}

In particular, an interesting direction is to to consider the round-robin algorithm but with a volatile priority order—that is, where the picking order depends on the reported valuations.

\begin{open}\label{open-alloc-2}
    What is the RAT-degree of Volatile Priority Round Robin?
\end{open}

\section{Cake Cutting}
\label{sec:cake-cutting}
In this section, we study the \emph{cake cutting} problem: the allocation of a divisible heterogeneous resource to $n$ agents.
The cake cutting problem was proposed by~\citet{Steinhaus48,Steinhaus49}, and it is a widely studied subject in mathematics, computer science, economics, and political science.

In the cake cutting problem, the resource/cake is modeled as an interval $[0,1]$, and it is to be allocated among a set of $n$ agents $N=\{a_1,\ldots,a_n\}$.
An allocation is denoted by $(A_1,\ldots,A_n)$ where $A_i\subseteq[0,1]$ is the share allocated to agent $a_i$.
We require that each $A_i$ is a union of finitely many closed non-intersecting intervals, and, for each pair of $i,j\in[n]$, $A_i$ and $A_j$ can only intersect at interval endpoints, i.e., the measure of $A_i\cap A_j$ is $0$.
% We say an allocation is \emph{complete} if $\bigcup_{i=1}^nA_i=[0,1]$.
% Otherwise, it is \emph{partial}.

The true preferences $T_i$ of agent $a_i$ are given by  a \emph{value density function} $v_i:[0,1]\to\mathbb{R}_{\geq0}$ that describes agent $a_i$'s preference over the cake.
To enable succinct encoding of the value density function, we adopt the widely considered assumption that each $v_i$ is \emph{piecewise constant}: there exist finitely many points $x_{i0},x_{i1},x_{i2},\ldots,x_{ik_i}$ with $0=x_{i0}<x_{i1}<x_{i2}<\cdots<x_{ik_i}=1$ such that $v_i$ is a constant on every interval $(x_{i\ell},x_{i(\ell+1)})$, $\ell=0,1,\ldots,k_i-1$.
Given a subset $S\subseteq[0,1]$ that is a union of finitely many closed non-intersecting intervals, agent $a_i$'s value for receiving $S$ is then given by
$V_i(S)=\int_Sv_i(x)dx.$

Lastly, we define an additional notion, \emph{uniform segment}, which will be used throughout this section.
Given $n$ value density functions $v_1,\ldots,v_n$ (that are piecewise constant by our assumptions), we identify the set of points of discontinuity for each $v_i$ and take the union of the $n$ sets.
Sorting these points by ascending order, we let $x_1,\ldots,x_{m-1}$ be all points of discontinuity for all the $n$ value density functions.
Let $x_0=0$ and $x_m=1$.
These points define $k$ intervals, $(x_0,x_1),(x_1,x_2),\ldots,(x_{m-1},x_m)$, such that each $v_i$ is a constant on each of these intervals.
We will call each of these intervals a \emph{uniform segment}, and we will denote $X_t=(x_{t-1},x_{t})$ for each $t=1,\ldots,m$.
% For each agent $a_i$, we will slightly abuse the notation by using $v_i(X_t)$ to denote $v_i(x)$ with $x\in X_t$.

% maybe to add: Ortega and Segal-Halevi: "In the classical cake-cutting problem, strategy-proofness is a very costly requirement in terms of fairness: for it implies a dictatorial allocation, whereas for it implies that one agent receives no cake." I think Fairness = proportionality in this case. 
% https://link.springer.com/article/10.1007/s00355-022-01416-4

\paragraph{Results.}
Our results are summarized in \Cref{tab:goods-cake-cutting-sum}.
We start by considering the \emph{utilitarian} mechanism that outputs an allocation with the maximum social welfare.
We show that the RAT-degree of this mechanism is $0$.
Similar as it is in the case of indivisible goods, we also consider the normalized variant of this mechanism, and we show that the RAT-degree is $1$.
Both mechanisms are \emph{Pareto Efficient}, they always returns an allocation that is \emph{Pareto Optimal} --- there is no other allocation that all agents weakly prefer, and at least one agent strictly prefers.

We then consider several fair mechanisms that have been considered by~\citet{BU2023Rat}.
Their paper studies whether or not these mechanisms are risk-avoiding truthful (in our language, whether the RAT-degree is positive).
Here we provide a more fine-grained view.

Each mechanism is guaranteed to return an allocation satisfying at least one of the following properties:
\emph{Envy free (EF)} --- each agent weakly prefers their own bundle to the bundle of any other agent, \emph{Proportional} --- each agent receives at least $1/n$ of the total value of the cake, according to their own valuation; and \emph{Connected pieces} --- each agent receives a single connected piece.

We start by considering \emph{equal division mechanisms} --- where we evenly allocate each uniform segment $X_t$ to all agents. The output allocation is always envy-free and proportional. However, we show that the RAT-degree relies heavily on the order in which we allocate the pieces of each segment. Specifically, with fixed order (e.g., let agent $a_1$ get the left-most interval and agent $a_n$ get the right-most interval), \citet{BU2023Rat} already proved that it is not even RAT, which means that its RAT-degree is $0$. 
To avoid this type of manipulation, a different tie-breaking rule was considered by~\citet{BU2023Rat} (See Mechanism 3 in their paper).
We prove that its RAT-degree is $n-1$.
~\citet{tao2022existence} shows that \emph{no} EF mechanism can also be truthful, so $n-1$ is the best possible RAT-degree.
However, this mechanism requires quite many cuts on the cake, and has a very poor performance in terms of pareto efficiency --- it gives each agent a value of exactly $1/n$ and nothing more, which intuitively is the most inefficient allocation among the proportional ones.

% It is proved in~\citet{tao2022existence} that truthfulness and proportionality are incompatible even for two agents and even if partial allocations are allowed.
% This motivates the search for fair cake-cutting algorithms with a high RAT-degree.
%
% \begin{itemize}
%     \item \emph{Envy free (EF)} --- each agent weakly prefers their own bundle to the bundle of any other agent.
    
%     \item \emph{Proportional} --- each agent receives at least $1/n$ of the total value of the cake, according to their own valuation.

%     \item \emph{Connected pieces} --- each agent receives a single connected piece.
% \end{itemize}

% EDEN: the mechanisms have different names in the different versions..  I used the ones in : https://www.sciencedirect.com/science/article/abs/pii/S0004370223000504?via%3Dihub
%
% Mechanism 1 (uniformly allocates each interval to all agents) - degree $0$, not connected, proportional and EF.

We then focus on proportional mechanisms with connected
pieces. \citet{BU2023Rat} prove that the Dubins-Spanier’s Moving-Knife \cite{dubins1961cut} is not RAT, which in our terms means that its RAT-degree is $0$. There was hope for those proven by \citet{BU2023Rat} to have a positive RAT-degree:
Bu, Song, and Tao's moving knife \cite{BU2023Rat} (two variants - Mechanisms 4 and 5 in their paper); and Ortega and Segal-Halevi’s moving knife \cite{ortega2022obvious}.
We show that they all have a very low RAT-degree of~$1$.
These results invoke the following question:
\begin{open}\label{open-cake-1}
    Is there a proportional connected cake-cutting rule with RAT-degree at least $2$?
    % \erel{Is it indeed open?}\biaoshuai{Yes to the best of my knowledge.}
\end{open}

A natural candidate to consider is the classic Even–Paz algorithm~\cite{even1984note}; we conjecture that its RAT-degree is~$\lceil n/2 \rceil$, but the question remains open.
\begin{open}\label{open-cake-2}
    What is the RAT-degree of the Even-Paz algorithm?
\end{open} 
% % \eden{I didn't manage to complete the proof on time..}

% ==== 6 Jan: EDEN: for even paz, removed for now.
% \rmark{We next consider the classic Even–Paz algorithm~\cite{even1984note} and prove that its RAT-degree is $n/2$ when $n$ is a power of $2$.}

%

Lastly, we adapt our \emph{volatile priority} approach from \Cref{sect:indivisible-EF1-n-1}, and propose a new mechanism that always returns a proportional and Pareto-efficient allocation, with the best possible RAT-degree of $n-1$.
Unlike its counterpart, this mechanism runs in polynomial time.

Given this result, it is natural to ask if the fairness guarantee can be strengthened to envy-freeness.
A compelling candidate is the mechanism that always outputs allocations with maximum \emph{Nash welfare} --- the product of agents utilities.
It is well-known that such an allocation is EF and Pareto-efficient.
However, computing its RAT-degree turns out to be very challenging for us.
We conjecture the answer is $n-1$.
\begin{open}\label{open-cake-3}
    What is the RAT-degree of the maximum Nash welfare mechanism?
\end{open}

% All descriptions and proofs of this section, except for the new Volatile Priority mechanism, have been moved to the appendix due to space limitations. Please refer to the  \href{https://arxiv.org/abs/2502.18805}{extended version} for further details.

\begin{table}[h]
    \centering
    \begin{tabular}{|l|c|l|}
    \hline 
    Mechanism & RAT-Degree  & Properties\\
    \hhline{|=|=|=|}
    Utilitarian & $0$ & Pareto Efficient\\
    \hline 
    Equal Division With Fixed Order & \cellcolor{black!10}$0$ & Proportional, EF\\
    \hline
    Dubins-Spanier’s Moving-Knife & \cellcolor{black!10}  &\multirow{2}{*}{Proportional, Connected}\\ 
    \cite{dubins1961cut} & \multirow{-2}{*}{\cellcolor{black!10}$0$} &\\
    \hline
    \multirow{2}{*}{Normalized Utilitarian} & \multirow{2}{*}{$1$} & Pareto Efficient\\
     &  & (w.r.t. the normalized values)\\
    \hline
    Bu, Song, and Tao’s Moving Knife  
     & \multirow{2}{*}{$1$} & \multirow{2}{*}{Proportional, Connected} 
    \\\cite{BU2023Rat} (Mechanism 4 and 5) 
     & &  \\
    \hline
    Ortega and Segal-Halevi’s Moving Knife 
     & \multirow{2}{*}{$1$} & \multirow{2}{*}{Proportional, Connected}\\
      \cite{ortega2022obvious} 
     &  & \\
    % \hline
    % Even-Paz \cite{even1984note} 
    %  & $n/2$ & Proportional, Connected \\
    \hline
    Bu, Song, and Tao's Equal Division
    & \multirow{2}{*}{$n-1$} & \multirow{2}{*}{Proportional, EF}\\
    \cite{BU2023Rat} (Mechanism 3)
    &  & \\
    \hline
    Volatile Priority*
    & $n-1$ & Proportional, Pareto Efficient   \\
    % \multirow{2}{*}{Volatile Priority*}
    % & \multirow{2}{*}{$n-1$} & Proportional, Pareto Efficient\\
    % & & 
    % \rmark{Polynomial-time}  
    % \\
    \hline
    \end{tabular}
    \caption{\centering Cake Cutting: Summary of Results.  (*) indicates that the mechanism was proposed in this paper. Gray cells indicate indicate results proven by \citet{BU2023Rat}.}
    \label{tab:goods-cake-cutting-sum}
\end{table}

% \eden{Do we need to mention in the table which of the results weren't proved in this paper? Is it okay to explain that only in the corresponding section?}

\subsection{Utilitarian Cake Cutting}
\label{sect:cake-msw}

\begin{toappendix}
    \subsection{Utilitarian Cake Cutting}
\end{toappendix}

It is easy to find an allocation that maximizes the social welfare: for each uniform segment $X_t$, allocate it to an agent $a_i$ with the maximum $v_i(X_t)$.
When multiple agents have equally largest value of $v_i(X_t)$ on the segment $X_t$, we need to specify a tie-breaking rule.
However, as we will see later, the choice of the tie-breaking rule does not affect the RAT-degree of the mechanism.

%We consider two different tie-breaking rules:
%\begin{enumerate}
%    \item agent with the smallest index takes $X_t$, and
%    \item $X_t$ is evenly distributed among these agents such that the agent with the smallest index takes the left-most piece of $X_t$ and the agent with the largest index takes the right-most piece.
%\end{enumerate}

It is easy to see that, whatever the tie-breaking rule is, the maximum social welfare mechanism is safely manipulable.
It is safe for an agent to report higher values on every uniform segment.
For example, doubling the values on all uniform segments is clearly a safe manipulation.

\begin{observation}
\label{obs:utilitarian-cake-cutting}
Utilitarian Cake-Cutting with any tie-breaking rule has RAT-degree $0$.
\end{observation}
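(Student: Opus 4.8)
The plan is to use the characterization that RAT-degree $0$ is equivalent to the mechanism being $0$-known-agents safely-manipulable, i.e. safely-manipulable in the sense of \Cref{eq:safe-manipulation} and \Cref{eq:manipulation}. So it suffices to exhibit one agent and one untruthful report that is simultaneously safe and profitable given $\emptyset$. Following the hint in the statement, I would take any agent $a_i$ with $V_i([0,1])>0$ (if no agent has positive value anywhere, every allocation gives all agents value $0$ and there is nothing to manipulate), and let the manipulation be to report the doubled density $2v_i$ in place of $v_i$. This is a genuine manipulation because $v_i\not\equiv 0$.

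The crucial structural remark is that doubling preserves the set of discontinuity points of $a_i$'s reported density, since $2a\neq 2b \iff a\neq b$. Hence, for any fixed profile $\prefsExcI$ of the other agents, the collection of uniform segments $X_1,\dots,X_m$ that the mechanism computes is \emph{identical} whether $a_i$ reports $v_i$ or $2v_i$. Because the utilitarian rule assigns each uniform segment independently to a reported-maximizer, and $a_i$'s true utility is additive over segments, I can then compare the truthful and the manipulated runs segment by segment. Fix $X_t$ and let $M_t$ denote the largest value reported on $X_t$ by the agents other than $a_i$; under truthful play $a_i$ wins $X_t$ iff $V_i(X_t)>M_t$, or $V_i(X_t)=M_t$ and the tie-break favors her, while under the manipulation the relevant threshold becomes $2V_i(X_t)$.

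For safety, note that winning a segment yields $V_i(X_t)\ge 0$ and losing yields $0$, so it is enough to check that the manipulation never turns a truthful win into a loss. If $a_i$ wins $X_t$ truthfully and $V_i(X_t)>0$, then $2V_i(X_t)>V_i(X_t)\ge M_t$, so after doubling she is the \emph{strict} maximizer and wins under any tie-breaking rule; if $V_i(X_t)=0$ the segment contributes $0$ in both runs. Thus her per-segment utility is weakly larger under the manipulation for every $\prefsExcI$, and by additivity the manipulation satisfies \Cref{eq:safe-manipulation}. The step I expect to be most delicate — and would write most carefully — is exactly this interaction with the \emph{arbitrary} tie-break rule and the zero-value ties, to rule out any adversarial tie-break that could break safety.

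For profitability, assume $n\ge 2$, pick a segment $X_{t_0}$ with $V_i(X_{t_0})>0$ (which exists since $V_i([0,1])>0$), and let the other agents report a density that is $0$ outside $X_{t_0}$ and constant on $X_{t_0}$ so that the maximum reported value there is some $M_{t_0}\in\bigl(V_i(X_{t_0}),\,2V_i(X_{t_0})\bigr)$ — for instance one agent reporting such a value and the rest reporting $0$. Then on $X_{t_0}$ truthful reporting loses (value $0$) while the manipulation wins (value $V_i(X_{t_0})>0$), and on every other segment the two runs coincide, so $f(2v_i,\prefsExcI)\succ_i f(v_i,\prefsExcI)$, establishing \Cref{eq:manipulation}. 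Combining the two facts, doubling is a safe-and-profitable manipulation, so the RAT-degree is $0$. Apart from the tie-break bookkeeping, everything is routine: the segment-independence of the utilitarian rule and the invariance of the segment partition under doubling reduce the argument to the one-segment analysis above.
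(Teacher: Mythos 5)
Your proposal is correct and follows essentially the same route as the paper, which likewise observes that uniformly scaling up the reported density (e.g.\ doubling it) is a safe and profitable manipulation under any tie-breaking rule. You simply flesh out the segment-by-segment verification and the explicit profitable profile that the paper leaves as "clearly" true.
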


We next consider the following variant of the maximum social welfare mechanism:
first rescale each $v_i$ such that $V_i([0,1])=\int_0^1v_i(x)dx=1$, and then output the allocation with the maximum social welfare.
We will show that the RAT-degree is $1$.
The proof is similar to the one for indivisible items (\Cref{thm:normalized-utilitarian-goods}) and is given in the appendix.

\begin{theoremrep}
When there are at least three agents,
Normalized Utilitarian Cake-Cutting with any tie-breaking rule has RAT-degree $1$.
\end{theoremrep}
\begin{proof}
    We assume without loss of generality that the value density function reported by each agent is normalized (as, otherwise, the mechanism will normalize the function for the agent).
    
    We first show that the mechanism is not $0$-known-agents safely-manipulable.
    Consider an arbitrary agent $a_i$ and let $v_i$ be her true value density function.
    Consider an arbitrary misreport $v_i'$ of agent $a_i$ with $v_i'\neq v_i$.
    Since the value density functions are normalized, there must exist an interval $(a,b)$ where $v_i'$ and $v_i$ are constant and $v_i'(x)<v_i(x)$ for $x\in(a,b)$.
    Choose $\varepsilon>0$ such that $v_i(x)>v_i'(x)+\varepsilon$.
    Consider the following two value density functions (note that both are normalized):
    $$v^{(1)}(x)=\left\{\begin{array}{ll}
        v_i'(x)+\varepsilon & \mbox{if }x\in(a,b) \\
        v_i'(x)-\varepsilon\cdot\frac{b-a}{1+a-b} & \mbox{otherwise}
    \end{array}\right. \quad\mbox{and}\quad v^{(2)}(x)=\left\{\begin{array}{ll}
        v_i'(x)-\varepsilon & \mbox{if }x\in(a,b) \\
        v_i'(x)+\varepsilon\cdot\frac{b-a}{1+a-b} & \mbox{otherwise}
    \end{array}\right..$$
    Suppose the remaining $n-1$ agents' reported value density functions are either $v^{(1)}$ or $v^{(2)}$ and each of $v^{(1)}$ and $v^{(2)}$ is reported by at least one agent (here we use the assumption $n\geq 3$).
    In this case, agent $a_i$ will receive the empty set by reporting $v_i'$.
    On the other hand, when reporting $v_i$, agent $a_i$ will receive an allocation that at least contains $(a,b)$ as a subset.
    Since $v_i$ has a positive value on $(a,b)$, reporting $v_i'$ is not a safe manipulation.

    We next show that the mechanism is $1$-known-agent safely-manipulable.
    \erel{Doesn't this part follow from the analogous result on indivisible items?}\biaoshuai{I think so}
    Suppose agent $a_1$'s true value density function is
    $$v_1(x)=\left\{\begin{array}{ll}
        1.5 & \mbox{if }x\in[0,0.5] \\
        0.5 & \mbox{otherwise}
    \end{array}\right.,$$ 
    and agent $a_1$ knows that agent $a_2$ reports the uniform value density function $v_2(x)=1$ for $x\in[0,1]$.
    We will show that the following manipulation of agent $a_1$ is safe and profitable.
    $$v_1'(x)=\left\{\begin{array}{ll}
        2 & \mbox{if }x\in[0,0.5] \\
        0 & \mbox{otherwise}
    \end{array}\right.$$
    Firstly, regardless of the reports of the remaining $n-2$ agents, the final allocation received by agent $a_1$ must be a subset of $[0,0,5]$, as agent $a_2$'s value is higher on the other half $(0.5,1]$.
    Since $v_1'$ is larger than $v_1$ on $[0,0,5]$, any interval received by agent $a_1$ when reporting $v_1$ will also be received if $v_1'$ were reported.
    Thus, the manipulation is safe.

    Secondly, if the remaining $n-2$ agents' value density functions are
    $$v_3(x)=v_4(x)=\cdots=v_n(x)=\left\{\begin{array}{ll}
        1.75 & \mbox{if }x\in[0,0.5] \\
        0.25 & \mbox{otherwise}
    \end{array}\right.,$$
    it is easy to verify that agent $a_1$ receives the empty set when reporting truthfully and she receives $[0,0.5]$ by reporting $v_1'$.
    Therefore, the manipulation is profitable.
\end{proof}

% =======================

\subsection{Equal Division Cake Cutting}

\begin{toappendix}
    \subsection{Equal Division Cake Cutting}
\end{toappendix}

One natural envy-free mechanism is to evenly allocate each uniform segment $X_t$ to all agents.
Specifically, each $X_t$ is partitioned into $n$ intervals of equal length, and each agent receives exactly one of them.
It is easy to see that $V_i(A_j)=\frac1nV_i([0,1])$ for any $i,j\in[n]$ under this allocation, so the allocation is envy-free and proportional.

To completely define the mechanism, we need to specify the order of evenly allocating each $X_t=(x_{t-1},x_t)$ to the $n$ agents.
A natural tie-breaking rule is to let agent $a_1$ get the left-most interval and agent $a_n$ get the right-most interval.
Specifically, agent $a_i$ receives the $i$-th interval of $X_t$, which is $[x_{t-1}+\frac{i-1}n(x_t-x_{t-1}),x_{t-1}+\frac{i}n(x_t-x_{t-1})]$.
However, it was proved in~\citet{BU2023Rat} that the equal division mechanism under this ordering rule is safely-manipulable, i.e., its RAT-degree is $0$.
In particular, agent $a_1$, knowing that she will always receive the left-most interval in each $X_t$, can safely manipulate by deleting a point of discontinuity in her value density function if her value on the left-hand side of this point is higher.

To avoid this type of manipulation, a different ordering rule was considered by~\citet{BU2023Rat} (See Mechanism 3 in their paper): at the $t$-th segment, the $n$ equal-length subintervals of $X_t$ are allocated to the $n$ agents with the left-to-right order $a_t,a_{t+1},\ldots,a_n,a_1,a_2,\ldots,a_{t-1}$.
By using this ordering rule, an agent does not know her position in the left-to-right order of $X_t$ without knowing others' value density functions.
Indeed, even if only one agent's value density function is unknown, an agent cannot know the index $t$ of any segment $X_t$.
This suggests that the mechanism has a RAT-degree of $n-1$.

\begin{theoremrep}
    Consider the mechanism that evenly partitions each uniform segment $X_t$ into $n$ equal-length subintervals and allocates these $n$ subintervals to the $n$ agents with the left-to-right order $a_t,a_{t+1},\ldots,a_n,a_1,a_2,\ldots,a_{t-1}$. It has RAT-degree $n-1$ and always outputs envy-free allocations.
\end{theoremrep}
\begin{proofsketch}
    Envy-freeness is trivial: for any $i,j\in[n]$, we have $V_i(A_j)=\frac1nV_i([0,1])$.
    The general impossibility result in~\citet{tao2022existence} shows that no mechanism with the envy-freeness guarantee can be truthful, so the RAT-degree is at most $n-1$.

 To show that the RAT-degree is exactly $n-1$, we show that, if even a single agent is not known to the manipulator, it is possible that this agent's valuation adds discontinuity points in a way that the ordering in each uniform segment is unfavorable for the manipulator.
\end{proofsketch}
\begin{proof}
    Envy-freeness is trivial: for any $i,j\in[n]$, we have $V_i(A_j)=\frac1nV_i([0,1])$.
    The general impossibility result in~\citet{tao2022existence} shows that no mechanism with the envy-freeness guarantee can be truthful, so the RAT-degree is at most $n-1$.

    To show that the RAT-degree is exactly $n-1$, consider an arbitrary agent $a_i$ with true value density function $v_i$ and an arbitrary agent $a_j$ whose report is unknown to agent $a_i$.
    Fix $n-2$ arbitrary value density functions $\{v_k\}_{k\notin\{i,j\}}$ that are known by agent $a_i$ to be the reports of the remaining $n-2$ agents.
    For any $v_i'$, we will show that agent $a_i$'s reporting $v_i'$ is either not safe or not profitable.

    Let $T$ be the set of points of discontinuity for $v_1,v_2,\ldots,v_{j-1},v_{j+1},\ldots,v_n$, and $T'$ be the set of points of discontinuity with $v_i$ replaced by $v_i'$.
    If $T\subseteq T'$ (i.e., the uniform segment partition defined by $T'$ is ``finer'' than the partition defined by $T$), the manipulation is not profitable, as agent $a_i$ will receive her proportional share $\frac1nV_i([0,1])$ in both cases.
    
    It remains to consider the case where there exists a point of discontinuity $y$ of $v_i$ such that $y\in T$ and $y\notin T'$.
    This implies that $y$ is a point of discontinuity in $v_i$, but not in $v_i'$ nor in the valuation of any other agent.
    We will show that the manipulation is not safe in this case.

    Choose a sufficiently small $\varepsilon>0$ such that $(y-\varepsilon, y+\varepsilon)$ is contained in a uniform segment defined by $T'$.
    We consider two cases, depending on whether the ``jump'' of $v_i$ in its discontinuity point $y$ is upwards or downwards.
    
    \underline{Case 1:}  $\lim_{x\to y^-}v_i(x)<\lim_{x\to y^+}v_i(x)$.
    We can construct $v_j$ such that: 1) $y-\varepsilon$ and $y+\varepsilon$ are points of discontinuity of $v_j$, and 2) the uniform segment $(y-\varepsilon, y+\varepsilon)$ under the profile $(v_1,\ldots,v_{i-1},v_i',v_{i+1},\ldots,v_n)$ is the $t$-th segment where $n$ divides $t-i$ (i.e., agent $a_i$ receives the left-most subinterval of this uniform segment).
    Notice that 2) is always achievable by inserting a suitable number of points of discontinuity for $v_j$ before $y-\varepsilon$.
    Given that $\lim_{x\to y^-}v_i(x)<\lim_{x\to y^+}v_i(x)$, agent $a_i$'s allocated subinterval on the segment $(y-\varepsilon, y+\varepsilon)$ has value strictly less than $\frac1nV_i([(y-\varepsilon, y+\varepsilon])$.

    \underline{Case 2:} $\lim_{x\to y^-}v_i(x)>\lim_{x\to y^+}v_i(x)$.
    We can construct $v_j$ such that 1) $(y-\varepsilon, y+\varepsilon)$ is a uniform segment under the profile $(v_1,\ldots,v_{i-1},v_i',v_{i+1},\ldots,v_n)$, and 2) agent $a_i$ receives the right-most subinterval on this segment.
    In this case, agent $a_i$ again receives a value of strictly less than $\frac1nV_i([(y-\varepsilon, y+\varepsilon])$ on the segment $(y-\varepsilon, y+\varepsilon)$.

    We can do this for every point $y$ of discontinuity of $v_i$ that is in $T\setminus T'$.
    By a suitable choice of $v_j$ (with a suitable number of points of discontinuity of $v_j$ inserted in between), we can make sure agent $a_i$ receives a less-than-average value on every such segment $(y-\varepsilon,y+\varepsilon)$.
    Moreover, agent $a_i$ receives exactly the average value on each of the remaining segments, because the remaining discontinuity points of $T$ are contained in $T'$.
    Therefore, the overall utility of $a_i$ by reporting $v_i'$ is strictly less than $\frac1nV_i([0,1])$.
    Given that $a_i$ receives value exactly $\frac1nV_i([0,1])$ for truthfully reporting $v_i$, reporting $v_i'$ is not safe.
\end{proof}

Although the equal division mechanism with the above-mentioned carefully designed ordering rule is envy-free and has a high RAT-degree of $n-1$, it is undesirable in at least two aspects:
\begin{enumerate}
    \item it requires quite many cuts on the cake by making $n-1$ cuts on each uniform segment; this is particularly undesirable if piecewise constant functions are used to approximate more general value density functions.

    \item it is highly inefficient: each agent $a_i$'s utility is never more than her minimum proportionality requirement $\frac1nV_i([0,1])$;
\end{enumerate}

We handle point (1) in \Cref{sect:cake-connected} and point (2) in \Cref{sect:cake-Prop+PO}.

% ===========================

\subsection{Connected Cake-Cutting} \label{sect:cake-connected}

Researchers have been looking at allocations with \emph{connected pieces}, i.e., allocations with only $n-1$ cuts on the cake.
A well-known mechanism in this category is \emph{the moving-knife procedure}, which always outputs proportional allocations.
This mechanism was first proposed by~\citet{dubins1961cut}. It always returns a proportional connected allocation.
Unfortunately, it was shown by~\citet{BU2023Rat} that Dubins and Spanier's moving-knife procedure is safely-manipulable for some very subtle reasons. For clarity, we provide its description here. 

\paragraph{Dubins and Spanier's moving-knife procedure}
Let $u_i=\frac1nV_i([0,1])$ be the value of agent $a_i$'s proportional share.
In the first iteration, each agent $a_i$ marks a point $x_i^{(1)}$ on $[0,1]$ such that the interval $[0,x_i^{(1)}]$ has value exactly $u_i$ to agent $a_i$.
Take $x^{(1)}=\min_{i\in[n]}x_i^{(1)}$, and the agent $a_{i_1}$ with $x_{i_1}^{(1)}=x^{(1)}$ takes the piece $[0,x^{(1)}]$ and leaves the game.
In the second iteration, let each of the remaining $n-1$ agents $a_i$ marks a point $x_i^{(2)}$ on the cake such that $[x^{(1)},x_i^{(2)}]$ has value exactly $u_i$.
Take $x^{(2)}=\min_{i\in[n]\setminus\{i_1\}}x_i^{(2)}$, and the agent $a_{i_2}$ with $x_{i_2}^{(2)}=x^{(2)}$ takes the piece $[x^{(1)},x^{(2)}]$ and leave the game.
This is done iteratively until $n-1$ agents have left the game with their allocated pieces.
Finally, the only remaining agent takes the remaining part of the cake.
It is easy to see that each of the first $n-1$ agents receives exactly her proportional share, while the last agent receives weakly more than her proportional share; hence the procedure always returns a proportional allocation.

\begin{remark}\label{remark:direct-revelation}
Although such mechanism are usually described in an iterative interactive way that resembles an extensive-form game, we will consider the \emph{direct-revelation} mechanisms in this paper, where the $n$ value density functions are reported to the mechanism at the beginning.
In the above description of Dubins and Spanier's moving-knife procedure, as well as its two variants mentioned later,
% \rmark{and the Even-Paz algorithm,} 
by saying ``asking an agent to mark a point'', we refer to that the mechanism computes such a point based on the reported value density function.
In particular, we do not consider the scenario where agents can adaptively choose the next marks based on the allocations in the previous iterations.
\end{remark}

\citet{BU2023Rat} proposed a variant of the moving-knife procedure that is RAT.
In addition, they showed that another variant of moving-knife procedure proposed by~\citet{ortega2022obvious} is also RAT.\footnote{It should be noticed that, when $v_i$ is allowed to take $0$ value, tie-breaking needs to be handled very properly to ensure RAT. See \citet{BU2023Rat} for more details. Here, for simplicity, we assume $v_i(x)>0$ for each $i\in[n]$ and $x\in[0,1]$.}
Below, we will first describe both mechanisms and then show that both of them have RAT-degree $1$.

\paragraph{Ortega and Segal-Halevi's moving knife procedure}
The first iteration of Ortega and Segal-Halevi's moving knife procedure is the same as it is in Dubins and Spanier's.
After that, the interval $[x^{(1)},1]$ is then allocated \emph{recursively} among the $n-1$ agents $[n]\setminus\{a_{i_1}\}$.
That is, in the second iteration, each agent $a_i$ marks a point $x_i^{(2)}$ such that the interval $[x^{(1)}, x_i^{(2)}]$ has value exactly $\frac1{n-1}V_i([x^{(1)},1])$ (instead of $\frac1nV_1([0,1])$ as it is in Dubins and Spanier's moving-knife procedure).
The remaining part is the same: the agent with the left-most mark takes the corresponding piece and leaves the game.
After the second iteration, the remaining part of the cake is again recursively allocated to the remaining $n-2$ agents.
This is continued until the entire cake is allocated.

\paragraph{Bu, Song, and Tao's moving knife procedure}
Each agent $a_i$ is asked to mark all the $n-1$ ``equal-division-points'' $x_i^{(1)},\ldots,x_i^{(n-1)}$ at the beginning such that $V_i([x_i^{(t-1)},x_i^{(t)}])=\frac1nV_i([0,1])$ for each $t=1,\ldots,n$, where we set $x_i^{(0)}=0$ and $x_i^{(n)}=1$.
The remaining part is similar to Dubins and Spanier's moving-knife procedure:
in the first iteration, agent $i_1$ with the minimum $x_{i_1}^{(1)}$ takes $[0,x_{i_1}^{(1)}]$ and leave the game; in the second iteration, agent $i_2$ with the minimum $x_{i_2}^{(2)}$ among the remaining $n-1$ agents takes $[x_{i_1}^{(1)},x_{i_2}^{(2)}]$ and leave the game; and so on.
The difference to Dubins and Spanier's moving-knife procedure is that each $x_i^{(t)}$ is computed at the beginning, instead of depending on the position of the previous cut.

\begin{theorem}
    The RAT-degree of Ortega and Segal-Halevi's moving knife procedure is $1$.
\end{theorem}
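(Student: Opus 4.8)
The plan is to establish the two bounds separately. For the lower bound, that the RAT-degree is at least $1$, I would simply invoke the result of \citet{BU2023Rat}, who proved that Ortega and Segal-Halevi's procedure is RAT; by the observation that a mechanism is RAT if and only if its RAT-degree is at least $1$, this is exactly the statement that the mechanism is not $0$-known-agents safely-manipulable. Hence the whole content of the theorem lies in the matching upper bound: exhibiting a single instance in which some agent $a_1$, after learning the reported density of exactly one other agent $a_2$, has a manipulation that is simultaneously profitable and safe against every profile of the remaining $n-2$ agents.

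For the upper bound I would first record a structural obstruction that dictates the shape of the construction. Because the procedure re-divides the residual cake into equal fractions at every round, any misreport that moves $a_1$'s first proportional mark strictly to the right is \emph{never} safe: an unknown agent can mark just to the left of $a_1$'s new mark, take the first piece, and bump $a_1$ onto a residual interval $[q,1]$ on which her guaranteed share $\tfrac{1}{n-1}V_1([q,1])$ is strictly below her truthful proportional value $\tfrac1n V_1([0,1])$, since $V_1([0,q])>\tfrac1n V_1([0,1])$ whenever $q$ exceeds her true mark. Symmetrically, moving the mark left can only shrink the value of any first piece she grabs, so it is never profitable. Consequently the profitable deviation cannot live in the first round; it must be hidden inside the recursion, in an interaction that the \emph{known} agent $a_2$ controls. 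Accordingly I would design the true density $v_1$, the misreport $v_1'$, and the known density $v_2$ so that $v_1'$ leaves $a_1$'s first-round behaviour unchanged and differs from $v_1$ only on a portion of the cake reached in later rounds: in the event that a particular unknown agent drops out early, $a_1$ is left to compete against $a_2$ for a region she values, and the reshaping lets her capture strictly more than her proportional share there.

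The key steps, in order, would be: (i) fix explicit piecewise-constant densities for $a_1$ and $a_2$ together with an explicit $v_1'$; (ii) verify profitability by producing one profile of the $n-2$ unknown agents under which the manipulation strictly increases $a_1$'s utility; and (iii) verify safety by showing, for \emph{every} profile of the unknown agents, that the piece $a_1$ receives under $v_1'$ is worth at least as much to her as the one she receives under $v_1$. Step (iii) is the main obstacle. The difficulty is that the unknown agents may report arbitrary densities that reorder the sequence of drop-outs and relocate every cut point, so I must trace the recursion through all such contingencies; the knowledge of $a_2$'s exact report is precisely what is used to certify that in every contingency \emph{other} than the profitable one, the misreport reproduces $a_1$'s truthful allocation (or a weakly better one), so that no unknown report can render the deviation harmful. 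This mirrors the mechanism behind the round-robin $1$-known-agent manipulation, where knowing the agent who picks immediately after $a_1$ is exactly what converts a risky swap into a safe one.
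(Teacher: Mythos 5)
Your overall skeleton matches the paper's: the lower bound is obtained exactly as you propose, by citing \citet{BU2023Rat} for the fact that the procedure is not $0$-known-agent safely-manipulable, and the upper bound is obtained by an explicit instance in which the known agent $a_2$ is the one $a_1$ ends up competing with late in the recursion. However, your proposal stops short of the actual proof: steps (i) and (iii) of your plan --- the explicit choice of $v_1$, $v_1'$, $v_2$ and the verification that the misreport is safe against \emph{every} profile of the $n-2$ unknown agents --- are precisely the content of the theorem, and you acknowledge deferring them. As written, this is a proof strategy, not a proof; the ``trace the recursion through all contingencies'' step is exactly where the difficulty lives, and nothing in the proposal shows it can be discharged.

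The missing idea is the specific invariant that makes the safety check collapse. In the paper's construction $a_1$'s true density is uniform, $a_2$'s reported density is concentrated on $[1-\varepsilon,1]$, and the misreport $v_1'$ keeps $v_1'\equiv 1$ on $[0,\tfrac{n-2}{n}]$ while moving the remaining $2/n$ of mass onto $[1-2\varepsilon,1-\varepsilon]$; the crucial property is that $V_1'([t,1])=V_1([t,1])$ for \emph{every} $t\le\tfrac{n-2}{n}$. This is strictly stronger than your stated requirement that ``$v_1'$ leaves $a_1$'s first-round behaviour unchanged'': because each of $a_1$'s marks in round $r$ is the $\tfrac{1}{n-r+1}$-point of the residual $[t,1]$, the tail-measure identity forces \emph{all} of her marks to coincide under $v_1$ and $v_1'$ as long as the residual left endpoint stays below $\tfrac{n-2}{n}$, so the two executions are literally identical in every contingency except the one where $a_1$ is the second-to-last agent facing $a_2$ (who, by the placement of her mass, always survives past $a_1$). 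In that single remaining contingency one checks $b\le\tfrac{n-1}{n}$ truthfully versus $b'\ge 1-2\varepsilon$ under the misreport. Without this (or an equivalent) quantitative invariant, your case (iii) does not reduce to a single contingency and there is no argument that the deviation cannot backfire in some intermediate round. Your preliminary ``structural obstruction'' about first-round marks is harmless motivation but is not load-bearing and would itself need a worst-case profile construction to be made rigorous.
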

\begin{proof}
    It was proved in~\citet{BU2023Rat} that the mechanism is not $0$-known-agent safely-manipulable.
    It remains to show that it is $1$-known-agents safely-manipulable.
    Suppose agent $a_1$'s value density function is uniform, $v_1(x)=1$ for $x\in[0,1]$, and agent $a_1$ knows that agent $a_2$ will report $v_2$ such that $v_2(x)=1$ for $x\in[1-\varepsilon,1]$ and $v_2(x)=0$ for $x\in[0,1-\varepsilon)$ for some very small $\varepsilon>0$ with $\varepsilon\ll\frac1n$.
    We show that the following $v_1'$ is a safe manipulation.
    $$v_1'(x)=\left\{\begin{array}{ll}
        1 & x\in[0,\frac{n-2}n] \\
        \frac2{n\varepsilon} & x\in[1-2\varepsilon,1-\varepsilon]\\
        0 & \mbox{otherwise}
    \end{array}\right.$$
    Before we move on, note an important property of $v_1'$: for any $t\leq\frac{n-2}n$, we have $V_1([t,1])=V_1'([t,1])$.

    Let $[a,b]$ be the piece received by agent $a_1$ when she reports $v_1$ truthfully.
    If $b\leq \frac{n-2}n$, the above-mentioned property implies that she will also receive exactly $[a,b]$ for reporting $v_1'$.
    If $b>\frac{n-2}n$, then we know that agent $a_1$ is the $(n-1)$-th agent in the procedure.
    To see this, we have $V_1([a,b])\geq\frac1n([0,1])=\frac1n$ by the property of Ortega and Segal-Halevi's moving knife procedure, and we also have $V_1([b,1])=1-b<\frac2n$.
    This implies $b$ cannot be the $1/k$ cut point of $[a,1]$ for $k\geq 3$.
    On the other hand, it is obvious that agent $a_2$ takes a piece after agent $a_1$.
    Thus, by the time agent $a_1$ takes $[a,b]$, the only remaining agent is agent $a_2$.

    Since there are exactly two remaining agents in the game before agent $a_1$ takes $[a,b]$, we have $V_1([a,1])\geq\frac2nV_1([0,1])=\frac2n$.
    This implies $a\leq\frac{n-2}n$ and $b=\frac{a+1}2\leq \frac{n-1}n$.
    On the other hand, by reporting $v_1'$, agent $a_1$ can then get the piece $[a,b']$ with $b'\in[1-2\varepsilon,1-\varepsilon]$.
    We see that $b'>b$. Thus, the manipulation is safe and profitable.
\end{proof}

\begin{theorem}
    The RAT-degree of Bu, Song, and Tao's moving knife procedure is $1$.
\end{theorem}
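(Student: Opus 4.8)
The plan is to prove the RAT-degree of Bu, Song, and Tao's moving-knife procedure is exactly $1$, which decomposes into two parts: (a) the mechanism is not $0$-known-agent safely-manipulable (proved in \citet{BU2023Rat}, so I may cite it directly), and (b) the mechanism is $1$-known-agent safely-manipulable. All the work is in part (b): I must exhibit a manipulator $a_1$, a single known agent $a_2$ with a specific report, and a manipulation $v_1'$ that is both safe (weakly beneficial against every report of the remaining $n-2$ agents) and profitable (strictly beneficial for at least one such report).

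\textbf{The construction.}
I would mimic the structure of the Ortega--Segal-Halevi proof just above. Recall that in this procedure each agent precommits to all $n-1$ equal-division marks $x_i^{(1)},\ldots,x_i^{(n-1)}$ at the start, and the agent with the leftmost first mark leaves first, etc. First I would take $a_1$ to have the uniform density $v_1(x)=1$. I would let the known agent $a_2$ report a density concentrated at the far right, say $v_2(x)=1/\varepsilon$ on $[1-\varepsilon,1]$ and $0$ elsewhere, with $\varepsilon \ll 1/n$, so that $a_2$ places \emph{all} of her first $n-1$ marks inside $[1-\varepsilon,1]$ and therefore can only ever leave the game last (she never has the minimal current mark until the very end). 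This pins down that, in any run, $a_1$'s piece is drawn from $[0,1-\varepsilon]$ and she is never the final agent while $a_2$ is still present. I would then design $v_1'$ to agree with $v_1$ (a uniform value of $1$) on a large left region $[0,\frac{n-1}{n}]$ but shift mass rightward near $1-\varepsilon$ (placing a thin spike just left of $a_2$'s mass), exactly as in the adjacent theorem. The key invariant to verify is that for every cut point $t \le \frac{n-1}{n}$ that could actually arise, $V_1([t,1]) = V_1'([t,1])$, so the marks $a_1$ produces under $v_1$ and $v_1'$ coincide until the late rounds; hence whenever $a_1$ would finish ``early'' the two reports yield the identical piece, giving safety, while in the run where $a_1$ is the $(n-1)$-th to leave the spike lets her push her right endpoint strictly further right and strictly gain, giving profitability.

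\textbf{Safety and profitability.}
For safety I would argue case-by-case on the round in which $a_1$ leaves: as long as the leftmost relevant mark of $a_1$ lands in the region where $v_1'=v_1$, the allocated interval is literally unchanged, and the spike only matters in the final rounds where $a_2$'s known position guarantees $a_1$ weakly improves. For profitability I would exhibit one explicit profile of the $n-2$ unknown agents (e.g.\ all placing their mass so that $a_1$ becomes the $(n-1)$-th departing agent) for which truthful reporting yields a piece ending at some $b \le \frac{n-1}{n}$, whereas $v_1'$ yields a piece ending at $b' \in [1-2\varepsilon, 1-\varepsilon]$ with $b' > b$, a strict gain since $v_1$ is positive there.

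\textbf{Main obstacle.}
The delicate step is controlling how $a_1$'s \emph{precommitted} marks $x_1^{(t)}$ move when we switch from $v_1$ to $v_1'$. Because all marks are computed up front from the reported density rather than adaptively from the previous cut, I must ensure that redistributing mass into the right spike does not perturb the early marks in a way that changes \emph{which} round $a_1$ leaves in against some adversarial profile, which could turn a safe manipulation unsafe. I would handle this by keeping $v_1'=v_1$ on the entire left portion and making the spike thin enough (choosing $\varepsilon$ small relative to $1/n$) that it only affects the last equal-division mark; verifying this invariant rigorously, and checking the boundary round where $a_1$ is exactly the $(n-1)$-th picker, is where the real care is required.
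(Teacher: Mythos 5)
Your overall strategy is the paper's own: reuse the Ortega--Segal-Halevi construction (uniform $v_1$, a known agent $a_2$ concentrated on $[1-\varepsilon,1]$ so that $a_2$ always departs after $a_1$, and a misreport $v_1'$ that preserves the early equal-division points while pushing the last one to the right), then split on the round in which $a_1$ departs. The safety case analysis and the profitability witness you describe are exactly those in the paper, whose proof consists of observing that the first $n-2$ equal-division points of $v_1'$ coincide with those of $v_1$ while the $(n-1)$-th moves right.

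However, the construction as you have written it does not work. You take $v_1'=v_1=1$ on $[0,\frac{n-1}{n}]$ and impose $V_1'([t,1])=V_1([t,1])$ for all $t\le\frac{n-1}{n}$. At $t=\frac{n-1}{n}$ this forces the spike to carry mass exactly $\frac1n$, so $V_1'([0,x])=x$ for all $x\le\frac{n-1}{n}$ and the $t$-th equal-division point of $v_1'$ equals $\frac{t}{n}$ for \emph{every} $t\le n-1$ (at best the $(n-1)$-th point is ambiguously located on the zero-density plateau between $\frac{n-1}{n}$ and the spike). All $n-1$ precommitted marks are then unchanged, the procedure runs identically, and the manipulation is not profitable --- contradicting your own claim that the manipulated piece ends in $[1-2\varepsilon,1-\varepsilon]$. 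The fix is the one in the adjacent proof you cite: keep $v_1'=1$ only on $[0,\frac{n-2}{n}]$ and place mass $\frac2n$ (density $\frac{2}{n\varepsilon}$) on $[1-2\varepsilon,1-\varepsilon]$. Then the first $n-2$ equal-division points stay at $\frac1n,\ldots,\frac{n-2}{n}$, while the $(n-1)$-th moves strictly right to $1-\frac{3\varepsilon}{2}$, which is still to the left of all of $a_2$'s marks; this is precisely what makes round $n-1$ strictly better for $a_1$ and every earlier round unchanged. A minor further point: $a_2$ need not literally be the \emph{last} agent to leave, since an unknown agent could be even more right-concentrated; what you actually need, and what the construction gives, is only that $a_2$ departs after $a_1$.
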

\begin{proof}
    It was proved in~\citet{BU2023Rat} that the mechanism is not $0$-known-agent safely-manipulable.
    The proof that it is $1$-known-agents safely-manipulable is similar to the proof for Ortega and Segal-Halevi's moving knife procedure, with the same $v_1,v_1'$ and $v_2$.
    It suffices to notice that the first $n-2$ equal-division-points are the same for $v_1$ and $v_1'$, where the last equal-division-point of $v_1'$ is to the right of $v_1$'s.
    Given that agent $a_2$ will always receive a piece after agent $a_1$, the same analysis in the previous proof can show that the manipulation is safe and profitable. 
\end{proof}

\subsection{Volatile Priority Cake Cutting: Proportional, Pareto-Optimal, with RAT-degree $n-1$}

\begin{toappendix}
    \subsection{Volatile Priority Cake Cutting: Proportional, Pareto-Optimal, with RAT-degree $n-1$}
\end{toappendix}
\label{sect:cake-Prop+PO}
In this section, we provide a mechanism with RAT-degree $n-1$ that always outputs proportional and Pareto-optimal allocations.
The mechanism uses some similar ideas as the one in \Cref{sect:indivisible-EF1-n-1}.

Here, the given priority function~$\Gamma$, still takes as input a valuation profile~$(v_1,\ldots,v_n)$ but outputs an order $\pi$ of the $n$ agents.
In the appendix, we formally adapt the volatility property (defined in~\Cref{sect:indivisible-EF1-n-1}) to the cake-cutting setting and prove that such function can be constructed.
\begin{toappendix}
    
\begin{definition}
A function $\Gamma$ (from the set of valuation profiles to the set of orders on agents) is called \emph{volatile} if for any two agents $a_i\neq a_j$ and any two orders $\pi$ and $\pi'$, any set of $n-2$ value density functions $\{v_k\}_{k\notin\{i,j\}}$, any value density function $\bar{v}_j$, and any two reported valuation profiles $v_i,v_i'$ of agent $a_i$ with $v_i\neq v_i'$, there exists a valuation function $v_j$ of agent $a_j$ such that
\begin{itemize}

    \item $v_j$ is a rescaled version of $\bar{v}_j$, i.e., there exists $\alpha$ such that $v_j(x)=\alpha\bar{v}_j(x)$ for all $x\in[0,1]$;
% \erel{If we normalize the valuations, maybe we do not need this rescaling?}\biaoshuai{My construction of $\Gamma$ depends on the highest value of value density functions.}
    
    \item $\Gamma$ outputs $\pi$ for the valuation profile $\{v_k\}_{k\notin\{i,j\}}\cup\{v_i\}\cup\{v_j\}$, and $\Gamma$ outputs $\pi'$ for the valuation profile $\{v_k\}_{k\notin\{i,j\}}\cup\{v_i'\}\cup\{v_j\}$.
\end{itemize}

In other words, a manipulation of agent $i$ from $v_i$ to $v_i'$ can affect the output of $\Gamma$ in any possible way (from any order $\pi$ to any order $\pi'$), depending on the report of agent $j$.
\end{definition}

\end{toappendix}

\begin{propositionrep}
    There exists a volatile priority rule $\Gamma$ for cake cutting.
\end{propositionrep}
\begin{proof}
    
    The function does the following.
    It first finds the maximum value among all the value density functions (overall all uniform segments): $v^\ast=\max_{i\in[n],\ell\in[m]}v_i(X_\ell)$.
    It then views $v^\ast$ as a binary string that encodes the following information:
    \begin{itemize}
        \item the index $i$ of an agent $a_i$,
        \item a non-negative integer $t$,
        \item two non-negative integers $a$ and $b$ that are at most $n!-1$.
        % \erel{Should these integers be in the range $0,\ldots,n!$ ?}\biaoshuai{It does not seem matter, as I am taking mod $n!$ at the end. But it does not harm to do it.}
    \end{itemize}
    
% \eden{I think that we also need $\ell$ here (as in the case of indivisible goods), since the '$t$-th bit that encodes agent $a_i$ ’s value density function' could be a constant. is this correct?}
    We append $0$'s as most significant bits to $v^\ast$ if the length of the binary string is not long enough to support the format of the encoding.
    If the encoding of $v^\ast$ is longer than the length enough for encoding the above-mentioned information, we take only the least significant bits in the amount required for the encoding.
% \erel{Can $v^*$ always represent such a long string? What if, incidentally, $v^*=1$?}\biaoshuai{I added the last sentence}

    The order $\pi$ is chosen in the following way.
    Firstly, we use an integer between $0$ and $n!-1$ to encode an order.
    Then, let $s$ be the $t$-th bit that encodes agent $a_i$'s value density function.
    The order is defined to be $as+b\bmod (n!)$.

We now prove that $\Gamma$ is volatile.
Suppose $v_i$ and $v_i'$ differ at their $t$-th bits, 
so that that the $t$-th bit of $v_i$ is $s$ and the $t$-th bit of $v_i'$ is $s'\neq s$.
We construct a number $v^*$ that encodes the index $i$, the integer $t$, and two integers $a,b$ 
such that $as+b\bmod (n!)$ encodes $\pi$ and $as'+b\bmod (n!)$ encodes $\pi'$.

Then, we construct $v_j$ by rescaling $\bar{v}_j$ such that the maximum value among all density functions is attained by $v_j$, and this number is exactly $v^{\ast}$, that is, $v^\ast=v_j(X_\ell)$ for some uniform segment $X_\ell$.
If the encoded $v^*$ is not large enough to be a maximum value, we enlarge it as needed by adding most significant bits.

By definition, $\Gamma$ returns $\pi$ when $a_i$ reports $v_i$ and returns $\pi'$ when $a_i$ reports $v_i'$.

% \erel{What if the encoded $v^*$ is too small, as there are other agents (not $i,j$) that have larger values in some segments?}\biaoshuai{If the encoding of $v^\ast$ is longer than the length enough for encoding the above-mentioned information, we can only take a substring (say, the least significant bits) that is just enough. I added another sentence at the end of the first paragraph.}
\end{proof}

Let $\propallocations$ be the set of all proportional allocations and let $\pi_i$ be the $i$-th agent in the order $\pi$.
Then, the mechanism outputs an allocation in $\propallocations$ in the following ``leximax'' way:
\begin{enumerate}
    \item the allocation maximizes agent $\pi_1$'s utility;
    \item subject to (1), the allocation maximizes agent $\pi_2$'s utility;
    \item subject to (1) and (2), the allocation maximizes agent $\pi_3$'s utility;
    \item $\cdots$
\end{enumerate}

By definition, the mechanism always outputs a proportional allocation. It is straightforward to check that it outputs a Pareto-efficient allocation. We further prove that:

\begin{theoremrep}\label{thm:volatile-cake-degree-n-1}
    The RAT-degree of the Volatile Priority Cake Cutting is $n-1$.
\end{theoremrep}

The proof is similar to the one for indivisible goods (\Cref{thm:volatile-good-alloc-degree-n-1}) and is provided in the appendix.
\begin{toappendix}
    We need the following proposition; it follows from known results on super-proportional cake-cutting \citep{dubins1961cut,woodall1986note}; for completeness we provide a proof in the appendix.

\begin{propositionrep} \label{prop:strictlymorethanproportional}
    Let $\propallocations$ be the set of all proportional allocations for the valuation profile $(v_1,\ldots,v_n)$.
    Let $(A_1,\ldots,A_n)$ be the allocation in $\propallocations$ that maximizes agent $a_i$'s utility.
    If there exists $j\in[n]\setminus\{i\}$ such that $v_i$ and $v_j$ are not identical up to scaling, then $V_i(A_i)>\frac1nV_i([0,1])$.
\end{propositionrep}
\begin{proof}
    We will explicitly construct a proportional allocation $(B_1,\ldots,B_n)$ where $V_i(B_i)>\frac1nV_i([0,1])$ if the pre-condition in the statement is satisfied.
    Notice that this will imply the proposition, as we are finding the allocation maximizing $a_i$'s utility.
    To construct such an allocation, we assume $v_i$ and $v_j$ are normalized without loss of generality (then $v_i\neq v_j$), and consider the equal division allocation where each uniform segment $X_t$ is evenly divided.
    This already guarantees that agent $a_i$ receives a value of $\frac1nV_i([0,1])$.
    Since $v_i$ and $v_j$ are normalized and $v_i\neq v_j$, there exist two uniform segments $X_{t_1}$ and $X_{t_2}$ such that $v_i(X_{t_1})>v_j(X_{t_1})$ and $v_i(X_{t_2})<v_j(X_{t_2})$.
    Agent $a_i$ and $a_j$ can then exchange parts of their allocations on $X_{t_1}$ and $X_{t_2}$ to improve the utility for both of them, which guarantees the resultant allocation is still proportional.
    For example, set $\varepsilon>0$ be a very small number.
    Agent $a_i$ can give a length of $\frac{\varepsilon}{v_i(X_{t_2})+v_j(X_{t_2})}$ from $X_{t_2}$ to agent $a_j$, in exchange of a length of $\frac{\varepsilon}{v_i(X_{t_1})+v_j(X_{t_1})}$ from $X_{t_1}$.
    This describes the allocation $(B_1,\ldots,B_n)$.
\end{proof}

We can now prove \Cref{thm:volatile-cake-degree-n-1}.
\begin{proof}[Proof of \Cref{thm:volatile-cake-degree-n-1}]
Consider an arbitrary agent $a_i$ with the true value density function $v_i$, and an arbitrary agent $a_j$ whose reported value density function is unknown to $a_i$.
Fix $n-2$ arbitrary value density function $\{v_k\}_{k\notin\{i,j\}}$ for the remaining $n-2$ agents.
Consider an arbitrary manipulation $v_i'\neq v_i$.

Choose a uniform segment $X_t$ with respect to $(v_1,\ldots,v_{j-1},v_{j+1},\ldots,v_n)$,
satisfying $v_i(X_t)>0$.
Choose a very small interval $E\subseteq X_t$, such that the value density function
$$\bar{v}_j=\left\{\begin{array}{ll}
    0 & \mbox{if }x\in E \\
    v_i(x) & \mbox{otherwise}
\end{array}\right.$$
is not a scaled version of some $v_k$ with $k\in[n]\setminus\{i,j\}$.
% \erel{I did not understand this sentence. Do you mean: the normalized $\bar{v}_j$ not equal to any normalized $v_k$?}\biaoshuai{yes, to *some* $v_k$ is sufficient}
Apply the volatility of $\Gamma$ to find a value density function $v_j$ for agent $a_j$ that rescales $\bar{v}_j$ such that
\begin{enumerate}
    \item when agent $a_i$ reports $v_i$, agent $a_i$ is the first in the order output by $\Gamma$;
    \item when agent $a_i$ reports $v_i'$, agent $a_j$ is the first in the order output by $\Gamma$.
\end{enumerate}

Let $(A_1,\ldots,A_n)$ and $(A_1',\ldots,A_n')$ be the output allocation for the profiles $\{v_k\}_{k\notin\{i,j\}}\cup\{v_i\}\cup\{v_j\}$ and $\{v_k\}_{k\notin\{i,j\}}\cup\{v_i'\}\cup\{v_j\}$ respectively.
Since $\bar{v}_j$ is not a scaled version of some $v_k$, its rescaled version $v_j$ is also different.
By Proposition~\ref{prop:strictlymorethanproportional}, $V_j(A_j')>\frac1nV_j([0,1])$,
as $a_j$ is the highest-priority agent when $a_i$ reports $v'_i$.
Let $D$ be some subset of $A_j'$ with $V_j(D)>0$ and $V_j(A_j'\setminus D)\geq\frac1nV_j([0,1])$, and consider the allocation $(A_1^+,\ldots,A_n^+)$ in which $D$ is moved from $a_j$ to $a_i$, that is,
\begin{itemize}
    \item for $k\notin\{i,j\}$, $A_k^+=A_k'$;
    \item $A_i^+=A_i'\cup D$;
    \item $A_j^+=A_j'\setminus D$.
\end{itemize}
It is clear by our construction that the new allocation is still proportional with respect to $\{v_k\}_{k\notin\{i,j\}}\cup\{v_i'\}\cup\{v_j\}$.
In addition, by the relation between $\bar{v}_j$ and $v_i$ (and thus the relation between $v_j$ and $v_i$), we have $V_i(D)>0$ based on agent $a_i$'s true value density function $v_i$.
Therefore, under agent $a_i$'s true valuation, $V_i(A_i^+)>V_i(A_i')$.

If the allocation $(A_1^+,\ldots,A_n^+)$ is not proportional under the profile $\{v_k\}_{k\notin\{i,j\}}\cup\{v_i\}\cup\{v_j\}$ (where $v_i'$ is changed to $v_i$), then the only agent for whom proportionality is violated must be agent $i$, that is,$V_i(A_i^+)<\frac1nV_i([0,1])$.
It then implies $V_i(A_i')<\frac1nV_i([0,1])$.
On the other hand, agent $a_i$ receives at least her proportional share when reporting truthfully her value density function $v_i$.
This already implies the manipulation is not safe.

If the allocation $(A_1^+,\ldots,A_n^+)$ is proportional under the profile $\{v_k\}_{k\notin\{i,j\}}\cup\{v_i\}\cup\{v_j\}$, then it is in $\propallocations$.
Since agent $a_i$ is the first agent in the order when reporting $v_i$ truthfully, we have $V_i(A_i)\geq V_i(A_i^+)$, which further implies $V_i(A_i)>V_i(A_i')$.
Again, the manipulation is not safe.
\end{proof}

\end{toappendix}
Finally, we analyze the run-time of our mechanism.

\begin{theoremrep}
The Volatile Priority Cake Cutting mechanism can be computed in polynomial time.
\end{theoremrep}

\begin{proofsketch}
    Once $\pi$ is determined, the allocation is computed by solving a sequence of linear programs. Let $x_{it}$ be the length of the $t$-th uniform segment allocated to agent $a_i$.
    Then an agent $a_i$'s utility is a linear expression $\sum_{t=1}^mv_{i}(X_t)x_{it}$, and requiring an agent's utility is at least some value (e.g., her proportional share) is a linear constraint.
\end{proofsketch}

\begin{proof}
We first note that $\Gamma$ can be computed in polynomial time.
Finding $v^\ast$ and reading the information of $i,t,a$, and $b$ can be performed in linear time, as it mostly only requires reading the input of the instance.
In particular, the lengths of $a$ and $b$ are both less than the input length, so $as+b$ is of at most linear length and can also be computed in linear time.
Finally, the length of $n!$ is $\Theta(n\log n)$, so $as+b \bmod (n!)$ can be computed in polynomial time.
We conclude that $\Gamma$ can be computed in polynomial time.

We next show that $\Psi$ can be computed by solving linear programs.
Let $x_{it}$ be the length of the $t$-th uniform segment allocated to agent $a_i$.
Then an agent $a_i$'s utility is a linear expression $\sum_{t=1}^mv_{i}(X_t)x_{it}$, and requiring an agent's utility is at least some value (e.g., her proportional share) is a linear constraint.
We can use a linear program to find the maximum possible utility $u_{\pi_1}^\ast$ for agent $\pi_1$ among all proportional allocations.
In the second iteration, we write the constraint $\sum_{t=1}^mv_{\pi_1}(X_t)x_{it}\geq u_{\pi_1}^\ast$ for agent $\pi_1$, the proportionality constraints for the $n-2$ agents $[n]\setminus\{\pi_1,\pi_2\}$, and maximize agent $\pi_2$'s utility.
This can be done by another linear program and gives us the maximum possible utility $u_{\pi_2}^\ast$ for agent $\pi_2$.
We can iteratively do this to figure out all of $u_{\pi_1}^\ast,u_{\pi_2}^\ast,\ldots,u_{\pi_n}^\ast$ by linear programs. 
\end{proof}
% \newpage
\section{Single-Winner Ranked Voting}\label{sec:single-winner-voting}
We consider $n$ voters (the agents) who need to elect one winner from a set $C$ of $m$ \emph{candidates}.
The agents' preferences are given by strict linear orderings $\succ_i$ over the candidates.

When there are only two candidates, the majority rules and its variants (weighted majority rules)  are truthful.
With three or more candidates, the   Gibbard--Satterthwaite Theorem 
\cite{gibbard1973manipulation,satterthwaite1975strategy}
implies that the only truthful rules are dictatorships. 
Our goal is to find non-dictatorial rules with a high RAT-degree.

\newcommand{\scorevector}{\mathbf{s}}
\newcommand{\score}{\operatorname{score}}

We focus on \emph{positional voting rules}, which parameterized by a vector of scores, $\scorevector=(s_1,\ldots,s_m)$, where $s_1\leq \cdots \leq  s_m$ and $s_1 < s_m$.
Each voter reports his entire ranking of the $m$ candidates. Each such ranking is translated to an assignment of a score to each candidate: the lowest-ranked candidate is given a score of $s_1$, the second-lowest candidate is given $s_2$, etc., and the highest-ranked candidate is given a score of $s_m$. 
The total score of each candidate is the sum of scores he received from the rankings of all $n$ voters. The winner is the candidate with the highest total score. 

Formally, for any subset $N'\subseteq N$ and any candidate $c\in C$, we denote by $\score_{N'}(c)$ the total score that $c$ receives from the votes of the agents in $N'$. Then the winner is $\arg\max_{c\in C}\score_N(c)$. If there are several agents with the same maximum score, then the outcome is considered a tie.

Common special cases of positional voting are \emph{plurality voting}, in which $\scorevector = (0,0,0,\ldots,0,1)$, and 
\emph{anti-plurality voting}, in which $\scorevector = (0,1,1,\ldots,1,1)$.
By the Gibbard--Satterthwaite theorem, all positional voting rules are manipulable, so their RAT-degree is smaller than $n$.
But, as we will show next, some positional rules have a higher RAT-degree than others.

\paragraph{Results.} 
\rmark{\citet{conitzer2011dominating} showed that all positional voting rules are immune to dominating manipulations when $n\geq 6(m-2)$ --- in our terms, that they have positive RAT-degree. We extend their results for the case where $n\geq 2m$ by proving} that all positional voting rules have an RAT-degree between $\approx n/m$ and $\approx n/2$. These bounds are almost tight: the 
upper bound is attained by plurality and the lower bound is attained by anti-plurality (up to small additive constants).

These results raise the question of whether some other, non-positional voting rules have RAT-degrees substantially higher than $n/2$.
Using our volatile priority approach (see \Cref{sec:indivisible-good-aloc}), 
we could choose a ``dictator'', and take her first choice.
This deterministic mechanism has RAT-degree $n-1$, any manipulation risks losing the chance to be the dictator.
However, besides the fact that this is an unnatural mechanism, it suffers from other problems such as the \emph{no-show paradox} (a participating voter might affect the selection rule in a way that will make another agent a dictator, which might be worse than not participating at all).

Our main open problem is therefore to devise natural voting rules with a high RAT-degree.

\begin{open}\label{open:voting-1}
Does there exist a non-dictatorial voting rule that satisfies the participation criterion (i.e. does not suffer from the no-show paradox),  with RAT-degree larger than $\ceil{n/2}+1$? 
\end{open}

\rmark{\citet{conitzer2011dominating} further proved, in our terms, that any Condorcet-consistent voting rule has a positive RAT-degree. This raises the following question

\begin{open}\label{open:voting-2}
    What are the exact RAT-degrees of known Condorcet-consistent rules?
\end{open}

\citet{Moulin1982vetoPower} suggested the notion of \emph{veto power} for voting. We provide further details on the relationship between the two notions in \Cref{sec:voting-RAT-vs-veto-power}.

}

\begin{table}[h]
    \centering
    \begin{tabular}{|ll|c|c|}
    \hline 
    \multicolumn{2}{|c|}{\multirow{2}{*}{Mechanism}} & \multicolumn{2}{c|}{RAT-Degree} \\
    \cline{3-4}
    \multicolumn{2}{|c|}{}& Lower Bound & Upper Bound\\
    \hhline{|==|=|=|}
    Positional Voting Rules & (assuming $n\geq 2m$) & $\floor{(n+1)/m}-1$ & $ \ceil{n/2} +1$\\
    \hline 
    Plurality &(assuming $n \geq 5$) & $\floor{n/2} +1$ & $ \ceil{n/2} +1$ \\
    \hline 
    Anti-Plurality &(assuming $n \geq m^2$)& $\floor{(n+1)/m}-1$ & $\floor{n/m} +1$ \\
    \hline 
    \end{tabular}
    \caption{\centering Single-Winner Ranked Voting with $m \geq 3$: Summary of Results. }
    \label{tab:single-winner-sum}
\end{table}

%\eden{if we have time: maybe to change 'Alice' to 'voter $v_1$'} \erel{I think ``agent'' is better as it is more consistent with the rest of the paper.}

Throughout the analysis, we consider a specific agent Alice, who looks for a safe profitable manipulation. Her true ranking is $c_m \succ_A \cdots \succ_A c_1$.
We assume that, for any $j>i$, Alice strictly prefers a victory of $c_j$ to a tie between $c_j$ and $c_i$, and strictly prefers this tie to a victory of $c_i$.%
\footnote{We could also assume that ties are broken at random, but this would require us to define preferences on lotteries, which we prefer to avoid in this paper.}

\subsection{Positional Voting Rules: General Bounds}\label{sec:pos-voting-general-bounds}

\begin{toappendix}
    \subsection{Positional Voting Rules: General Bounds}
\end{toappendix}

In the upcoming lemmas, we identify the manipulations that are safe and profitable for Alice under various conditions on the score vector $\scorevector$. We assume throughout that there are $m\geq 3$ candidates, and that $n$ is sufficiently large.
We allow an agent to abstain, which means that his vote gives the same score to all candidates.%
\footnote{
We need the option to abstain in order to avoid having different constructions for even $n$ and odd $n$; see the proofs for details.
}

The first lemma implies a lower bound of $\approx n/m$ on the RAT-degree of positional voting rules.
\begin{lemmarep}
\label{lem:lower-positional}
In any positional voting rule for $m\geq 3$ candidates,
if the number of known agents is at most $(n+1)/m - 2$,
then Alice has no safe profitable manipulation.
\end{lemmarep}

\begin{proofsketch}
For any combination of rankings of the known agents, it is possible that the unknown agents vote in a way that balances out the votes of the known agents, such that all agents have almost the same score; if Alice is truthful, there is a tie between two candidates, and if she manipulates, the worse of these candidates win.
\end{proofsketch}

\begin{proof}
If a manipulation does not change any score, then it is clearly not profitable. So suppose the manipulation changes the score of some candidates. Let $c_i$ be a candidate whose score increases by the largest amount by the manipulation (note that it cannot be Alice's top candidate). If there are several such candidates, choose the one that is ranked highest by Alice. Let $c_j$ be some candidate that Alice prefers to $c_i$ \rmark{ whose score does not increase by the same amount.}

To show that the manipulation is not safe, suppose the unknown agents vote as follows.
\begin{itemize}
\item For every known agent $a_i$, some $m-1$ unknown agents vote with ``rotated'' variants of $a_i$'s ranking (e.g. if $a_i$ ranks $c_1\succ c_2\succ c_3 \succ c_4$, then three unknown agents rank $c_2\succ c_3 \succ c_4 \succ c_1$,
$c_3 \succ c_4 \succ c_1\succ c_2$ 
and 
$c_4 \succ c_1\succ c_2\succ c_3$).
\item Additional $m-1$ unknown agents vote with rotated variants of Alice's true ranking.

\item Additional $\ceil{m/2-1}$ unknown agents rank 
rank $c_i$ first and $c_j$ second, 
and additional $\ceil{m/2-1}$ unknown agents rank $c_j$ first and $c_i$ second. These $2\ceil{m/2-1}$ unknown agents rank the remaining $m-2$ candidates such that each candidate appears last at least once (note that $2\ceil{m/2-1}\geq m-2$ so this is possible).
\item The other unknown agents, if any, abstain.
\end{itemize}
When Alice is truthful, the scores are as follows:
\begin{align*}
\score_N(c_i) = \score_N(c_j) &= 
(k+1)\cdot \sum_{j=1}^m s_j + 
(\ceil{m/2-1})(s_m+s_{m-1})
\\
\score(c_{\ell}) &= 
(k+1)\cdot \sum_{j=1}^n s_j + 
S_{\ell}
&& \forall \ell\neq i,j
\end{align*}
where $S_{\ell}$ is the sum of some $2\ceil{m/2-1}$ scores; all these scores are at most $s_{m-1}$, and some of them are equal to $s_1$. As $s_1<s_m$ for any score vector, 
$S_{\ell} < (\ceil{m/2-1})(s_m+s_{m-1})$ for all $\ell\neq i,j$. Hence, the two candidates $c_i$ and $c_j$ both have a score strictly higher than every other candidate.

When Alice manipulates, the score of $c_i$ increases by the largest amount, so $c_i$ wins. Since $c_j \succ_A c_i$, this outcome is worse for Alice than the tie.

The number of agents required is 
\begin{align*}
k + (m-1)k + (m-1) + 1 + 2\ceil{m/2-1}
\leq 
m (k+2) - 1.
\end{align*}
The condition on $k$ in the lemma ensures that $n$ is at least as large.
\end{proof}

We now prove an upper bound of $\approx n/2$ on the RAT-degree. We need several lemmas.

%The following lemma is a special case of \Cref{lem:st1>st}; we prove it explicitly as a warm-up. 
\begin{lemmarep}
\label{lem:s2>s1}
Let $m\geq 3$ and $n\geq 2m$.
If $s_2 > s_1$
and there are $k\geq \ceil{n/2}+1$ known agents,
then
switching the bottom two candidates ($c_2$ and $c_1$) may be a safe profitable manipulation for Alice.
\end{lemmarep}
\begin{proofsketch}
For some votes by the known agents,
$c_1$ has no chance to win, so the worst candidate for Alice that could win is $c_2$. Therefore, switching $c_1$ and $c_2$ cannot harm, but may help a better candidate win over $c_2$.
\end{proofsketch}

% alternative proof - using n-k

\begin{proof}
Suppose there is a subset $K$ of $\ceil{n/2}+1$ known agents, who vote as follows:
\begin{itemize}
\item $\floor{n/2}-1$ known agents rank $c_2 \succ c_m \succ \cdots \succ c_1$.
\item Two known agents rank $c_m \succ c_2 \succ
\cdots \succ c_1$. 
\item In case $n$ is odd, the remaining known agent abstains.
\end{itemize}
We first show that $c_1$ cannot win. To this end, we show that the difference in scores between $c_2$ and $c_1$ is always strictly positive.
\begin{itemize}
\item The difference in scores given by the known agents is 
\begin{align*}
\score_K(c_2)-\score_K(c_1) =
&
(\floor{n/2}-1)(s_m-s_1) 
+ 2(s_{m-1}-s_1).
\end{align*}
\item There are
$\floor{n/2}-1$ agents not in $K$ (including Alice).
These agents can reduce the score-difference by at most 
$(\floor{n/2}-1)(s_m-s_1)$.
Therefore, 
\begin{align*}
\score_N(c_2)-\score_N(c_1) \geq 2(s_{m-1}-s_1),
\end{align*}
which is positive 
by the assumption $s_2>s_1$.
So $c_1$ has no chance to win or even tie.
\end{itemize}
Therefore, switching $c_2$ and $c_1$ can never harm Alice --- the manipulation is safe.

Next, we show that the manipulation can help $c_m$ win, when the agents not in $K$ vote as follows:
\begin{itemize}
\item $\floor{n/2}-3$ unknown agents rank $c_m\succ c_2\succ \cdots $,
where each candidate except $c_1,c_2,c_m$ is ranked last by at least one voter (here we use the assumption $n\geq 2m$).
\item One unknown agent ranks 
$c_2\succ \cdots \succ c_m \succ c_1$;
\item Alice votes truthfully $c_m\succ  \cdots \succ c_2 \succ c_1$.  
\end{itemize}
Then,
\begin{align*}
\score_N(c_2) - \score_N(c_m)
=&
(\floor{n/2}-1)(s_{m}-s_{m-1}) 
+ 2(s_{m-1}-s_{m})
\\
&+
(\floor{n/2}-3)(s_{m-1}-s_{m}) 
+ (s_m-s_2)
+ (s_2-s_m)
\\
=&
0.
\end{align*}
Moreover, for any $j\not\in\{1,2,m\}$, the score of $c_j$ is even lower (here we use the assumption that $c_j$ is ranked last by at least one unknown agent):
\begin{align*}
\score_N(c_2) - \score_N(c_j)
\geq &
(\floor{n/2}-1)(s_{m}-s_{m-2}) 
+ 2(s_{m-1}-s_{m-2})
\\
&+
(\floor{n/2}-4)(s_{m-1}-s_{m-2}) 
+ (s_{m-1}-s_1)
+ (s_m-s_{m-1})
+ (s_2-s_{m-1})
\\
\geq & (s_{m-1}-s_{1})
+ (s_2-s_{m-1})
\\
= & s_2 - s_1,
\end{align*}
which is positive by the lemma assumption.
Therefore, when Alice is truthful, the outcome is a tie between $c_m$ and $c_2$.

If Alice switches $c_1$ and $c_2$, then the score of $c_2$ decreases by $s_2-s_1$, which is positive by the lemma assumption, and the scores of all other candidates except $c_1$ do not change. So $c_m$ wins, which is better for Alice than a tie.
Therefore, the manipulation is profitable.
\end{proof}

\Cref{lem:s2>s1} can be generalized as follows. 

\begin{lemmarep}
\label{lem:st1>st}
Let $m\geq 3$ and $n\geq 2m$.
For every integer $t \in \{1,\ldots, m-2\}$,
if $s_{t+1} > s_t = \cdots = s_1$,
and there are $k\geq \ceil{n/2}+1$ known agents,
then switching $c_{t+1}$ and $c_t$ may be a safe profitable manipulation.
\end{lemmarep}
%Note that \Cref{lem:s2>s1} is the special case $t=1$.

\begin{proofsketch}
For some votes by the known agents,
all candidates $c_1,\ldots,c_t$ have no chance to win, 
so the worst candidate for Alice that could win is $c_{t+1}$.
Therefore, switching $c_t$ and $c_{t+1}$ cannot harm, but can help better candidates win over $c_{t+1}$.
\end{proofsketch}
\begin{proof}
Suppose there is a subset $K$ of $\ceil{n/2}+1$ known agents, who vote as follows:
\begin{itemize}
\item $\floor{n/2}-1$ known agents rank $c_{t+1} \succ c_m$ first and rank $c_t \succ \cdots \succ c_1$ last.
\item Two known agents rank $c_m \succ c_{t+1}$ first and rank $c_t \succ \cdots \succ c_1$ last.
\item In case $n$ is odd, the remaining known agent abstains.
\end{itemize}
We first show that the $t$ worst candidates for Alice ($c_1,\ldots, c_t$) cannot win. 
Note that, by the lemma assumption $s_t = \cdots = s_1$, all these candidates receive exactly the same score by all known agents. We show that the difference in scores between $c_{t+1}$ and $c_t$ (and hence all $t$ worst candidates) is always strictly positive.
\begin{itemize}
\item The difference in scores given by the known agents is 
\begin{align*}
\score_K(c_{t+1})-\score_K(c_t) =
&
(\floor{n/2}-1)(s_m-s_1) 
+ (s_{m-1}-s_1).
\end{align*}
\item There are
$\floor{n/2}-1$ agents not in $K$ (including Alice).
These agents can reduce the score-difference by at most 
$(\floor{n/2}-1)(s_m-s_1)$.
Therefore, 
\begin{align*}
\score_N(c_{t+1})-\score_N(c_t) \geq (s_{m-1}-s_1),
\end{align*}
which is positive 
by the assumption $m-2 \geq t$ and $s_{t+1}>s_t$.
So no candidate in $c_1,\ldots,c_t$ has a chance to win or even tie.
\end{itemize}
Therefore, switching $c_{t+1}$ and $c_t$ can never harm Alice --- the manipulation is safe.

Next, we show that the manipulation can help $c_m$ win. We compute the score-difference between $c_m$ and the other candidates with and without the manipulation. 

Suppose that the agents not in $K$ vote as follows:
\begin{itemize}
\item $\floor{n/2}-3$ unknown agents rank $c_m\succ c_{t+1}\succ \cdots $,
where each candidate in $c_{t+2},\ldots,c_{m-1}$ is ranked last by at least one voter (here we use the assumption $n\geq 2m$).
\item One unknown agent ranks 
$c_{t+1}\succ \cdots \succ c_m \succ c_t \succ \cdots \succ c_1$;
\item Alice votes truthfully $c_m\succ  \cdots \succ c_{t+1} \succ  c_t \succ \cdots \succ c_1$.  
\end{itemize}
Then,
\begin{align*}
\score_N(c_{t+1}) - \score_N(c_m)
=&
(\floor{n/2}-1)(s_{m}-s_{m-1}) 
+ 2(s_{m-1}-s_{m})
\\
&+
(\floor{n/2}-3)(s_{m-1}-s_{m}) 
+ (s_m-s_{t+1})
+ (s_{t+1}-s_m)
\\
=&
0.
\end{align*}
Moreover, for any $j\in\{t+2,\ldots,m-1\}$, the score of $c_j$ is even lower (here we use the assumption that $c_j$ is ranked last by at least one unknown agent):
\begin{align*}
\score_N(c_{t+1}) - \score_N(c_j)
\geq &
(\floor{n/2}-1)(s_{m}-s_{m-2}) 
+ 2(s_{m-1}-s_{m-2})
\\
&+
(\floor{n/2}-4)(s_{m-1}-s_{m-2}) 
+ (s_{m-1}-s_1)
+ (s_m-s_{m-1})
+ (s_{t+1}-s_{m-1})
\\
\geq & (s_{m-1}-s_{1})
+ (s_{t+1}-s_{m-1})
\\
= & s_{t+1} - s_1,
\end{align*}
which is positive by the lemma assumption.
Therefore, when Alice is truthful, the outcome is a tie between $c_m$ and $c_{t+1}$.

If Alice switches $c_t$ and $c_{t+1}$, then the score of $c_{t+1}$ decreases by $s_{t+1}-s_t$, which is positive by the lemma assumption, and the scores of all other candidates except $c_t$ do not change. As $c_t$ cannot win, $c_m$ wins, which is better for Alice than a tie.
Therefore, the manipulation is profitable.
\end{proof}

\begin{lemmarep}
\label{lem:sm>sm1}
Let $m\geq 3$ and $n\geq 4$.
If $s_m > s_{m-1}$
and there are $k\geq \ceil{n/2}+1$ known agents,
then
switching the top two candidates ($c_m$ and $c_{m-1}$) may be a safe profitable manipulation for Alice.
\end{lemmarep}
\begin{proofsketch}
For some votes by the known agents, the manipulation is safe since $c_m$ has no chance to win, and it is profitable as it may help $c_{m-1}$ win over worse candidates.
\end{proofsketch}
\begin{proof}
Suppose there is a subset $K$ of $\ceil{n/2}+1$ known agents, who vote as follows:
\begin{itemize}
\item $\floor{n/2}-1$ known agents rank $c_{m-2} \succ c_{m-1} \succ \cdots \succ c_m$.
\item One known agent ranks $c_{m-2} \succ c_m \succ
c_{m-1} \succ \cdots $. 
\item One known agent ranks $c_{m-1} \succ c_{m-2} \succ \cdots \succ c_m$. 
\item In case $n$ is odd, the remaining known agent abstains.
\end{itemize}
We first show that $c_m$ cannot win. To this end, we show that the difference in scores between $c_{m-2}$ and $c_m$ is always strictly positive.
\begin{itemize}
\item The difference in scores given by the known agents is 
\begin{align*}
\score_K(c_{m-2})-\score_K(c_m) =
&
(\floor{n/2}-1)(s_m-s_1) 
+ (s_m-s_{m-1})
+ (s_{m-1}-s_1).
%+ (s_{m-1}-s_1) 
%+ (s_m-s_2)
\\
=&
(\floor{n/2})(s_m-s_1) 
%(s_m-s_1)
%+ 
%(s_{m-1}-s_2).
\end{align*}
\item There are
$\floor{n/2}-1$ agents not in $K$ (including Alice).
These agents can reduce the score-difference by at most 
$(\floor{n/2}-1)(s_m-s_1)$.
Therefore, 
\begin{align*}
\score_N(c_{m-2})-\score_N(c_m) \geq (s_m-s_1),
\end{align*}
which is positive for any score vector.
So $c_m$ has no chance to win or even tie.
\end{itemize}
Therefore, switching $c_m$ and $c_{m-1}$ can never harm Alice --- the manipulation is safe.

Next, we show that the manipulation can help $c_{m-1}$ win. We compute the score-difference between $c_{m-1}$ and the other candidates with and without the manipulation. 

Suppose that the agents not in $K$ vote as follows:
\begin{itemize}
\item the $\floor{n/2}-2$ unknown agents%
\footnote{Here we use the assumption $n\geq 4$.}
rank $c_{m-1}\succ c_{m-2}\succ \cdots $.
\item Alice votes truthfully $c_m\succ c_{m-1}\succ c_{m-2} \cdots \succ c_1$.
\end{itemize}
Then,
\begin{align*}
\score_N(c_{m-2}) - \score_N(c_{m-1})
=&
(\floor{n/2}-1)(s_{m}-s_{m-1}) 
+ (s_m-s_{m-2})
+ (s_{m-1}-s_{m})
\\
&+
(\floor{n/2}-2)(s_{m-1}-s_{m}) 
+ (s_{m-2}-s_{m-1})
\\
=&
(s_{m}-s_{m-1}),
\end{align*}
which is positive by the assumption $s_m>s_{m-1}$.
The candidates $c_{j<m-2}$ are ranked even lower than $c_{m-1}$ by all agents. Therefore the winner is $c_{m-2}$.

If Alice switches $c_{m-1}$ and $c_m$, then the score of $c_{m-1}$ increases by $s_m-s_{m-1}$ and the scores of all other candidates except $c_m$ do not change. Therefore, 
$\score_N(c_{m-2}) - \score_N(c_{m-1})$ becomes $0$, and there is a tie between $c_{m-2}$ and $c_{m-1}$, which is better for Alice by assumption.
Therefore, the manipulation is profitable.

\end{proof}

Combining the lemmas leads to the following bounds on the RAT-degree:
\begin{theorem}
\label{thm:positional-bounds}
For any positional voting rule with $m\geq 3$ candidates:

(a) The RAT-degree is at least $\floor{(n+1)/m}-1$;

(b) When $n\geq 2m$, the RAT-degree is at most $\ceil{n/2}+1$.
\end{theorem}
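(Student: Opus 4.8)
The plan is to derive both bounds directly from the four lemmas already proved in this subsection, handling part~(a) and part~(b) separately and in each case appealing to the definition of the RAT-degree as the smallest number of known agents that enables a safe profitable manipulation. Throughout I rely on the anonymity and neutrality of positional rules: these symmetries let me fix the potential manipulator to be Alice with her standing ranking $c_m \succ_A \cdots \succ_A c_1$ without loss of generality, since any agent with any true ranking can be relabeled into this form.

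For part~(a), I would show that the mechanism is \emph{not} $k$-known-agents safely-manipulable for every integer $k \le \floor{(n+1)/m}-2$. This is exactly \Cref{lem:lower-positional}: because $k$ is an integer, the hypothesis $k \le (n+1)/m - 2$ is equivalent to $k \le \floor{(n+1)/m}-2$, and the lemma guarantees that Alice --- and hence, by symmetry, every agent --- has no safe profitable manipulation for any choice of known agents and any reports they make. By the definition of RAT-degree as the minimum $k$ admitting such a manipulation, the degree is therefore at least $\floor{(n+1)/m}-1$. (If this quantity is nonpositive the bound is vacuous, so no separate degenerate case is needed.)

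For part~(b) it suffices to exhibit a single safe profitable manipulation using $\ceil{n/2}+1$ known agents, since this shows the mechanism is $(\ceil{n/2}+1)$-known-agents safely-manipulable and hence the minimal such $k$ is at most $\ceil{n/2}+1$. Here I would split into cases according to where the score vector first strictly increases. Every score vector satisfies $s_1 \le \cdots \le s_m$ with $s_1 < s_m$, so at least one strict jump exists. If the top two scores differ, $s_m > s_{m-1}$, I invoke \Cref{lem:sm>sm1} (whose hypothesis $n \ge 4$ is implied by $n \ge 2m \ge 6$). Otherwise $s_m = s_{m-1}$, and I let $t$ be the smallest index with $s_{t+1} > s_t$; then $s_1 = \cdots = s_t < s_{t+1}$, and since $s_m = s_{m-1}$ we necessarily have $t \le m-2$, so \Cref{lem:st1>st} (using $n \ge 2m$) supplies the manipulation. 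The case $t=1$ is just \Cref{lem:s2>s1}, so no further subcase is required. These two cases are exhaustive, giving part~(b).

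Since all the combinatorial construction has been pushed into the lemmas, I do not expect a genuine obstacle. The main point requiring care is purely bookkeeping: confirming that the first-jump index $t$ lands in the admissible range $\{1,\ldots,m-2\}$ precisely when $s_m = s_{m-1}$, and verifying that the size-of-$n$ hypotheses of all invoked lemmas follow from $n \ge 2m$ and $m \ge 3$. The only other subtlety is converting the non-integer thresholds $(n+1)/m - 2$ and $\ceil{n/2}+1$ into the stated floor/ceiling expressions and aligning them correctly with the ``minimum $k$'' formulation of the RAT-degree.
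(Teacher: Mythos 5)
Your proposal is correct and follows essentially the same route as the paper: part~(a) is read off directly from \Cref{lem:lower-positional}, and part~(b) is obtained by locating the first strict jump in the score vector and dispatching to \Cref{lem:st1>st} when that jump occurs at some $t\le m-2$ and to \Cref{lem:sm>sm1} when it occurs at $t=m-1$ (your phrasing of the case split via whether $s_m>s_{m-1}$ is an equivalent, and in fact slightly cleaner, way of organizing the same exhaustive dichotomy). The bookkeeping points you flag — the integrality conversion for the lower bound and the verification that $n\ge 2m$, $m\ge 3$ imply the size hypotheses of the invoked lemmas — all check out.
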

\begin{proof}
The lower bound follows immediately from \Cref{lem:lower-positional}.

For the upper bound, consider a positional voting rule with score-vector $\scorevector$. Let $t \in \{1,\ldots,m-1\}$ be the smallest index for which $s_{t+1} > s_t$ (there must be such an index by definition of a score-vector).

If $t\leq m+2$, then \Cref{lem:st1>st} implies that, for some votes by the $\ceil{n/2}+1$ known agents, switching $c_{t+1}$ and $c_t$ may be a safe and profitable manipulation for Alice.

Otherwise, $t=m-1$, and \Cref{lem:sm>sm1} implies the same.

In all cases, Alice has a safe profitable manipulation.
\end{proof}

\subsection{Plurality and Anti-plurality}
\begin{toappendix}
    \subsection{Plurality and Anti-plurality}
\end{toappendix}

We now show that the bounds of \Cref{thm:positional-bounds} are tight up to small additive constants.

We first show that the upper bound of $\approx n/2$ is attained by the \emph{plurality voting rule}, which is the positional voting rule with score-vector $(0,0,0,\ldots,0,1)$.

\begin{lemmarep}
\label{lem:lower-plurality}
\er{In the plurality voting rule with $n\geq 5$ agents,
if the number of known agents is at most $n/2$, then Alice has no safe profitable manipulation.
}
\end{lemmarep}
\begin{proofsketch}
When there are at most $n/2$ known agents, there are at least $n/2-1$ unknown agents. For some votes of these unknown agents, the outcome when Alice votes truthfully is a tie between Alice's top candidate and another candidate. But when Alice manipulates, the other candidate wins.
\end{proofsketch}

\begin{proof}
%For any potential manipulation by Alice we have to prove that, for any set $K$ of $n/2$ agents and any combination of votes by the agents of $K$, %either (1) the manipulation is not profitable (for any preference profile for the $(n/2-1)$ unknown agents, Alice weakly prefers to tell the truth); or (2) the manipulation is not safe (there exists a preference profile for the unknown agents such that Alice strictly prefers to tell the truth).

\newcommand{\aFav}{c_m}
\newcommand{\aAlt}{c^A_{alt}}
\newcommand{\kAlt}{c^K_{alt}}

If a manipulation does not involve Alice's top candidate $c_m$, then it does not affect the outcome and cannot be profitable. So let us consider a manipulation in which Alice ranks another candidate $\aAlt\neq c_m$ at the top. We show that the manipulation is not safe.

Note that there are $n-k-1$ unknown agents; the lemma condition implies $n-k-1\geq n-n/2-1 = n/2-1 \geq k-1$.

Let $\displaystyle \kAlt = \argmax_{j \in [m-1]} \score_K(c_j)$ denote the candidate with the highest number of votes among the known agents, except Alice's top candidate ($c_m$).
Consider the following two cases.

\paragraph{\underline{Case 1:} $\score_K(\kAlt) = 0$.} Since $\kAlt$ is a candidate who got the maximum number of votes from $K$ except $\aFav$, this implies that all $k$ known agents either vote for $\aFav$ or abstain.

Suppose that some $k-1$ unknown agents vote for $\aAlt$ or abstain, such that the score-difference $\score(c_m)-\score(\aAlt) = 1$ (if there are additional agents, they abstain).
Then, when Alice is truthful, her favorite candidate, $\aFav$ wins, as $\score_N(c_m) -\score_N(\aAlt) = 2$ and the scores of all other candidates are $0$. 
But when Alice manipulates and votes for $\aAlt$, the outcome is a tie between $\aFav$ and $\aAlt$, which is worse for Alice.

\paragraph{\underline{Case 2:} $\score_K(\kAlt) \geq 1$.}
Then again the manipulations not safe, as it is possible that the unknown agents vote as follows: 
\begin{itemize}
\item Some $\score_K(\aFav)$ agents vote for $\kAlt$;
\item Some 
$\score_K(\kAlt) -1$ agents vote for $c_m$. 

This is possible as both values are non-negative and $\score_K(\aFav) + \score_K(\kAlt) \leq \sum_{j =1}^m \score_K(c_j) \leq  k$, so $\score_K(\aFav) + \left(\score_K(\kAlt)-1\right) \leq k-1\leq $ the number of unknown agents.
\item 
The remaining unknown agents (if any) are split evenly between $c_m$ and $\kAlt$; if the number of remaining unknown agents is odd, then the extra agent abstains. %votes for $c_m$.
\end{itemize}
We now prove that the manipulation is harmful for Alice.

Denote $N' := N\setminus \{Alice\} = $ all agents except Alice. Then
\begin{align*}
\score_{N'}(\kAlt) &= \score_K(\aFav) + \score_K(\kAlt);
\\
\score_{N'}(\aFav) &= \score_K(\aFav) + \score_K(\kAlt) - 1.
\end{align*}
so the score-difference is exactly $1$.

Also, as $\kAlt$ has the largest score among the known agents, this still holds with the unknown agents, as all of them vote for either $\kAlt$ or $\aFav$.

We claim that $\score_{N'}(\kAlt)$ is strictly higher than that of all other candidates. Indeed:
\begin{itemize}
\item If $|R|\geq 2$, then $\kAlt$ receives at least one vote from an unknown agent, whereas all other candidates except $\aFav$ receive none.

\item Otherwise, $|R|\leq 1$, which means that $\score_K(\aFav) + \score_K(\kAlt) - 1 \geq n-k-2$, so
$\score_{N'}(\kAlt) \geq n-k-1\geq n/2-1$, which is larger than $1$ since $n\geq 5$. 
On the other hand, 
$\score_K(\aFav) + \score_K(\kAlt) - 1 \geq k-1$, which  implies that all other candidates together received at most one vote from all known agents.
\end{itemize}
Now, if Alice is truthful, the outcome is a tie between $\kAlt$ and $\aFav$, but when she manipulates and removes her vote from $c_m$, the outcome is a victory for $\kAlt$, which is worse for her.

Thus, in all cases, Alice does not have a safe profitable manipulation.
\end{proof}

Combining \Cref{lem:sm>sm1} and 
\Cref{lem:lower-plurality} gives an almost exact RAT-degree of plurality voting.
\begin{theorem}
With $m\geq 3$ candidates and $n\geq 5$ agents, the RAT-degree of plurality voting is $n/2+1$ when $n$ is even;
it is between $\floor{n/2}+1$ and  $\ceil{n/2}+1$ when $n$ is odd.
\end{theorem}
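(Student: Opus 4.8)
The plan is to pin down the RAT-degree by sandwiching it between matching (up to parity) lower and upper bounds, each of which is already essentially delivered by one of the two preceding lemmas. The only real work is to translate those lemmas into statements about the \emph{minimum} number of known agents enabling a safe profitable manipulation, and then to split the conclusion according to the parity of $n$.

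First I would establish the upper bound. Plurality is the positional rule with score vector $(0,\ldots,0,1)$, so $s_m = 1 > 0 = s_{m-1}$; thus the hypothesis $s_m > s_{m-1}$ of \Cref{lem:sm>sm1} is satisfied, and since $n \geq 5 > 4$ the lemma applies. It exhibits, for $k = \ceil{n/2}+1$ known agents, a report of those agents under which switching Alice's top two candidates $c_m$ and $c_{m-1}$ is both safe and profitable. Hence plurality is $(\ceil{n/2}+1)$-known-agents safely-manipulable, and by the definition of RAT-degree this immediately gives an upper bound of $\ceil{n/2}+1$.

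Next I would establish the lower bound from \Cref{lem:lower-plurality}, which states that when $n \geq 5$ and the number of known agents is at most $n/2$, Alice has no safe profitable manipulation. The point that needs an explicit (if routine) argument is the quantifier: the lemma is phrased for the single named agent Alice with true ranking $c_m \succ_A \cdots \succ_A c_1$, but this ordering is merely a relabeling of candidates, and the construction is symmetric across agents. So the conclusion holds for \emph{every} agent, which is exactly what is needed to conclude that the mechanism is not safely-manipulable. Therefore, for every integer $k \leq \floor{n/2}$, plurality is not $k$-known-agents safely-manipulable, giving a lower bound of $\floor{n/2}+1$ on the RAT-degree. (\Cref{prop:monotonicity} guarantees that the set of $k$ for which the rule is safely-manipulable is upward-closed, so the RAT-degree is well-defined as this threshold.)

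Finally I would combine the two bounds, so that the RAT-degree lies in $[\floor{n/2}+1,\ \ceil{n/2}+1]$, and close by a parity case split. When $n$ is even, $\floor{n/2} = \ceil{n/2} = n/2$, so the two bounds coincide and the RAT-degree equals $n/2+1$. When $n$ is odd, $\floor{n/2}+1 = (n+1)/2$ and $\ceil{n/2}+1 = (n+3)/2$ differ by exactly one, leaving the RAT-degree between these two consecutive integers, precisely as stated. I do not anticipate a genuine mathematical obstacle, since both lemmas carry the analytical weight; the only care required is the symmetry argument in the lower-bound step noted above.
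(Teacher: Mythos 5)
Your proposal is correct and follows essentially the same route as the paper, which proves this theorem simply by combining \Cref{lem:sm>sm1} (instantiated at plurality, where $s_m=1>0=s_{m-1}$) for the upper bound with \Cref{lem:lower-plurality} for the lower bound and then reading off the parity cases. The only additions you make — spelling out the relabeling/symmetry argument so the lower bound applies to every agent, and invoking \Cref{prop:monotonicity} for upward-closedness — are details the paper leaves implicit, and they are handled correctly.
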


Next, we show that the lower bound of $\approx n/m$ is attained by the \emph{anti-plurality voting rule}. 

\begin{lemmarep}
    With $m \geq 3$ and $n\geq m^2$, the RAT-degree of anti-plurality is at most $\lfloor n/m \rfloor +1$.
\end{lemmarep}

\begin{proofsketch}
    We show that switching the two least-favorite candidates ($c_1$ and $c_2$) constitutes a safe and profitable manipulation for Alice, as the number of known agents is sufficient to ensure that her least-preferred candidate cannot win.
\end{proofsketch}

\begin{proof}
    We shall prove that anti-plurality is $(\lfloor n/m \rfloor +1)$-known agents manipulable.

    Consider the following case. Alice's preference order is $c_m \succ_A c_{m-1} \succ_A \cdots \succ_A c_1$, and she knows that the $\lfloor n/m \rfloor -1$ known-agents veto Alice's least-preferred candidate, $c_1$.

    We prove that switching the two least-favorite candidates ($c_1$ and $c_2$) is a safe and profitable manipulation for Alice.

    We first show that $c_1$ has no chance of winning, which makes the manipulation safe. Consider the highest possible score of $c_1$ when Alice manipulates (among all possible profiles for the unknown agents):
\begin{align*}
    \score_N(c_1) &\leq \underbrace{k \cdot 0}_{\text{knowns}} 
    + \underbrace{1}_{\text{Alice}} 
    + \underbrace{(n-k-1) \cdot 1}_{\text{best unknowns}} 
    = n-k.
\end{align*}

Next, we prove that,for any possible profile for the unknown agents, the highest score among the other candidates is higher than the highest possible score of $c_1$.
Each voter gives a score of $1$ to $m-1$ candidates and $0$ to one candidate.
There are $m-1$ candidates other than $c_1$, and $n-k$ voters we need to consider --- the unknown agents and Alice (each known agent gives score of $1$ to any candidate other than $c_1$).
Each of these voters gives a score of $0$ to at most $1$ of these $m-1$ candidates (since each one of the unknown agent might also give its $0$ to $c_1$).
Hence, overall they contribute at most $n-k$ zeros in total, distributed among the $m-1$ candidates.
By the reverse pigeonhole principle, there exists at least one candidate $c_j \neq c_1$ that receives at most $\lfloor \frac{n-k}{m-1} \rfloor$
zeros from these voters.
Thus, its overall score is at least $n- \lfloor \frac{n-k}{m-1} \rfloor $. 
We shall now see that as $k = \lfloor n/m \rfloor +1$, this score is higher than the best possible score of $c_1$:
\begin{align*}
    & k = \lceil  \frac{n}{m} \rfloor +1 \quad \Rightarrow k > \frac{n}{m} \quad \Rightarrow k \cdot m -k > n -k \quad \Rightarrow k > \frac{n-k}{m-1}  \quad \Rightarrow k > \lfloor \frac{n-k}{m-1} \rfloor\\
    &\Rightarrow n - k < n- \lfloor \frac{n-k}{m-1} \rfloor -1
\end{align*}

Hence, $c_1$ never wins, which shows that the manipulation is safe.

 Next, to prove that the manipulation is profitable, consider the following profile for the unknown agents:
    \begin{itemize}
        \item Each $\lfloor n/m \rfloor$ of them veto one candidate among $c_3, \dots, c_{m-1}$.
        \item Any remaining unknown agents (if $n$ is not divisible by $m$) can veto any candidate among $c_3, \dots, c_{m-1}$; this does not affect the argument.
    \end{itemize}
    That is, no unknown voter veto candidates $c_1, c_2$ and $c_m$, but some unknown voters veto the others.

    When Alice is truthful, the scores of $c_m$ and $c_2$ are $n$ as no voter veto them. The score of any other candidate is smaller --- which means that the outcome is a tie between $c_m$ and $c_2$. 

    However, when Alive manipulate, the score of $c_2$ decreases by one, which makes her favorite candidate $c_m$ the winner. By out assumption regarding tie breaking, as Alice prefers $c_m$ over $c_2$, this outcome is better for her than the tie.
\end{proof}

To this end, we also prove several upper bounds on the RAT-degree for more general score-vectors.

The following lemma strengthens 
\Cref{lem:s2>s1}.

\newcommand{\topLscores}{s_{\mathrm{top:}\ell}}
\newcommand{\botLscores}{s_{\mathrm{bot:}\ell}}
\begin{lemmarep}
\label{lem:z:s2>s1}
Let $\ell \in \{2,\ldots, m-1\}$ be an integer.
Consider a positional voting setting with $m\geq 3$ candidates and $n\geq (\ell+1)m$ agents.
Denote $\topLscores := \sum_{j=m-\ell+1}^m s_j = $  the sum of the $\ell$ highest scores and $\botLscores := \sum_{j=1}^{\ell}s_j = $ the sum of the $\ell$ lowest scores.

If $s_2 > s_1$ and there are $k$ known agents,
where 
\begin{align*}
k > \frac{\ell s_m - \botLscores}{\ell s_m + \topLscores - \botLscores - \ell s_1} n,
\end{align*}
then switching the bottom two candidates ($c_2$ and $c_1$) may be a safe profitable manipulation for Alice.
\end{lemmarep}
\begin{proofsketch}
The proof has a similar structure to that of \Cref{lem:s2>s1}.
Note that the expression at the right-hand side can be as small as $\displaystyle \frac{1}{\ell+1}n$ (for the anti-plurality rule), which is much smaller than the $\ceil{n/2}+1$ known agents required in \Cref{lem:s2>s1}.
Still, we can prove that, for some reports of the known agents, the score of $c_1$ is necessarily lower than the \emph{arithmetic mean} of the scores of the $\ell$ candidates $\{c_m, c_2, \cdots, c_{\ell}\}$. Hence, it is lower than at least one of these scores. Therefore ,$c_1$ still cannot win, so switching $c_1$ and $c_2$ is safe.
\end{proofsketch}
\begin{proof}
Suppose there is a subset $K$ of $k$ known agents, who vote as follows:
\begin{itemize}
\item $k-2$ known agents rank $c_2 \succ c_m$, then all candidates $\{c_3, \cdots , c_{\ell}\}$ in an arbitrary order, then the rest of the candidates in an arbitrary order, and lastly $c_1$.
\item Two known agents rank $c_m \succ c_2$, then all candidates $\{c_3 , \cdots , c_{\ell}\}$ in an arbitrary order, then the rest of the candidates in an arbitrary order, and lastly $c_1$.
\end{itemize}
We first show that $c_1$ cannot win. 
Denote $L := \{c_m, c_2, c_3, \ldots, c_{\ell}\}$.
We show that the difference in scores between some of the $\ell$ candidates in $L$ and $c_1$ is always strictly positive.
\begin{itemize}
\item The known agents rank all candidates in $L$ at the top $\ell$ positions. Therefore, each agent gives all of them together a total score of $\topLscores$. So
\begin{align*}
\sum_{c\in L} (\score_K(c)-\score_K(c_1)) =
&
k(\topLscores - \ell s_1).
\end{align*}
\item There are $n-k$ agents not in $K$ (including Alice). 
Each of these agents gives all candidates in $L$ together at least $\botLscores$, and gives $c_1$ at most $s_m$ points. Therefore, we can bound the sum of score differences as follows:
\begin{align*}
\sum_{c\in L} (\score_N(c)-\score_N(c_1)) \geq
&
k (\topLscores - \ell s_1)
+ (n-k) (\botLscores - \ell s_m)
\\
=&
k (\ell s_m + \topLscores - \botLscores - \ell s_1)
+ n(\botLscores - \ell s_m).
\end{align*}
The assumption on $k$ implies that this expression is positive. Therefore, for at least one $c\in L$, $\score_N(c)-\score_N(c_1) > 0$.
So $c_1$ has no chance to win or even tie.
Therefore, switching $c_2$ and $c_1$ is a safe manipulation.
\end{itemize}

Next, we show that the manipulation can help $c_m$ win, when the agents not in $K$ vote as follows:
\begin{itemize}
\item $k-4$ unknown agents rank $c_m\succ c_2\succ \cdots $,
where each candidate except $c_1,c_2,c_m$ is ranked last by at least one voter (here we use the assumption $n\geq (\ell+1)m$: the condition on $k$ implies $k>n/(\ell+1)\geq m$, so $k\geq m+1$ and $k-4\geq m-3$).
\item One unknown agent ranks 
$c_2\succ \cdots \succ c_m \succ c_1$;
\item Alice votes truthfully $c_m\succ  \cdots \succ c_2 \succ c_1$.  
\item If there are remaining unknown agents, then they are split evenly between 
$c_m\succ c_2\succ \cdots $ and 
$c_2\succ c_m\succ \cdots $ (if the number of remaining agents is odd, then the last one abstains).
\end{itemize}
Then,
\begin{align*}
\score_N(c_2) - \score_N(c_m)
=&
(k-2)(s_{m}-s_{m-1}) 
+ 2(s_{m-1}-s_{m})
\\
&+
(k-4)(s_{m-1}-s_{m}) 
+ (s_m-s_2)
+ (s_2-s_m)
\\
=&
0.
\end{align*}
Moreover, for any $j\not\in\{1,2,m\}$, the score of $c_j$ is even lower (here we use the assumption that $c_j$ is ranked last by at least one unknown agent):
\begin{align*}
\score_N(c_2) - \score_N(c_j)
\geq &
(k-2)(s_{m}-s_{m-2}) 
+ 2(s_{m-1}-s_{m-2})
\\
&+
(k-5)(s_{m-1}-s_{m-2}) 
+ (s_{m-1}-s_1)
+ (s_m-s_{m-1})
+ (s_2-s_{m-1})
\\
\geq & (s_{m-1}-s_{1})
+ (s_2-s_{m-1})
\\
= & s_2 - s_1,
\end{align*}
which is positive by the lemma assumption.
Therefore, when Alice is truthful, the outcome is a tie between $c_m$ and $c_2$.

If Alice switches $c_1$ and $c_2$, then the score of $c_2$ decreases by $s_2-s_1$, which is positive by the lemma assumption, and the scores of all other candidates except $c_1$ do not change. So $c_m$ wins, which is better for Alice than a tie.
Therefore, the manipulation is profitable.
\end{proof}

In particular, for the anti-plurality rule the condition in \Cref{lem:z:s2>s1} for $\ell=m-1$ is $k>n/m$, which implies a lower bound of $\floor{n/m}+1$. 
Combined with the general upper bound of \Cref{thm:positional-bounds}, we get:

\begin{theorem}
With $m\geq 3$ candidates and $n\geq m^2$ agents,
The RAT-degree of anti-plurality voting is 
at least $\floor{(n+1)/m}-1$ and
at most $\floor{n/m}+1$.
\end{theorem}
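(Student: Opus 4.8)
The plan is to prove the two bounds independently, each by specializing a general positional-voting result established earlier to the anti-plurality score vector $\scorevector=(0,1,1,\ldots,1)$, for which $s_1=0$ and $s_2=\cdots=s_m=1$. No new construction is needed: the lower bound is the generic positional lower bound, and the upper bound is a direct instance of the generalized switching lemma, so the whole argument is a bookkeeping combination of \Cref{thm:positional-bounds} and \Cref{lem:z:s2>s1}.

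For the lower bound, I would simply invoke \Cref{thm:positional-bounds}(a) (equivalently \Cref{lem:lower-positional}), which holds for every positional rule and hence for anti-plurality: whenever the number of known agents is at most $(n+1)/m-2$, no agent has a safe profitable manipulation, so the mechanism is not $k$-known-agents safely-manipulable for any such $k$, and the RAT-degree is therefore at least $\floor{(n+1)/m}-1$. The hypothesis $n\geq m^2$ is far stronger than this step requires. The one point worth stating is that the lower bound is genuinely agent- and order-independent: since positional rules are anonymous and neutral, relabeling the candidates lets any fixed agent play the role of Alice, so ``Alice has no safe manipulation'' upgrades to ``no agent has one,'' which is what a RAT-degree lower bound demands.

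For the upper bound I would apply \Cref{lem:z:s2>s1} with $\ell=m-1$. First the score-vector hypothesis $s_2>s_1$ holds, as $s_2=1>0=s_1$. Then I compute the threshold on $k$: with $\ell=m-1$ the sum of the top $\ell$ scores is $s_2+\cdots+s_m=m-1$ and the sum of the bottom $\ell$ scores is $s_1+\cdots+s_{m-1}=m-2$, so the coefficient of $n$ in the lemma becomes
\[
\frac{(m-1)\cdot 1-(m-2)}{(m-1)\cdot 1+(m-1)-(m-2)-(m-1)\cdot 0}=\frac{1}{m}.
\]
Hence \Cref{lem:z:s2>s1} guarantees that for any integer $k>n/m$, switching Alice's two bottom candidates $c_1,c_2$ is a safe profitable manipulation given $k$ known agents; intuitively, knowing enough agents veto $c_1$ forces $c_1$ out of contention, so the swap can only help a better candidate break a tie. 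The least integer strictly exceeding $n/m$ is $\floor{n/m}+1$ (in both the divisible and non-divisible cases, since $\floor{n/m}\leq n/m<\floor{n/m}+1$), so anti-plurality is $(\floor{n/m}+1)$-known-agents safely-manipulable, giving RAT-degree at most $\floor{n/m}+1$.

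I do not expect a substantial obstacle here, since both bounds are corollaries of already-proved lemmas and the genuinely delicate part—constructing the balancing votes that keep $c_1$ from winning while leaving a tie Alice can tip—is internal to \Cref{lem:z:s2>s1}. The only care needed is two routine checks: that the precondition $n\geq(\ell+1)m$ of \Cref{lem:z:s2>s1} specializes to exactly $n\geq m^2$ when $\ell=m-1$, matching the theorem's hypothesis, and that the floor arithmetic above correctly identifies $\floor{n/m}+1$ as the smallest admissible number of known agents. Combining the two bounds yields the window $[\floor{(n+1)/m}-1,\ \floor{n/m}+1]$, pinning the RAT-degree of anti-plurality to within a small additive constant and confirming that it sits at the low, $\approx n/m$, end of the positional spectrum (in contrast to the $\approx n/2$ attained by plurality).
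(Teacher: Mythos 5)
Your proposal is correct and matches the paper's own derivation: the lower bound is exactly \Cref{thm:positional-bounds}(a) (i.e.\ \Cref{lem:lower-positional}), and the upper bound is obtained by specializing \Cref{lem:z:s2>s1} to $\ell=m-1$, where the threshold coefficient evaluates to $1/m$ and the precondition $n\geq(\ell+1)m$ becomes $n\geq m^2$. Your floor-arithmetic check and the remark on anonymity/neutrality are fine; no gaps.
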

Intuitively, the reason that anti-plurality fares worse than plurality is that, even with a small number of known agents, it is possible to deduce that some candidate has no chance to win, and therefore there is a safe manipulation.

While we do not yet have a complete characterization of the RAT-degree of positional voting rules, our current results already show the strategic importance of the choice of scores.

\eden{\clearpage=============== NEW ===============}
\subsection{RAT-Degree and Veto Power}\label{sec:voting-RAT-vs-veto-power}

In this section, we examine the relation between the RAT-degree and the notion of \emph{veto power} introduced by \citet{Moulin1982vetoPower}.
Formally, a coalition of voters of size~$t$ is said to have veto power~$v(t)$ if it can prevent any set of~$v(t)$ candidates from winning, regardless of the preferences of the remaining voters.

Initially, we thought that the veto power can  provide an upper bound on the RAT-degree, as a manipulator who knows that some candidate has no chance of winning, can use this knowledge to obtain a safe and profitable manipulation.
Therefore, we expected that the minimal size of a coalition with positive veto power would be an upper bound on the RAT-degree.
However, this claim appears to fail in general: 

\begin{lemma}
    There exist voting rules for which, even when the manipulator knows that one candidate has no chance of winning, this is not sufficient to ensure the existence of a safe and profitable manipulation.
\end{lemma}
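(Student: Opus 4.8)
The plan is to exhibit an explicit, natural voting rule together with a scenario in which one candidate is provably unable to win, yet the manipulator still has no safe and profitable manipulation. A convenient choice is a \emph{Condorcet-consistent} rule on $m=3$ candidates $\{x,y,z\}$: the winner is the candidate that beats both others in pairwise majority comparisons whenever such a candidate exists, and some fixed fallback (e.g.\ plurality) is applied otherwise. Take the number of voters $n$ to be odd. Let the manipulator Alice have true ranking $x \succ y \succ z$, and let the known agents form a strict majority of the electorate, all of whom rank $z$ last (and whose $x$-versus-$y$ preferences are chosen to be as balanced as possible). First I would observe that this known coalition already forces a pairwise majority for $x$ over $z$ and for $y$ over $z$, irrespective of the reports of the unknown agents and of Alice; hence $z$ is a pairwise loser and can never be the Condorcet winner, so Alice indeed \emph{knows} that $z$ has no chance of winning.

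The crucial step is to show that, across all profiles consistent with this knowledge, the rule collapses to a two-candidate majority contest between $x$ and $y$. Since $z$ is beaten pairwise by both $x$ and $y$ in every such profile, no majority cycle involving $z$ can form; consequently a Condorcet winner always exists and lies in $\{x,y\}$, and the fixed fallback is never invoked. Because $n$ is odd and every voter submits a strict ranking, the pairwise tally between $x$ and $y$ is never tied. Therefore the winner is exactly the pairwise-majority winner between $x$ and $y$, and this depends on Alice's report \emph{only} through whether she ranks $x$ above $y$ or $y$ above $x$; the placement of $z$ in her ballot is irrelevant.

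It then remains to rule out every manipulation. Any report in which Alice still ranks $x \succ y$ yields precisely the same outcome as truthful voting (it contributes identically to the $x$-versus-$y$ comparison), so it is not profitable. Any report in which Alice ranks $y \succ x$ transfers her support from $x$ to $y$ in that comparison; by balancing the known agents and choosing the unknown agents so that the $n-1$ non-Alice voters split evenly on $x$ versus $y$ (possible since $n-1$ is even), we obtain a pivotal profile in which truthful reporting makes $x$ win while the manipulation makes $y$ win, which by Alice's preferences is strictly worse. Hence such a manipulation is harmful for some consistent profile and is not safe. As every manipulation is either unprofitable or unsafe, Alice has no safe-and-profitable manipulation even though $z$ is known to be eliminated, which proves the lemma. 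This also refutes the conjecture that the minimal size of a coalition with positive veto power upper-bounds the RAT-degree: eliminating $z$ conveys no exploitable advantage.

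The main obstacle is the second step --- guaranteeing that the eliminated candidate cannot perturb the outcome through a majority cycle, so that the decision among the surviving candidates genuinely reduces to a two-candidate strategyproof majority, and simultaneously checking that the fallback rule stays dormant on all relevant profiles. A secondary point worth verifying is that the chosen rule is genuinely manipulable in general (so that the example is not a disguised truthful or dictatorial rule); this holds because Condorcet-consistent rules admit profitable manipulations on profiles where cycles arise. If one prefers to avoid the fallback entirely, the same argument goes through for any Condorcet-consistent rule, since the fallback is provably inactive throughout the scenario.
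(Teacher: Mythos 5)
Your proof is correct, but it takes a genuinely different route from the paper's. The paper works with the $m/2$-approval rule (for even $m$), treats the manipulator's knowledge abstractly as ``candidate $c$ is eliminated,'' and argues by cases on whether $c$ lies in Alice's approved or unapproved half that every score-changing manipulation (a swap across the approval threshold) admits a consistent profile in which it backfires. You instead pick a three-candidate Condorcet-consistent rule and a concrete knowledge state --- a strict-majority coalition all ranking $z$ last --- and observe that on every consistent profile $z$ is a pairwise loser, so a Condorcet winner always exists in $\{x,y\}$, the fallback is never triggered, and the election collapses to a two-candidate majority vote, which is strategyproof; the reports preserving $x\succ y$ are outcome-equivalent to truth and the reports reversing it are unsafe at the pivotal profile. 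Your reduction is cleaner and arguably stronger on its instance (no profitable manipulation survives at all, rather than each one merely being unsafe), and it avoids the paper's slightly delicate claim that only threshold-swaps involving the eliminated candidate can be profitable. What it gives up is generality: it is tied to $m=3$ (with more candidates, a cycle among the survivors could invoke the fallback and resurrect $z$), and the veto coalition is necessarily a majority of the electorate, whereas the paper's example works for any even $m$ and for an arbitrary eliminated candidate with no further information assumed. The only presentational gap is the feasibility of the pivotal profile: you should note that the claim ``the $n-1$ non-Alice voters can be split evenly on $x$ versus $y$'' requires neither side of the known coalition's split to exceed $(n-1)/2$, which holds when the coalition is taken to be a minimal strict majority --- a one-line check, but worth writing down.
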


\begin{proof}
    Without loss of generality, assume $m$ is even and consider the $m/2$–approval rule.
    Under $m/2$–approval, each voter gives score $1$ to his top $m/2$ candidates and $0$ to the rest; thus any unilateral change by $i$ can change every candidate's total score by at most $\pm 1$. 

    Let Alice be the manipulator with true preferences $c_m \succ m_{m-1} \succ \cdots \succ c_1$.
    We show that if Alice only knows that one candidate has no chance of winning, any profitable manipulation is not safe. 
    Consider the following cases:
    
    \begin{itemize}
        \item The eliminated candidate is one of Alice's top $m/2$ candidates:

        In this case, the  manipulations that are potentially profitable are to move this eliminated candidate from the approved set (score~$1$) to the unapproved set (score~$0$), and to promote instead one of the unapproved candidates to the approved set. 
        Any other manipulation contributes exactly the same total scores as truthful reporting and therefore cannot be profitable. 
        However, this swap may cause a worse outcome for Alice: it is possible that when she tell the truth the result is a tie between the her favorite candidate and the less-preferred candidate that she promoted and so when she manipulates the less-preferred candidate wins.
        Therefore, such a manipulation is not safe.

        \item The eliminated candidate is among Alice’s bottom $m/2$ candidates:

        Here, the only potentially profitable manipulation is to promote this eliminated candidate into the approved set (score~$1$) and to demote the least-preferred among the currently approved candidates. 
        Any other manipulation yields the same contribution as truth-telling. 
        While this move might seem harmless --- since the promoted candidate cannot win --- it may reduce the score of one of the approved candidates enough to let a less-preferred candidate win, 
        which is therefore not safe.
    \end{itemize}
\end{proof}

\rmark{This lemma shows that knowing that a single candidate has no chance of winning is generally insufficient to guarantee a safe and profitable manipulation.
However, we show that such knowledge can still be helpful in some cases. In particular, under certain voting rules, knowing that several candidates  are ruled out might help, where the required number of candidates depends on the structure of the specific rule. We illustrate this through two partial results that hold for any \emph{positional voting rule}. 
We prove that knowing that some candidates are ruled out may enable a manipulation that is safe; however, it is unclear whether it is necessarily profitable. 
We leave this as an open question and provide full details below.} 
Specifically,
\Cref{lemma:RAT-and-veto-power-positional-index} establishes a connection based on the first position where the score strictly increases, while \Cref{lemma:RAT-and-veto-power-positional-length} relates the RAT-degree to the length of the largest sequence of identical scores. 
% These two results are then combined in \Cref{thm:RAT-and-veto-power-positional}.

\begin{lemma}\label{lemma:RAT-and-veto-power-positional-index}
    Consider any position voting rule with $s_1 <s_m$.
    Let~$\ell \in \{1,\ldots, m-1\}$ be the smallest position such that $s_{\ell} < s_{\ell +1}$.
    \quad
    Let $t$ be the smallest coalition size with veto power at least $\ell$--- i.e., $v(t)\ge \ell$.
    \rmark{Then, if the number of known-agents is at least $t$, then the manipulator has safe (but not necessarily profitable) manipulation.}
\end{lemma}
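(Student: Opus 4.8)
The plan is to exhibit, in the scenario where the $t$ known agents form a coalition that vetoes Alice's $\ell$ least-preferred candidates, a single safe manipulation: swapping $c_\ell$ and $c_{\ell+1}$ in Alice's reported ranking. First I would fix the setup. By the choice of $\ell$ we have $s_1=\cdots=s_\ell<s_{\ell+1}$, and Alice's true ranking $c_m\succ_A\cdots\succ_A c_1$ places $c_1,\ldots,c_\ell$ in the bottom $\ell$ positions. Since $v(t)\geq\ell$, I take the known set $K$ to consist of $t$ agents whose reports veto the set $\{c_1,\ldots,c_\ell\}$ (if strictly more than $t$ agents are known, use any $t$ of them as the coalition and let the rest report arbitrarily). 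By the definition of veto power, none of $c_1,\ldots,c_\ell$ can be a winner for \emph{any} reports of the remaining voters --- including Alice herself, whether she reports truthfully or manipulates.

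Next I would describe the manipulation and record its effect on the total scores. Let $\delta:=s_{\ell+1}-s_1>0$. Swapping $c_\ell$ (ranked at position $\ell$, contributing $s_1$) with $c_{\ell+1}$ (ranked at position $\ell+1$, contributing $s_{\ell+1}$) is a genuine manipulation, since it produces a different ranking. Its only effect is that $\score_N(c_{\ell+1})$ decreases by $\delta$ and $\score_N(c_\ell)$ increases by $\delta$, while every other candidate's total score is unchanged.

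The core of the argument is safety. I would fix any reports of the unknown agents and compare the truthful profile with the manipulated one. The veto guarantees that in both profiles the winner lies in $\{c_{\ell+1},\ldots,c_m\}$; in particular the boost of $+\delta$ to $c_\ell$ cannot turn $c_\ell$ into a winner. The key structural observation is that $c_{\ell+1}$ is Alice's \emph{least-preferred} candidate among all candidates that can possibly win, since everything she ranks below $c_{\ell+1}$ is exactly $\{c_1,\ldots,c_\ell\}$, which is vetoed. As the manipulation only lowers $c_{\ell+1}$'s score and leaves the scores of all of $c_{\ell+2},\ldots,c_m$ untouched, I would argue by cases on the truthful winning set $W\subseteq\{c_{\ell+1},\ldots,c_m\}$: if $c_{\ell+1}\notin W$ the maximum score and its achievers are unchanged, so the outcome is identical; if $c_{\ell+1}\in W$, lowering its score can only (weakly) remove it from the top, either leaving it as a winner or replacing it by candidates that Alice strictly prefers. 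Invoking the stated tie-breaking preferences --- $c_{\ell+1}$ dropping out of a tie, or out of a sole victory, is weakly better for Alice --- every case yields an outcome weakly preferred to the truthful one, so the manipulation is safe.

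The main obstacle is making this safety case analysis airtight on two points. First, I must justify that the $+\delta$ added to the vetoed candidate $c_\ell$ genuinely cannot create a new winner; this rests on reading veto power as excluding the vetoed candidates from \emph{being} winners (including ties) for every report of the remaining voters, Alice's manipulated report included. Second, I must handle multi-way ties at the top carefully, verifying that lowering the score of the worst admissible candidate $c_{\ell+1}$ never promotes a \emph{less}-preferred candidate --- which holds precisely because all candidates less preferred than $c_{\ell+1}$ are vetoed and hence absent from any winning set. I would also state explicitly that profitability is not claimed: whether some profile of the unknown agents makes $c_{\ell+1}$ exactly pivotal depends on finer features of $\scorevector$, which is why the lemma asserts only the existence of a safe manipulation.
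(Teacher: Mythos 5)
Your proposal is correct and follows essentially the same route as the paper's proof: you choose the known coalition to veto Alice's $\ell$ least-preferred candidates $\{c_1,\ldots,c_\ell\}$ and show that swapping $c_\ell$ and $c_{\ell+1}$ is safe because the $+\delta$ boost to the vetoed $c_\ell$ cannot create a winner, while the $-\delta$ to $c_{\ell+1}$ --- the worst candidate that can possibly win --- can only weakly improve the outcome for Alice. Your write-up is somewhat more explicit than the paper's (the case analysis on the winning set and the observation that $s_1=\cdots=s_\ell$ by minimality of $\ell$), but the underlying argument is identical, including the explicit disclaimer that profitability is not established.
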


\begin{proof}
    Let Alice be the manipulator with true preferences $c_m \succ m_{m-1} \succ \cdots \succ c_1$. Let $K$ be a set of $t$ voters with veto power $\ell$. By definition, they can rule out any set of $\ell$  candidates.  
    Consider the case where they rule out Alice's $\ell$ least-preferred candidates.
    
    Notice that $c_{\ell}$ and $c_{\ell+1}$ are the candidate at position $\ell$ and $\ell+1$ in Alice's truthful preferences, respectively.
    Consider the following manipulation for Alice: switching between $c_{\ell}$ and $c_{\ell+1}$.
    
    % We prove that it is safe for Alice and explain why it is not necessarily profitable. 
    % \eden{is it? can we assume that tie between $c_{\ell}$ and other more-preferred candidate is a possible outcome?}

    First, as $s_{\ell} < s_{\ell+1}$, swapping $c_{\ell}$ and $c_{\ell+1}$ strictly increases $c_{\ell}$'s score and strictly decreases $c_{\ell+1}$'s score.
    This means that the manipulation is safe: promoting $c_{\ell}$ cannot make this candidate win, since it is one of the candidates ruled out by $K$.

    \rmark{The manipulation might be profitable as the competition is only between $c_{\ell+1}$ and candidates that are more preferred by Alice, so decreasing the score of $c_{\ell+1}$ can only benefit her. 
    However, for a manipulation to be profitable, we must show that there exists some profile, consistent with the agent’s knowledge, under which the outcome is strictly improved compared to truthful reporting. In particular, if we could guarantee, for example, the existence of a profile in which there is a tie between $c_{\ell+1}$ and a more preferred candidate, then the manipulation would break the tie in favor of the latter, yielding a strictly better outcome. However, since we only know that certain candidates are ruled out by the agent’s knowledge, it is unclear whether such a profile must necessarily exist\footnote{Assume, for example, that the only profile for the known-agents that rules out the desired candidates (Alice's $\ell$ least-favorite candidates) also rules out candidate $c_{\ell+1}$. In this case, $c_{\ell+1}$ has no chance of winning and therefore the manipulation is not profitable.}.
    }
\end{proof}

\begin{lemma}\label{lemma:RAT-and-veto-power-positional-length}
    Consider any position voting rule with $s_1 <s_m$.
    Let $q \{1,\ldots, m-1\}$ be the length of the largest sequence of identical scores:
    $$
        q = \max_{1 \le j \le m-1} \{\,x \mid s_j = s_{j+1} = \cdots = s_{j+x-1}\,\}.
    $$
    Let $t$ be the smallest coalition size with veto power at least $m-q$ --- i.e., $v(t)\ge m-q$.
    
    Then, if the number of known-agents is at least $t$, then the manipulator has safe (but not necessarily profitable) manipulation.
\end{lemma}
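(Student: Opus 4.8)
The plan is to adapt the adjacent-swap manipulation from \Cref{lemma:RAT-and-veto-power-positional-index}, but to anchor it at a boundary of the \emph{largest} block of equal scores rather than at the first strict increase. Write Alice's truthful ranking as $c_m \succ_A \cdots \succ_A c_1$, so that a truthful vote assigns score $s_i$ to candidate $c_i$. Fix a maximal run of equal scores $s_j = s_{j+1} = \cdots = s_{j+q-1}$ of length $q$; by maximality $s_{j-1} < s_j$ whenever $j>1$ and $s_{j+q-1} < s_{j+q}$ whenever $j+q-1 < m$, and since $s_1 < s_m$ we have $q < m$, so at least one of these strict boundaries exists. The $m-q$ candidates lying outside this run are $c_1,\dots,c_{j-1}$ (below) and $c_{j+q},\dots,c_m$ (above).

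First I would use the known agents to rule out exactly these outside-run candidates. Since $t$ is the smallest coalition size with $v(t)\ge m-q$ and the number of known agents is at least $t$, I let $t$ of the known agents form a coalition with veto power $v(t)$ and choose a set of $v(t)$ candidates that contains all $m-q$ outside-run candidates; the remaining known agents and all unknown agents vote arbitrarily. By the definition of veto power, every outside-run candidate is then prevented from winning no matter how the others vote, so the only candidates that can possibly win are the in-run candidates $c_j,\dots,c_{j+q-1}$.

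The manipulation itself is a single adjacent swap across a strict-increase boundary of the run, chosen so that the promoted candidate's superiors are all vetoed:
\begin{itemize}
    \item if $j+q-1 < m$, Alice swaps $c_{j+q-1}$ (the in-run candidate she most prefers) with $c_{j+q}$, raising the former's score to $s_{j+q}$ and lowering the latter's to $s_{j+q-1}$;
    \item if $j+q-1 = m$ (the run reaches the top, so $j-1 = m-q \ge 1$), Alice swaps $c_{j-1}$ with $c_j$ (the in-run candidate she least prefers), raising the former's score to $s_j$ and lowering the latter's to $s_{j-1}$.
\end{itemize}
In both cases the two swapped candidates carry distinct scores, so this is a genuine, score-changing misreport.

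The core of the argument is safety, and it mirrors \Cref{lemma:RAT-and-veto-power-positional-index}: the only score changes are one candidate up and one down, so the outcome can change only if the promoted candidate newly wins or the demoted candidate newly loses its victory. If the promoted candidate wins, it is either a vetoed below-run candidate (second bullet, so this cannot happen) or the most-preferred in-run candidate, all of whose superiors are vetoed (first bullet); in the latter case every candidate Alice strictly prefers is unwinnable, so the new winner is weakly better for her. If the demoted candidate loses its win, it is either a vetoed above-run candidate that never wins (first bullet, vacuous) or the least-preferred in-run candidate, all of whose inferiors are vetoed (second bullet); so its victory can only pass to a more-preferred in-run candidate. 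Either way Alice is never harmed, establishing safety. The main obstacle is exactly this boundary case analysis — verifying, for both the top-adjacent and bottom-adjacent swaps, that the vetoed $m-q$ candidates are precisely those whose winnability would threaten safety. As in \Cref{lemma:RAT-and-veto-power-positional-index}, profitability is not guaranteed: the swap only reshuffles scores among the in-run and vetoed candidates, and without further knowledge of the unknown votes one cannot ensure a profile in which it strictly improves Alice's outcome.
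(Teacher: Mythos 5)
Your proposal is correct and follows essentially the same approach as the paper: use the veto coalition of size $t$ to rule out the $m-q$ candidates outside the maximal equal-score run, so only in-run candidates can win, and then perform a swap across a strict-score boundary that raises a vetoed candidate's score while lowering that of the least-preferred winnable candidate (or symmetrically, raises the most-preferred winnable candidate while lowering a vetoed one). The only difference is cosmetic — you swap with the adjacent candidates $c_{j-1}$ or $c_{j+q}$ whereas the paper swaps with the extremes $c_1$ or $c_m$ — and your closing caveat about profitability matches the paper's remark exactly.
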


\begin{proof}
Let Alice be the manipulator with true preferences $c_m \succ c_{m-1} \succ \cdots \succ c_1$. 
Let $K$ be a set of $t$ voters with veto power $m-q$. By definition, they can rule out any set of $m-q$ candidates.  
Consider the case where they rule out all but a set of $q$ candidates corresponding to the largest equal-score sequence (if there are several such sets, pick one). 
Formally, let $j$ be the smallest index of that sequence and $C_q := \{c_j, \ldots, c_{j+q-1}\}$; then the known agents rule out all candidates that are not in $C_q$.
% \quad
% We shall now prove that Alice has a safe and profitable manipulation. 
% Consider the following two cases:

Since $s_1 < s_m$, at least one of the two cases must hold:
\begin{itemize}
    \item $j > 1$: 
    
    Consider the following manipulation for Alice: switching between $c_1$ and $c_j$. 
    Notice that $s_{j} > s_{1}$ (by the minimality of the length of the equal-score sequence), thus the manipulation strictly increases $c_1$'s score and strictly decreases $c_j$'s score.

    As the competition is only between the candidates in $C_q$, increasing $c_1$'s score cannot harm Alice (as $c_1$ has no chance of winning), while decreasing $c_j$, which is the least-preferred candidate in $C_q$, can only benefit her. 
    Therefore, the manipulation is safe.

    \rmark{It might also be profitable: if under Alice's truthful report the outcome is a tie between $c_{j}$ and a more-preferred candidate, then this manipulation breaks the tie in favor of her more preferred one.
    It is unclear if such a profile must necessarily exist.}

    \item $j + q - 1 < m$: 
    
    In this case, switching between $c_{j+q-1}$ and $c_m$ is safe and profitable.
    Here, we can conclude that $s_{j+q-1} < s_{m}$. which means that the manipulation strictly increases $c_{j+q-1}$'s score and strictly decreases $c_m$'s score.

    As the competition is only between the candidates in $C_q$, decreasing $c_m$'s score cannot harm Alice (as it has no chance of winning).
    However, increasing $c_{j+q-1}$, which is her most-preferred candidate in $C_q$, can only benefit her. 
    Thus, the manipulation is safe. 
    
    \rmark{The manipulation might be profitable: if under her truthful report the outcome is a tie between $c_{j+q-1}$ and a least-preferred candidate, then this manipulation breaks the tie in favor of her more preferred one.
    Unfortunately, it is unclear whether such a profile must necessarily exist.}
\end{itemize}

\end{proof}

% We can now conclude that:

% \begin{theorem}\label{thm:RAT-and-veto-power-positional}
%     Consider any position voting rule with $s_1 <s_m$.
    
%     Let~$\ell \in \{1,\ldots, m-1\}$ be the smallest position such that $s_{\ell} < s_{\ell +1}$.
    
%     Let $q \in \{1,\ldots, m-1\}$ be the length of the largest sequence of identical scores:
%     $$
%         q := \max_{1 \le j \le m-1} \{\,x \mid s_j = s_{j+1} = \cdots = s_{j+x-1}\,\}.
%     $$
    
%     Let $t$ be the smallest coalition size with veto power at least $\min \{\ell, m-q\}$.
%     Then, the RAT-degree of this rule is at most $t$.
% \end{theorem}

% \begin{proof}
%     If $\min\{\ell, m-q\} = \ell$, then $v(t) \ge \ell$, and the result follows from \Cref{lemma:RAT-and-veto-power-positional-index}.  
%     Otherwise, $\min\{\ell, m-q\} = m-q$ and $v(t) \ge m-q$, and the result derives from \Cref{lemma:RAT-and-veto-power-positional-length}.  
%     Hence, in both cases the RAT-degree of the rule is at most $t$.
% \end{proof}

% \eden{This raises the following question: ... do we want to add something?}

% \eden{=================== END OF NEW ===================}

\rmark{This raises the following question: 

\begin{open}\label{open:veto-power}
    Is there a direct relation between Moulin’s veto power and the RAT degree, and if so, how can it be characterized?
\end{open}}

% \newpage

\newcommand{\men}{M}
\newcommand{\women}{W}
\newcommand{\man}{m}
\newcommand{\woman}{w}

\section{Two-sided Matching}\label{sec:matching} 
% \eden{to change to two sided}

% \eden{section intro is updated. should change two sided matching and voting to have the same structure of "Here, $T_i$ is..."}

% \eden{to say something about the ordinal preferences (this is the only problem we use it)}

In this section, we consider mechanisms for two-sided matching. 
Here, the $n$ agents are divided into two disjoint subsets, $\men$ and $\women$, that need to be matched to each other. The most common examples are men and women or students and universities. 
Each agent has a strict preference order over the agents in the other set and being unmatched -- for each $\man \in \men$, an order $\succ_{\man} $ over $\women\cup \{\phi\}$; and for each $\woman \in \women$ an order, $\succ_{\woman}$, over $\men\cup \{\phi\}$. 

A \emph{matching} between $\men$ to $\women$ is a mapping $\mu$ from $\men \cup \women$ to $\men \cup \women \cup \{\phi\}$ such that (1) $\mu(\man) \in \women \cup \{\phi\}$ for each $\man \in \men$, (2) $\mu(\woman) \in \men \cup \{\phi\}$ for each $\woman \in \women$, and (3) $\mu(\man) = \woman$ if and only if $\mu(\woman) = \man$ for any $(\man, \woman) \in \men \times \women$. 
% \begin{align*} 
%     \forall \man \in \men \colon &\quad \mu(\man) \in \women \cup \{\phi\}\\
%     \forall \woman \in \women \colon &\quad \mu(\woman) \in \men \cup \{\phi\}\\
%     \forall \man, \woman \in \men \cup \women \colon &\quad  \mu(\man) = \woman  \iff \mu(\woman) = \man
% \end{align*}
When $\mu(a) = \phi$ it means that agent $a$ is unmatched under $\mu$. 

A mechanism in this context gets the preference orders of all agents and returns a matching.
See \citet{gonczarowski2024structural} for a recent description of the structure of matching mechanisms.

% Removed for comsoc
\erel{Consider citing other matching-related papers, such as  \citet{gonczarowski2014manipulation}, and other recent papers. }

\paragraph{Results.}
Our results for this problem are preliminary, so we provide only a brief overview here, with full descriptions and proofs in the appendix. We believe, however, that this is an important problem and that our new definition opens the door to many interesting questions.

We start with the well-known \emph{deferred acceptance} mechanism of \citet{gale1962college}, which always returns a \emph{stable} matching—no agent prefers being unmatched over their assigned match, and there is no pair of agents who would both prefer to be matched to each other over their current assignments.
It is known that \emph{no} stable matching mechanism is truthful for all agents \cite{roth1982economics}.
Indeed, deferred acceptance is known to be truthful only for one side of the market. 
But what happens on the other side? 
Our analysis reveals that the RAT-degree of deferred acceptance is very low: it is at least $1$ and at most $3$. 
The proof of the upper bound relies on \emph{truncation}, where an agent falsely reports preferring to remain unmatched over certain options. We further show that even when agents are required to report complete preferences—thus ruling out this type of manipulation—the RAT-degree is at most $5$.

This raises the following important and interesting question: 
\begin{open}\label{open-matching-1}
    Is there a stable matching mechanism with RAT-degree in $\Omega(n)$?
\end{open}

We also examine the \emph{Boston mechanism} \cite{abdulkadirouglu2003school}, which is a widely used in practice for assigning students to schools. We establish an upper bound of $2$ on its RAT-degree.

% \eden{To add: "No stable matching mechanism exists for which stating the true preferences is a dominant
% strategy for every agent." \cite{roth1982economics}.
% \\
% This paper gives a good summary of existing results: \cite{roth2008deferred}}

\begin{table}[h]
    \centering
    \begin{tabular}{|l|c|c|c|}
    \hline 
    \multirow{2}{*}{Mechanism} & \multicolumn{2}{c|}{RAT-Degree}  & \multirow{2}{*}{Properties}\\
    \cline{2-3}
     & Lower Bound & Upper Bound & \\
    \hhline{|=|=|=|=|}
    Deferred Acceptance (DA) & $ 1$ &$3$ & \multirow{2}{*}{Stable, Truthful for $\men$}\\
    \cline{1-3}
    DA under Complete Preferences & $ 1$ & $5$ &\\
    \hline 
    Boston  & 0 & 2& \\
    \hline
    \end{tabular}
    \caption{\centering Matchings: Summary of Results.% \\ \rmark{(*) indicates that the mechanism was proposed in this paper.}
    }
    \label{tab:matchings-sum}
\end{table}

\eden{I think the rest should be in the appendix}

\subsection{Deferred Acceptance (Gale-Shapley)}\label{sec:deferred-acceptance}

The \emph{deferred acceptance} algorithm \cite{gale1962college} is one of the most well-known mechanisms for computing a stable matching. 
In this algorithm, one side of the market --- here, $\men$ --- proposes, while the other side --- $\women$ --- accepts or rejects offers iteratively. For completeness, we provide its description in the appendix.
\begin{toappendix}
\subsection{Deferred Acceptance (Gale-Shapley): descriptions and proofs}
The algorithm proceeds as follows:
\begin{enumerate}
    \item Each $\man \in \men$ proposes to his most preferred alternative according to $\succ_{\man}$ that has not reject him yet and that he prefers over being matched. 

    \item Each $\woman \in W$ tentatively accepts her most preferred proposal according to $\succ_{\woman}$ that she prefers over being matched, and rejects the rest.

    \item The rejected agents propose to their next most preferred choice as in step 1.

    \item The process repeats until no one of the rejected agents wishes to make a new proposal.

    \item The final matching is determined by the last set of accepted proposal.
\end{enumerate}
\end{toappendix}

It is well known that the mechanism is truthful for the proposing side ($\men$) but untruthful for the other side ($\women$).
That is, the agents in $\women$ may have an incentive to misreport their preferences to obtain a better match.
Moreover, there is provably no mechanism for two-sided matching that is truthful for both sides \cite{roth1982economics}.
% \erel{TODO: find citation for this}

\er{This section provides a more nuanced analysis of the amount of knowledge required by $\women$ agents to manipulate safely. We focus on a specific agent $w_1\in W$, 
with ranking $m_1 \succ_{w_1} m_2 \succ_{w_1} \cdots $.
There are three kinds of potential manipulations for any $w_1\in W$:}
\begin{enumerate}
\item 
Demoting some $m_j\in M$ from above $\phi$ to below $\phi$ (i.e., claiming that an acceptable partner is unacceptable for her). This is equivalent to reporting only a prefix of the ranking, sometimes called \emph{truncation}  \citep{roth1999truncation,ehlers2008truncation,coles2014optimal}.
\item 
Promoting some $m_j\in M$ from below $\phi$ to above $\phi$
(i.e., claiming that an unacceptable partner is acceptable for her). 
\item 
Reordering some $m_i,m_j\in M$ (i.e., claiming that she prefers $m_i$ to $m_j$ where in fact she prefers $m_j$ to $m_i$).
\end{enumerate}

\begin{lemmarep}
\label{lem:da-truncation}
\er{
In Deferred Acceptance with $k\geq 3$ known agents, there may be a safe and profitable truncation manipulation for $w_1$.
}
% EREL: stated as lemmas, to emphasize the kinds of manipulations used (for better understanding of the mechanism).
\end{lemmarep}
 
\begin{proof}
Suppose the $3$ known agents are as follows:
\begin{itemize}
\item Let $\woman_2 \in \women$ be an agent whose preferences are $\man_2 \succ_{\woman_2} \man_1 \succ_{\woman_2} \cdots $.

\item The preferences of $\man_1$ are $\woman_2 \succ_{\man_1} \woman_1 \succ_{\man_1} \cdots$.

\item The preferences of $\man_2$ are $\woman_1 \succ_{\man_2} \woman_2 \succ_{\man_2} \cdots$.
\end{itemize}
When $\woman_1$ is truthful, the resulting matching includes the pairs $(\man_1, \woman_2)$ and $(\man_2, \woman_1)$, since in this case it proceeds as follows:
\begin{itemize}
\item In the first step, all the agents in $\men$ propose to their most preferred option: $\man_1$ proposes to $\woman_2$ and $\man_2$ proposes to $\woman_1$.

Then, the agents in $\women$ tentatively accept their most preferred proposal among those received, as long as she prefers it to remaining unmatched: 
$\woman_1$ tentatively accepts $\man_2$ since she prefers him over being unmatched, and since $\man_2$ must be her most preferred option among the proposers as $\man_1$ (her top choice) did not propose to her.
Similarly, $\woman_2$ tentatively accepts $\man_1$.

\item In the following steps, more rejected agents in $\men$ might propose to $\woman_1$ and $\woman_2$, but they will not switch their choices, as they prefer $\man_2$ and $\man_1$, respectively. 

Thus, when the algorithm terminates $\man_1$ is matched to $\woman_2$ and $\man_2$ is matched to $\woman_1$, which means that $\woman_1$ is matched to her second-best option.
\end{itemize}

We shall now see that $\woman_1$ can increase her utility by truncating her preference order to $\man_1 \succ'_{\woman_1} \phi \succ'_{\woman_1} \man_2 \succ'_{\woman_1} \cdots $.
The following shows that in this case, the resulting matching includes the pairs $(\man_1, \woman_1)$ and $(\man_2, \woman_2)$, meaning that $\woman_1$ is matched to her most preferred option (instead of her second-best).
\begin{itemize}
\item In the first step, as before, $\man_1$ proposes to $\woman_2$ and $\man_2$ proposes to $\woman_1$.

However, here,  $\woman_1$ rejects $\man_2$ because, according to her false report, she prefers being unmatched over being matched to $\man_2$.
As before, $\woman_2$ tentatively accepts $\man_1$.

\item In the second step, $\man_2$, having been rejected by $\woman_1$, proposes to his second-best choice $\woman_2$.

Since $\woman_2$ prefers $\man_2$ over $\man_1$,  she rejects $\man_1$ and tentatively accepts $\man_2$.

\item In the third step, $\man_1$, having been rejected by  $\woman_2$, proposes to his second-best choice $\woman_1$.

$\woman_1$ now tentatively accepts $\man_1$ since according to her false report, she prefers him over being unmatched.

\item In the following steps, more rejected agents in $\men$ might propose to $\woman_1$ and $\woman_2$, but they will not switch their choices, as they prefer $\man_2$ and $\man_1$, respectively. 

Thus, when the algorithm terminates $\man_1$ will be matched to $\woman_1$ and and $\man_2$ will be matched to $\woman_2$. 
\end{itemize}

Thus, regardless of the reports of the remaining $(n-4)$ remaining (unknown) agents, $\woman_1$ strictly prefers to manipulate her preferences.
\end{proof}

In some settings, it is reasonable to assume that agents always prefer being matched if possible. In such cases, the mechanism is designed to accept only preferences over agents from the opposite set (or equivalently, orders where being unmatched is always the least preferred option). Clearly, under this restriction, truncation is not a possible manipulation.
We prove that even when truncation is not possible, the RAT-degree is bounded.

\begin{lemmarep}
\label{lem:da-switch}
\er{
In Deferred Acceptance without truncation, with $k\geq 5$ known agents, there may be a safe and profitable reorder manipulation for $w_1$.
}
\end{lemmarep}

\begin{proof}
Let us assume that the manipulator is $w_1$, with preference list $m_1 \succ m_2 \succ m_3 \succ \cdots$.

Suppose the $5$ known agents are as follows:
\begin{itemize}
\item Let $\woman_2 \in \women$ be an agent whose preferences are $\man_1 \succ_{\woman_2} \man_2 \succ_{\woman_2} \man_3 \succ_{\woman_2} \cdots $.

\item Let $\woman_3 \in \women$ be an agent whose preferences are $\man_2 \succ_{\woman_3} \man_1 \succ_{\woman_3} \man_3 \succ_{\woman_2} \cdots $.

\item The preferences of $\man_1$ are $\woman_3 \succ_{\man_1} \woman_1 \succ_{\man_1} \woman_2 \succ_{\man_1} \cdots$.

\item The preferences of $\man_2$ are $\woman_1 \succ_{\man_2} \woman_3 \succ_{\man_2} \woman_2 \succ_{\man_2} \cdots$.

\item The preferences of $\man_3$ are $\woman_1 \succ_{\man_2} \woman_3 \succ_{\man_2} \woman_2 \succ_{\man_2} \cdots$.
\end{itemize}

When $\woman_1$ is truthful, the resulting matching includes the pairs $(\man_1, \woman_3)$, $(\man_2, \woman_1)$ and $(\man_3, \woman_2)$, since in this case it proceeds as follows:
\begin{itemize}
\item In the first step, all the agents in $\men$ propose to their most preferred option: $\man_1$ proposes to $\woman_3$, while $\man_2$ and $\man_3$ proposes to $\woman_1$.

Then, the agents in $\women$ tentatively accept their most preferred proposal among those received.
$\woman_1$ tentatively accepts $\man_2$ since he must be her most preferred option among the proposers -- as he is her second-best and her top choice, $\man_1$, did not propose to her; and rejects $\man_3$.
Similarly, $\woman_3$ tentatively accepts $\man_1$.
$\woman_2$ did not get any proposes.  

\item In the second step, $\man_3$, having been rejected by $\woman_1$, proposes to his second-best choice $\woman_3$.

Since $\woman_3$ prefers her current match $\man_1$ over $\man_3$,  she rejects $\man_3$.

\item In the third step, $\man_3$, having been rejected by $\woman_3$, proposes to his third-best choice $\woman_2$.

Since $\woman_2$ does not have a match, she tentatively accepts $\man_3$.

\item In the following steps, more rejected agents in $\men$ - that are not $\man_1, \man_2$ and $\man_3$, might propose to $\woman_1, \woman_2$ and $\woman_3$, but they will not switch their choices, as they can only be least preferred than their current match. 

Thus, when the algorithm terminates $\man_1$ is matched to $\woman_3$, $\man_2$ is matched to $\woman_1$, and $\man_3$ is matched to $\woman_2$,
which means that $\woman_1$ is matched to her second-best option.
\end{itemize}

But if $\woman_1$ swaps $m_2$ and $m_3$ and reports $\man_1 \succ'_{\woman_1} \man_3 \succ'_{\woman_1} \man_2 \succ'_{\woman_1} \cdots$, then the resulting matching includes the pairs $(\man_1, \woman_1)$, $(\man_2, \woman_3)$ and $(\man_3, \woman_2)$, meaning that $\woman_1$ is matched to her most preferred option (instead of her second-best).
\begin{itemize}
\item In the first step, as before, $\man_1$ proposes to $\woman_3$, while $\man_2$ and $\man_3$ proposes to $\woman_1$.

However, here,  $\woman_1$ tentatively accepts $\man_3$ and rejects $\man_2$.
As before, $\woman_3$ tentatively accepts $\man_1$ and $\woman_2$ did not get any proposes.

\item In the second step, $\man_2$, having been rejected by $\woman_1$, proposes to his second-best choice $\woman_3$.

Since $\woman_3$ prefers $\man_2$ over $\man_1$,  she rejects $\man_1$ and tentatively accepts $\man_2$.

\item In the third step, $\man_1$, having been rejected by  $\woman_3$, proposes to his second-best choice $\woman_1$.

$\woman_1$ tentatively accepts $\man_1$ since according to her false report, she prefers him over $\man_3$.

\item In the fourth step, $\man_3$, having been rejected by  $\woman_1$, proposes to his second-best choice $\woman_3$.

$\woman_3$ prefers her current match $\man_2$, and thus rejects $\man_3$.

\item In the fifth step, $\man_3$, having been rejected by  $\woman_3$, proposes to his third-best choice $\woman_2$.

As $\woman_2$ does not have a match, she tentatively accepts $\man_3$.

\item In the following steps, more rejected agents in $\men$ - that are not $\man_1, \man_2$ and $\man_3$, might propose to $\woman_1, \woman_2$ and $\woman_3$, but they will not switch their choices, as they can only be least preferred than their current match. 

Thus, when the algorithm terminates $\man_1$ is matched to $\woman_1$, $\man_2$ is matched to $\woman_3$, and $\man_3$ is matched to $\woman_2$. 
\end{itemize}

Thus, regardless of the reports of the remaining $(n-6)$ remaining (unknown) agents report, $\woman_1$ strictly prefers to manipulate her preferences.
\end{proof}

\begin{remark}
    In the women-proposing variant, $w_1$ gets $m_1$ both when she is truthful and when she manipulates. 

    Hence, A "mixture" variant of DA, which mixes between men-proposing and women-proposing, will not have a higher RAT degree.
\end{remark}

\er{We conjecture that the numbers $3$ and $5$ are tight, but currently have only a weaker lower bound of $1$, which follows from the following lemma:}
\begin{lemmarep}
\er{
In Deferred Acceptance, no agent has a safe manipulation.
}
\end{lemmarep}
\begin{proof}
We consider each of the three possible kinds of manipulations, and show that each of them is not safe for $w_1$.

\paragraph{Demoting some $m_j\in M$ to below $\phi$.}
For some unknown agents' rankings, $m_j$ is the only one who proposes to $w_1$.
Therefore, when $w_1$ demotes $m_j$ she rejects him and remains unmatched, but when she is truthful she is matched to him, which is better for her.

\paragraph{Promoting some $m_j\in M$ to above $\phi$.}
An analogous argument works in this case too.

\paragraph{Reordering some $m_i,m_j\in M$.}
For some unknown agents' rankings, $m_i$ and $m_j$ are the only ones who propose to $w_1$, and they do so simultaneously. Therefore, when $w_1$ is truthful she is matched to the one she prefers, but when she manipulates she rejects him in favor of the less-preferred one.

In all three cases, the manipulation is not safe.
\end{proof}

Combining the above lemmas gives:

\begin{theorem}
\label{prop-def-acc-trunc} 
\label{prop-def-acc-no-trunc}
The RAT-degree of Deferred Acceptance is at least $1$.

It is at most $3$ when truncation is allowed, 
and at most $5$ when truncation is not allowed.
\end{theorem}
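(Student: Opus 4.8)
The plan is to assemble the theorem directly from the three preceding lemmas, each of which pins down exactly one of the three bounds, so that the combining step itself is essentially bookkeeping against the definition of RAT-degree.

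First, for the lower bound, I would invoke the lemma establishing that no agent has a safe manipulation in Deferred Acceptance when nothing is known about the other agents (the $0$-known-agents case). By definition this says that Deferred Acceptance is risk-avoiding truthful, and by the observation that a mechanism is RAT if and only if its RAT-degree is at least $1$, we immediately conclude that the RAT-degree is at least $1$. Next, for the upper bound of $3$ under truncation, I would appeal to \Cref{lem:da-truncation}, which exhibits a concrete agent $w_1$, a set of $3$ known agents $\{w_2,m_1,m_2\}$ with explicitly chosen preferences, and a truncation manipulation that is profitable and safe given those three reports. By the definition of $k$-known-agents safely-manipulable, this shows Deferred Acceptance is $3$-known-agents safely-manipulable, so its RAT-degree is at most $3$ (here one may also cite \Cref{prop:monotonicity} to note that the witness at $k=3$ suffices). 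Finally, for the upper bound of $5$ without truncation, I would appeal to \Cref{lem:da-switch}, which likewise produces an agent, a set of $5$ known agents, and a reorder manipulation that is profitable and safe; hence the restricted mechanism is $5$-known-agents safely-manipulable, giving RAT-degree at most $5$.

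The substantive content — and the place where the ``hard part'' really lives, though it is discharged inside the lemmas rather than in this combining step — is the safety direction of each manipulation: one must verify that the manipulation weakly helps $w_1$ for \emph{every} profile of the unknown agents, not merely for the single witnessing profile used to establish profitability. This rests on the structural feature of Deferred Acceptance that the tentative acceptances forced by the chosen $3$ (resp.\ $5$) known agents cannot be overturned by any proposals from the unknown men, which is exactly what the step-by-step run of the algorithm in each lemma's proof certifies. Once \Cref{lem:da-truncation}, \Cref{lem:da-switch}, and the $0$-known-agents lemma are in place, the theorem is an immediate consequence, and I would simply state the three bounds and cite the corresponding lemma for each.
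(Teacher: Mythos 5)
Your proposal matches the paper's proof exactly: the paper derives the theorem by the single line ``Combining the above lemmas gives,'' citing the lemma that no agent has a safe manipulation with zero known agents (hence RAT, hence degree at least $1$), Lemma~\ref{lem:da-truncation} for the $3$-known-agents truncation manipulation, and Lemma~\ref{lem:da-switch} for the $5$-known-agents reorder manipulation. Your accounting of where the real work lives (the safety verification inside each lemma, showing the forced acceptances cannot be disturbed by the unknown proposers) is also accurate, so there is nothing to add.
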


\subsection{Boston Mechanism}
The \emph{Boston} mechanism \cite{abdulkadirouglu2003school} is a widely used mechanism for assigning students or schools. 
It is not truthful for both sides. Moreover, we show that it can be safely manipulated with little information.
% Unlike the Deferred Acceptance algorithm, it  prioritizes higher-ranked choices in a sequential manner.
\begin{toappendix}
\subsection{Boston Mechanism: descriptions and proofs}
The mechanism proceeds in rounds as follows:
\begin{enumerate}
    \item Each $\man \in \men$ proposes to his most preferred alternative according $\succ_{\man}$ that has not yet rejected him and is still available.

    \item Each $\woman \in \women$ (permanently) accepts her most preferred proposal according to $\succ_{\woman}$ and rejects the rest. Those who accept a proposal become unavailable. 

    \item The rejected agents propose to their next most preferred choice as in step $1$.

    \item The process repeats until all agents are either assigned or have exhausted their preference lists.
\end{enumerate}

\end{toappendix}

\begin{lemmarep}
\label{lem:prop-boston}
\er{
In the Boston Mechanism with $k\geq 2$ known agents, there may be a safe and profitable reorder manipulation for some $m_1\in M$.
}
\end{lemmarep}
\begin{proof}
Let $\man_1 \in \men$ be an agent with preferences $\woman_1 \succ_{\man_1} \woman_2 \succ_{\man_1} \cdots$.
Suppose the two known agents are as follows:
\begin{itemize}
\item Let $\man_2 \in \men$ be an agent whose preferences are similar to $m_1$, $\woman_1 \succ_{\man_2} \woman_2 \succ_{\man_2} \cdots$.
\item The preferences of $\woman_1$ are $\man_2 \succ_{\woman_1} \man_1 \succ_{\woman_1} \cdots$.
\end{itemize}

When $\man_1$ reports truthfully, the mechanism proceeds as follows: In the first round, both $\man_1$ and $\man_2$ proposes to $\woman_1$. Since $\woman_1$ prefers $\man_2$, she rejects $\man_1$ and becomes unavailable. Thus, in the second round, $\man_1$ proposes to $\woman_2$.

We prove that $\man_1$ has a safe-and-profitable manipulation: $\woman_2 \succ'_{\man_1} \woman_3 \succ'_{\man_1} \cdots \succ'_{\man_1} \woman_1$.
The manipulation is safe since $\man_1$ never had a chance to be matched with $\woman_1$ (regardless of his report). 
The manipulation is profitable as there exists a case where the manipulation improves $\man_1$’s outcome. Consider the case where the top choice of $\woman_2$ is $\man_1$ and there is another agent, $\man_3$, whose top choice is $\woman_2$. 
Notice that $\woman_2$ prefers $\man_1$ over $\man_3$. 
When $\man_1$ reports truthfully, then in the first round, $\man_3$ proposes to $\woman_2$ and gets accepted, making her unavailable by the time $\man_1$ reaches her in the second round.
However, if $\man_1$ manipulates, he proposes to $\woman_2$ in the first round, and she will accept him (as she prefers him over $\man_3$). This guarantees that $\man_1$ is matched to $\woman_2$, improving his outcome compared to truthful reporting.
\end{proof}

The lemma implies:
\begin{theorem}
    \label{prop-boston}
    The RAT-degree of the Boston mechanism is at most $2$.
\end{theorem}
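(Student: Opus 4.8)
The plan is to obtain the theorem as an immediate corollary of the preceding \Cref{lem:prop-boston}. Recall that the RAT-degree of a mechanism is defined as the minimum $k$ for which the mechanism is $k$-known-agents safely-manipulable (or $n$ if no such $k$ exists). \Cref{lem:prop-boston} exhibits a concrete agent $\man_1\in\men$, a set $K$ of exactly two known agents (namely $\man_2$ and $\woman_1$ with the stated preferences), and a reorder manipulation of $\man_1$ that is simultaneously profitable and safe given $K$ and its reported preferences. By the definition of ``$k$-known-agents safely-manipulable'', this witnesses that the Boston mechanism is $2$-known-agents safely-manipulable; hence the minimum witnessing $k$ is at most $2$, i.e.\ the RAT-degree is at most $2$. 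So the only work in the theorem proper is to invoke the definition, and I would state exactly this one-line reduction.

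Since the genuine content sits in \Cref{lem:prop-boston}, the key idea I would follow there is to choose the two known agents so that $\man_1$ is \emph{guaranteed} never to obtain his true top choice $\woman_1$: because $\woman_1$ ranks the competing known proposer $\man_2$ above $\man_1$, agent $\man_1$ is rejected by $\woman_1$ in the first round regardless of what the unknown agents report. Once $\woman_1$ is provably unreachable, pushing $\woman_1$ to the bottom of $\man_1$'s report costs nothing, which is the source of safety; and promoting $\woman_2$ to the top lets $\man_1$ compete for $\woman_2$ already in the \emph{first} round rather than arriving in the second round after $\woman_2$ may have been permanently taken. This timing asymmetry is exactly what the Boston mechanism (unlike deferred acceptance) rewards, because a school irrevocably accepts its favorite first-round proposer.

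For profitability I would only need to supply \emph{some} completion of the unknown agents' reports under which $\man_1$ strictly gains: add a competitor $\man_3$ whose top choice is $\woman_2$, and let $\woman_2$ rank $\man_1$ above $\man_3$. Truthfully, $\man_3$ grabs $\woman_2$ in round one and $\man_1$ arrives too late; after the manipulation $\man_1$ proposes to $\woman_2$ in round one and is accepted over $\man_3$. The main obstacle is the safety direction, which must quantify over \emph{all} reports of the $n-2$ unknown agents rather than just one. I expect this to be the only delicate point, and it is handled by observing that the manipulation merely relocates an already-unattainable school to the bottom while advancing $\woman_2$; consequently the comparison between truthful and manipulated outcomes reduces to what happens from $\woman_2$ onward, where the manipulation can only help or leave $\man_1$ unchanged. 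Given \Cref{lem:prop-boston}, the theorem needs no computation beyond this appeal to the RAT-degree definition.
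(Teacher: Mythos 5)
Your proposal matches the paper's argument essentially exactly: the theorem is obtained as an immediate corollary of \Cref{lem:prop-boston} via the definition of RAT-degree, and your reconstruction of that lemma uses the same two known agents ($\man_2$ with top choice $\woman_1$, and $\woman_1$ preferring $\man_2$ to $\man_1$), the same manipulation (demoting the unreachable $\woman_1$ to the bottom so that $\woman_2$ is proposed to in round one), the same profitability witness ($\man_3$ targeting $\woman_2$, with $\woman_2$ preferring $\man_1$), and the same safety reasoning. No substantive differences to report.
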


% We have seen two matching mechanisms with a low RAT-degree -- at most a small constant independent of $n$. This raises an open question:

% \begin{open}
% Is there a stable matching mechanism with RAT-degree in $\Omega(n)$?
% \end{open}
% \eden{\url{https://web.stanford.edu/~alroth/papers/bostonMay182006.pdf}}
% \input{files/budget-aggregation}
% \newpage
\section{Discussion and Future Work}\label{sec:discussion}

Our main goal in this paper is to encourage a more quantitative approach to truthfulness that can be applied to various problems. When truthfulness is incompatible with other desirable properties, we aim to find mechanisms that are ``as hard to manipulate as possible'', where hardness is measured by the amount of knowledge required for a safe manipulation. 

% We have considered several alternatives towards the same goal. \eden{I don't understand this sentence..}

% \eden{To myself: Probably to remove... add a remark about the two types of risk}
% \paragraph{Using Randomization} If we used randomness, we could eliminate all safe manipulations. Consider for example the following voting rule:
% \begin{itemize}
% \item With some small probability $p>0$, run random dictatorship, that is, choose a random voter and elect his top candidate;
% \item Otherwise, run plurality voting (or any other voting rule with desirable properties).
% \end{itemize}
% Due to the small probability of being a dictator, each voter might lose from manipulation, so no manipulation is safe. 
% This rule is not entirely far-fetched: governments do sometimes use randomization to encourage truth-telling in tax reports (see e.g. \cite{haan2012sound}). However, randomization is very unlikely to be acceptable in high-stakes voting scenarios.

\onlyComsoc{\eden{this command is very important! to same somewhere}
\setlength{\parskip}{2pt}}

\paragraph{Avoiding Risk.} Our definition applies only to agents who entirely \emph{avoid} even a small risk.
This definition can be justified based on the well-known behavioral phenomenon called "zero-risk bias"\cite{crosby2021laws,wells2017monetizing}. For example, experiments reported in the paper 'Prospect Theory: An Analysis of Decision under Risk', by \citet{kahneman2013prospect}, indicates that people underweight outcomes that are merely probable, in comparison with outcomes that are obtained with certainty.

Another justification stems from our distinction between two types of risk: the risk due to the randomization of the mechanism, and the risk due to uncertainty about other agents' preferences.
Our risk-avoidance assumption concerns only the second type of risk, for which there is usually no known probability distribution.
Without such knowledge, agents are facing a non-quantifiable unknown risk (often referred to as \emph{ambiguity}).\footnote{This is related to another behavioral phenomenon known as "ambiguity aversion" or "uncertainty aversion".} We believe that it is more reasonable to assume agents avoid this unknown risk altogether as they cannot even reliably distinguish between high and low risks.
% , and as a result, may fail to notice even large high-probability profits offsetting small low-probability losses. 
%
To compute expected utility with respect to this risk, one would need to adopt a Bayesian framework, which often requires strong and somewhat artificial assumptions about the strategies of other agents. We leave this direction for future work.

\paragraph{Randomized Mechanisms.} This distinction between the two types of risk naturally leads to the topic of randomized mechanisms.
In some settings, RAT-degree is more interesting for deterministic mechanisms, as many impossibility results apply only to deterministic mechanisms.
However, in contexts such as voting, where there are impossibility results for randomized mechanisms too, the RAT-degree is applicable and potentially useful.
In such settings, one can assume that agents avoid the non-quantifiable unknown risk that comes from the uncertainty about other agents' preferences, while treating the known risk induced by the mechanism  construction using standard risk models—such as risk-neutral or risk-averse—aimed at maximizing expected utility.

\paragraph{Utility: Zero vs. Positive.}
Our model assumes that even a small positive gain can have a significant impact on behavior.
For instance, the RAT-degree of the first-price auction is $0$, but increases to $1$ when we introduce a fixed positive discount, even when the discount is arbitrarily small (see \Cref{sec:single-item-auction}).
This assumption is supported by findings in behavioral economics.
For example, the paper 'Zero as a Special Price: The True Value of Free Products', by \citet{shampanier2007zero}, shows that behavior changes significantly when something is free versus when it carries even a tiny cost.
% highlighting that “zero” is a point of psychological discontinuity.
%
A related real-world example is the shift in behavior following the introduction of a small charge on plastic bags in many countries. When bags were free, most people used them without thinking; even a tiny positive cost was sufficient for `nudging' people to bring reusable bags.

\paragraph{Tradeoffs.} 
A particularly interesting future direction is to explore the tradeoffs between a high RAT-degree and other desirable properties, such as fairness. 
For example, in the case of two-sided matching (\Cref{sec:matching}), we know that stability can be achieved with a low RAT-degree of at most $3$, but is impossible to achieve with a RAT-degree of $n$. Where exactly does the boundary lie? Can we characterize the RAT-degree in relation to different fairness properties?

\paragraph{Beyond Worst-Case.} We defined the RAT-degree as a ``worst case'' concept: to prove an upper bound, we find a single example of a safe manipulation. This is similar to the situation with the classic truthfulness notion, where to prove non-truthfulness, it is sufficient to find a single example of a manipulation.
To go beyond the worst case, one could follow relaxations of truthfulness, such as truthful-in-expectation \cite{lavi2011truthful} or strategyproofness-in-the-large \citep{azevedo2019strategy}, and define similarly ``RAT-degree in expectation'' or ``RAT-degree in the large''.

\paragraph{Information About the Known-Agents.}
Our definition assumes that whenever an agent is known, we know their exact preference—that is, the precise preferences $P_i$ they will report from their domain $\domain_i$. 
In practice, however, we often have only partial information on the known-agents; for example, we might know that their preferences belong to a smaller subset of their domain.
Consider \Cref{tab:safe-manip-i} in \Cref{sec:intuitive}. 
Such information still allows us to consider only a strict subset of the columns, leading to a weaker requirement than standard truthfulness.
However, our results show that even under the assumption that we know the exact preferences, identifying the RAT-degree can already be highly nontrivial. For this reason, we only highlight this direction as a promising avenue for future research.

\paragraph{Changing Quantifiers.}
One could argue for a stronger definition requiring that a safe manipulation exists for every possible set of $k$ known-agents, rather than for some set, or similarly for every possible preference profile for the known agents rather than just in some profile.
However, we believe such definitions would be less informative, as in many cases, a manipulation that is possible for some set of $k$ known-agents, is not possible for \emph{any} such set. For example, in the first-price auction with discount (see \Cref{sec:action-first-price-w-discount}), the RAT-degree is $1$ under our definition. But if we required the the knowledge on any agent’s bid would allow manipulation rather than just one, the degree would automatically jump to $n-1$, making the measure far less meaningful.

\paragraph{Combining the "known agents" concept with other notions.} 
We believe that the ``known agents'' approach can be used to quantify the degree to which a mechanism is robust to other types of manipulations (besides safe manipulations),
such as ``always-profitable'' manipulations or ``obvious'' manipulation.
Accordingly, one can define the ``max-min-strategyproofness degree'' or the ``NOM degree'' (see the \extendedVer).

% other truthfulness relaxations as well. For example one can 
% define the "NOM degree" - the smallest number of known agents required to allow an obvious manipulation.

% not-obvious manipulability — the degree to which a mechanism is robust to always-profitable manipulations (rather than to safe manipulations).

\paragraph{Alternative Information Measurements.}
Another avenue for future work is to study other ways to quantify truthfulness. For example, instead of counting the number of known \emph{agents}, one could count the number of \emph{bits} that an agent should know about other agents' preferences in order to have a safe manipulation. The disadvantage of this approach is that different domains have different input formats, and therefore it would be hard to compare numbers of bits in different domains (see the \extendedVer~ for more details).

\paragraph{Other applications.} The RAT-degree can potentially be useful in any social-choice setting in which truthful mechanisms are not known, or lack other desirable properties. Examples include combinatorial auctions, multiwinner voting, budget aggregation and facility location.

\section{Acknowledgement}
This research is partly supported by the Israel Science Foundation grants 712/20, 3007/24, 2697/22 and 1092/24.
The research of Biaoshuai Tao is supported by the National Natural Science Foundation of China (No. 62472271) and the Key Laboratory of Interdisciplinary Research of Computation and Economics (Shanghai University of Finance and Economics), Ministry of Education.

We sincerely appreciate Alexandros Psomas for his valuable contribution to this work.
We are grateful to Sophie Bade, Avinatan Hassidim, Assaf Romm, Yonatan Aumann, Vincent Conitzer, Jerome Lang, and Reshef Meir for their insights and helpful answers, to Paritosh Verma for his contribution; and to Michael Greinecker for his helpful responses.\footnote{https://economics.stackexchange.com/q/59899/385 and https://economics.stackexchange.com/a/24369/385}

Lastly, we would like to thank the reviewers in EC 2025 and COMSOC 2025 for their helpful comments.

% Bibliography
% \newpage
\bibliographystyle{ACM-Reference-Format}
\bibliography{main}

@article{bartholdi1989computational,
  title={The computational difficulty of manipulating an election},
  author={Bartholdi, John J and Tovey, Craig A and Trick, Michael A},
  journal={Social choice and welfare},
  volume={6},
  pages={227--241},
  year={1989},
  publisher={Springer}
}

@article{bartholdi1991single,
  title={Single transferable vote resists strategic voting},
  author={Bartholdi III, John J and Orlin, James B},
  journal={Social Choice and Welfare},
  volume={8},
  number={4},
  pages={341--354},
  year={1991},
  publisher={Springer}
}

@article{walsh2011hard,
  title={Where are the hard manipulation problems?},
  author={Walsh, Toby},
  journal={Journal of Artificial Intelligence Research},
  volume={42},
  pages={1--29},
  year={2011}
}

@article{faliszewski2010ai,
  title={AI’s war on manipulation: Are we winning?},
  author={Faliszewski, Piotr and Procaccia, Ariel D},
  journal={AI Magazine},
  volume={31},
  number={4},
  pages={53--64},
  year={2010}
}

@article{veselova2016computational,
  title={Computational complexity of manipulation: a survey},
  author={Veselova, Yu A},
  journal={Automation and Remote Control},
  volume={77},
  pages={369--388},
  year={2016},
  publisher={Springer}
}

@article{waxman2021manipulation,
  title={Manipulation of k-coalitional games on social networks},
  author={Waxman, Naftali and Hazon, Noam and Kraus, Sarit},
  journal={arXiv preprint arXiv:2105.09852},
  year={2021}
}

@inproceedings{chevaleyre2009compiling,
	title={Compiling the votes of a subelectorate},
	author={Chevaleyre, Yann and Lang, J{\'e}r{\^o}me and Maudet, Nicolas and Ravilly-Abadie, Guillaume},
	booktitle={Proceedings of the 21st International Joint Conference on Artificial Intelligence ({IJCAI-2009})},
	pages={97--102},
	year={2009}
}

@inproceedings{xia2010compilation,
	title={Compilation complexity of common voting rules},
	author={Xia, Lirong and Conitzer, Vincent},
	booktitle={Proceedings of the 24th AAAI Conference on Artificial Intelligence  ({AAAI-2010})},
	pages={915--920},
	year={2010}
}

@inproceedings{karia2021compilation,
	title={Compilation Complexity of Multi-Winner Voting Rules (Student Abstract)},
	author={Karia, Neel and Lang, J{\'e}r{\^o}me},
	booktitle={Proceedings of the 35th AAAI Conference on Artificial Intelligence ({AAAI-2021})},
	pages={15809--15810},
	year={2021}
}

@inproceedings{barrot2017manipulation,
	title={Manipulation of Hamming-based approval voting for multiple referenda and committee elections},
	author={Barrot, Nathana{\"e}l and Lang, J{\'e}r{\^o}me and Yokoo, Makoto},
	booktitle={Proceedings of the 16th Conference on Autonomous Agents and MultiAgent Systems ({AAMAS-2017})},
	pages={597--605},
	year={2017}
}

@inproceedings{lackner2018approval,
	title={Approval-based multi-winner rules and strategic voting},
	author={Lackner, Martin and Skowron, Piotr},
	booktitle={Proceedings of the 27th International Joint Conference on Artificial Intelligence ({IJCAI-2018})},
	pages={340--346},
	year={2018}
}

@inproceedings{lackner2023free,
	title={Free-Riding in Multi-Issue Decisions},
	author={Lackner, Martin and Maly, Jan and Nardi, Oliviero},
	booktitle={Proceedings of the 22nd International Conference on Autonomous Agents and Multiagent Systems ({AAMAS-2023})},
	pages={2040--2048},
	year={2023}
}

@inproceedings{lipton2004approximately,
  title={On approximately fair allocations of indivisible goods},
  author={Lipton, Richard J and Markakis, Evangelos and Mossel, Elchanan and Saberi, Amin},
  booktitle={Proceedings of the 5th ACM Conference on Electronic Commerce},
  pages={125--131},
  year={2004}
}

@article{barman2019fair,
  title={Fair division of indivisible goods among strategic agents},
  author={Barman, Siddharth and Ghalme, Ganesh and Jain, Shweta and Kulkarni, Pooja and Narang, Shivika},
  journal={arXiv preprint arXiv:1901.09427},
  year={2019}
}

@article{Steinhaus48,
	author = {Steinhaus, Hugo},
	title = {The Problem of Fair Division},
	journal = {Econometrica},
	volume = {16},
	number = {1},
	year = {1948},
	pages = {101--104},
}

@article{Steinhaus49,
	author = {Steinhaus, Hugo},
	title = {Sur La Division Pragmatique},
	journal = {Econometrica},
	volume = {17},
	year = {1949},
	pages = {315--319},
}

@inproceedings{tao2022existence,
  title={On existence of truthful fair cake cutting mechanisms},
  author={Tao, Biaoshuai},
  booktitle={Proceedings of the 23rd ACM Conference on Economics and Computation},
  pages={404--434},
  year={2022}
}

@article{dubins1961cut,
  title={How to cut a cake fairly},
  author={Dubins, Lester E and Spanier, Edwin H},
  journal={The American Mathematical Monthly},
  volume={68},
  number={1P1},
  pages={1--17},
  year={1961},
  publisher={Taylor \& Francis}
}

@article{woodall1986note,
  title={A note on the cake-division problem.},
  author={Woodall, Douglas R},
  journal={J. Comb. Theory, Ser. A},
  volume={42},
  number={2},
  pages={300--301},
  year={1986}
}

@article{aleskerov1999degree,
  title={Degree of manipulability of social choice procedures},
  author={Aleskerov, Fuad and Kurbanov, Eldeniz},
  journal={Current trends in Economics: theory and applications},
  pages={13--27},
  year={1999},
  publisher={Springer}
}

@article{andersson2014budget,
  title={Budget balance, fairness, and minimal manipulability},
  author={Andersson, Tommy and Ehlers, Lars and Svensson, Lars-Gunnar},
  journal={Theoretical Economics},
  volume={9},
  number={3},
  pages={753--777},
  year={2014},
  publisher={Wiley Online Library}
}

@article{andersson2014least,
  title={Least manipulable envy-free rules in economies with indivisibilities},
  author={Andersson, Tommy and Ehlers, Lars and Svensson, Lars-Gunnar},
  journal={Mathematical Social Sciences},
  volume={69},
  pages={43--49},
  year={2014},
  publisher={Elsevier}
}

@article{troyan2020obvious,
  title={Obvious manipulations},
  author={Troyan, Peter and Morrill, Thayer},
  journal={Journal of Economic Theory},
  volume={185},
  pages={104970},
  year={2020},
  publisher={Elsevier}
}

@article{ortega2022obvious,
  title={Obvious manipulations in cake-cutting},
  author={Ortega, Josu{\'e} and Segal-Halevi, Erel},
  journal={Social Choice and Welfare},
  volume={59},
  number={4},
  pages={969--988},
  year={2022},
  publisher={Springer}
}

@article{BU2023Rat,
title = {On existence of truthful fair cake cutting mechanisms},
journal = {Artificial Intelligence},
volume = {319},
pages = {103904},
year = {2023},
issn = {0004-3702},
doi = {https://doi.org/10.1016/j.artint.2023.103904},
url = {https://www.sciencedirect.com/science/article/pii/S0004370223000504},
author = {Xiaolin Bu and Jiaxin Song and Biaoshuai Tao},
keywords = {Fair division, Cake cutting, Mechanism design, Truthful},
abstract = {We study the fair division problem on divisible heterogeneous resources (the cake cutting problem) with strategic agents, where each agent can manipulate his/her private valuation to receive a better allocation. A (direct-revelation) mechanism takes agents' reported valuations as input and outputs an allocation that satisfies a given fairness requirement. A natural and fundamental open problem, first raised by Chen, Lai, Parkes, and Procaccia [1] and subsequently raised in reference [2], [3], [4], [5], [6], [7], etc., is whether there exists a deterministic, truthful, and envy-free (or even proportional) cake cutting mechanism. In this paper, we resolve this open problem by proving that there does not exist a deterministic, truthful and proportional cake cutting mechanism, even in the special case where all of the following hold:•there are only two agents;•each agent's valuation is a piecewise-constant function;•each agent is hungry: each agent has a strictly positive value on any part of the cake. The impossibility result extends to the case where the mechanism is allowed to leave some part of the cake unallocated. We also present a truthful and envy-free mechanism when each agent's valuation is piecewise-constant and monotone. However, if we require Pareto-optimality, we show that truthful is incompatible with approximate proportionality for any positive approximation ratio even for piecewise-constant and monotone value density functions. To circumvent the main impossibility result, we aim to design mechanisms that possess a certain degree of truthfulness. Motivated by the kind of truthfulness possessed by the classical I-cut-you-choose protocol, we propose a weaker notion of truthfulness, the proportional risk-averse truthfulness. We show that the well-known moving-knife (Dubins-Spanier) procedure and Even-Paz algorithm do not have this truthful property. We propose a mechanism that is proportionally risk-averse truthful and envy-free, and a mechanism that is proportionally risk-averse truthful that always outputs allocations with connected pieces.}
}

@article{lavi2011truthful,
  title={Truthful and near-optimal mechanism design via linear programming},
  author={Lavi, Ron and Swamy, Chaitanya},
  journal={Journal of the ACM (JACM)},
  volume={58},
  number={6},
  pages={1--24},
  year={2011},
  publisher={ACM New York, NY, USA}
}

@article{azevedo2019strategy,
  title={Strategy-proofness in the large},
  author={Azevedo, Eduardo M and Budish, Eric},
  journal={The Review of Economic Studies},
  volume={86},
  number={1},
  pages={81--116},
  year={2019},
  publisher={Oxford University Press}
}

@article{brams2006better,
  title={Better ways to cut a cake},
  author={Brams, Steven J and Jones, Michael A and Klamler, Christian and others},
  journal={Notices of the AMS},
  volume={53},
  number={11},
  pages={1314--1321},
  year={2006}
}

@inproceedings{slinko2008nondictatorial,
  title={Nondictatorial social choice rules are safely manipulable},
  author={Slinko, Arkadii and White, Shaun},
  booktitle={Proceedings of the Second International Workshop on Computational Social Choice (COMSOC-2008)},
  pages={403--414},
  year={2008}
}

@article{slinko2014ever,
  title={Is it ever safe to vote strategically?},
  author={Slinko, Arkadii and White, Shaun},
  journal={Social Choice and Welfare},
  volume={43},
  pages={403--427},
  year={2014},
  publisher={Springer}
}

@inproceedings{hazon2010complexity,
  title={Complexity of safe strategic voting},
  author={Hazon, Noam and Elkind, Edith},
  booktitle={Algorithmic Game Theory: Third International Symposium, SAGT 2010, Athens, Greece, October 18-20, 2010. Proceedings 3},
  pages={210--221},
  year={2010},
  organization={Springer}
}

@article{coles2014optimal,
  title={Optimal truncation in matching markets},
  author={Coles, Peter and Shorrer, Ran},
  journal={Games and Economic Behavior},
  volume={87},
  pages={591--615},
  year={2014},
  publisher={Elsevier}
}

@article{roth1999truncation,
  title={Truncation strategies in matching markets—in search of advice for participants},
  author={Roth, Alvin E and Rothblum, Uriel G},
  journal={Econometrica},
  volume={67},
  number={1},
  pages={21--43},
  year={1999},
  publisher={Wiley Online Library}
}

@article{ehlers2008truncation,
  title={Truncation strategies in matching markets},
  author={Ehlers, Lars},
  journal={Mathematics of Operations Research},
  volume={33},
  number={2},
  pages={327--335},
  year={2008},
  publisher={INFORMS}
}

@article{gale1962college,
  title={College admissions and the stability of marriage},
  author={Gale, David and Shapley, Lloyd S},
  journal={The American Mathematical Monthly},
  volume={69},
  number={1},
  pages={9--15},
  year={1962},
  publisher={Taylor \& Francis}
}

@inproceedings{gonczarowski2014manipulation,
  title={Manipulation of stable matchings using minimal blacklists},
  author={Gonczarowski, Yannai A},
  booktitle={Proceedings of the fifteenth ACM conference on Economics and computation},
  pages={449--449},
  year={2014}
}

@inproceedings{gonczarowski2024structural,
  title={Structural Complexities of Matching Mechanisms},
  author={Gonczarowski, Yannai A and Thomas, Clayton},
  booktitle={Proceedings of the 56th Annual ACM Symposium on Theory of Computing},
  pages={455--466},
  year={2024}
}

@article{ausubel2006lovely,
  title={The lovely but lonely Vickrey auction},
  author={Ausubel, Lawrence M and Milgrom, Paul},
  journal={Combinatorial auctions},
  volume={17},
  number={3},
  pages={22--26},
  year={2006},
  publisher={Citeseer}
}

@book{nisan2007algorithmic,
  title={Algorithmic Game Theory},
  author={Nisan, Noam and Roughgarden, Tim and Tardos, Eva and Vazirani, Vijay},
  year={2007},
  pages={238},
  publisher={Cambridge University Press}
}

@book{krishna2009auction,
  title={Auction theory},
  author={Krishna, Vijay},
  year={2009},
  publisher={Academic press}
}

@inproceedings{braverman2016interpolating,
  title={Interpolating between truthful and non-truthful mechanisms for combinatorial auctions},
  author={Braverman, Mark and Mao, Jieming and Weinberg, S Matthew},
  booktitle={Proceedings of the Twenty-Seventh Annual ACM-SIAM Symposium on Discrete Algorithms},
  pages={1444--1457},
  year={2016},
  organization={SIAM}
}

@TechReport{regret2018Fernandez,
  author={Fernandez, Marcelo Ariel},
  title={{Deferred acceptance and regret-free truth-telling}},
  year=2018,
  month=Jun,
  institution={The Johns Hopkins University,Department of Economics},
  type={Economics Working Paper Archive},
  url={https://ideas.repec.org/p/jhu/papers/65832.html},
  number={65832},
  abstract={The deferred acceptance mechanism has been widely adopted across centralized matching markets, despite the fact that it provides participants with opportunities to “game the system.” Accounting for the lack of information that participants typically have in these markets in practice, I introduce a new notion of behavior under uncertainty that captures participants’ aversion to experience regret. I show that participants optimally choose not to manipulate the deferred acceptance mechanism in order to avoid regret. Moreover, the deferred acceptance mechanism is the unique mechanism within an interesting class (quantile-stable) to induce honesty from participants in this way.},
  keywords={Market design; Deferred acceptance; Gale-Shapley; Regret; Regret-Free; Regret-FreeTruth-Telling; Man},
  doi={},
}

@article{gibbard1973manipulation,
  title={Manipulation of voting schemes: a general result},
  author={Gibbard, Allan},
  journal={Econometrica: journal of the Econometric Society},
  pages={587--601},
  year={1973},
  publisher={JSTOR}
}

@article{satterthwaite1975strategy,
  title={Strategy-proofness and Arrow's conditions: Existence and correspondence theorems for voting procedures and social welfare functions},
  author={Satterthwaite, Mark Allen},
  journal={Journal of economic theory},
  volume={10},
  number={2},
  pages={187--217},
  year={1975},
  publisher={Elsevier}
}

@inproceedings{chen2011profitable,
  title={How profitable are strategic behaviors in a market?},
  author={Chen, Ning and Deng, Xiaotie and Zhang, Jie},
  booktitle={Algorithms--ESA 2011: 19th Annual European Symposium, Saarbr{\"u}cken, Germany, September 5-9, 2011. Proceedings 19},
  pages={106--118},
  year={2011},
  organization={Springer}
}

@article{chen2022incentive,
  title={Incentive ratio: A game theoretical analysis of market equilibria},
  author={Chen, Ning and Deng, Xiaotie and Tang, Bo and Zhang, Hongyang and Zhang, Jie},
  journal={Information and Computation},
  volume={285},
  pages={104875},
  year={2022},
  publisher={Elsevier}
}

@inproceedings{li2024bounding,
  title={Bounding the Incentive Ratio of the Probabilistic Serial Rule},
  author={Li, Bo and Sun, Ankang and Xing, Shiji},
  booktitle={Proceedings of the 23rd International Conference on Autonomous Agents and Multiagent Systems},
  pages={1128--1136},
  year={2024}
}

@article{bei2025incentive,
  title={The incentive guarantees behind nash welfare in divisible resources allocation},
  author={Bei, Xiaohui and Tao, Biaoshuai and Wu, Jiajun and Yang, Mingwei},
  journal={Artificial Intelligence},
  pages={104335},
  year={2025},
  publisher={Elsevier}
}

@inproceedings{tao2024fair,
  title={Fair and Almost Truthful Mechanisms for Additive Valuations and Beyond},
  author={Tao, Biaoshuai and Yang, Mingwei},
  booktitle={International Conference on Web and Internet Economics},
  year={2024},
  organization={Springer}
}

@inproceedings{cheng2022tight,
  title={Tight incentive analysis on sybil attacks to market equilibrium of resource exchange over general networks},
  author={Cheng, Yukun and Deng, Xiaotie and Li, Yuhao and Yan, Xiang},
  booktitle={Proceedings of the 23rd ACM Conference on Economics and Computation},
  pages={792--793},
  year={2022}
}

@article{cheng2019improved,
  title={An improved incentive ratio of the resource sharing on cycles},
  author={Cheng, Yu-Kun and Zhou, Zi-Xin},
  journal={Journal of the Operations Research Society of China},
  volume={7},
  pages={409--427},
  year={2019},
  publisher={Springer}
}

@article{chen2024regret,
  title={Regret-free truth-telling in school choice with consent},
  author={Chen, Yiqiu and M{\"o}ller, Markus},
  journal={Theoretical Economics},
  volume={19},
  number={2},
  pages={635--666},
  year={2024},
  publisher={Wiley Online Library}
}

@article{abdulkadirouglu2003school,
  title={School choice: A mechanism design approach},
  author={Abdulkadiro{\u{g}}lu, Atila and S{\"o}nmez, Tayfun},
  journal={American economic review},
  volume={93},
  number={3},
  pages={729--747},
  year={2003},
  publisher={American Economic Association}
}

@article{nisan2002communication,
  title={The communication complexity of efficient allocation problems},
  author={Nisan, Noam and Segal, Ilya},
  journal={Draft. Second version March 5th},
  pages={173--182},
  year={2002}
}

@article{grigorieva2006communication,
  title={The communication complexity of private value single-item auctions},
  author={Grigorieva, Elena and Herings, P Jean-Jacques and M{\"u}ller, Rudolf and Vermeulen, Dries},
  journal={Operations Research Letters},
  volume={34},
  number={5},
  pages={491--498},
  year={2006},
  publisher={Elsevier}
}

@inproceedings{Communication2019Branzei,
author = {Br\^{a}nzei, Simina and Nisan, Noam},
title = {Communication Complexity of Cake Cutting},
year = {2019},
isbn = {9781450367929},
publisher = {Association for Computing Machinery},
address = {New York, NY, USA},
url = {https://doi.org/10.1145/3328526.3329644},
doi = {10.1145/3328526.3329644},
abstract = {We study classic cake-cutting problems, but in discrete models rather than using infinite-precision real values, specifically, focusing on their communication complexity. Using general discrete simulations of classical infinite-precision protocols (Robertson-Webb and moving-knife), we roughly partition the various fair-allocation problems into 3 classes: "easy" (constant number of rounds of logarithmic many bits), "medium" (poly-logarithmic total communication), and "hard". Our main technical result concerns two of the "medium" problems (perfect allocation for 2 players and equitable allocation for any number of players) which we prove are not in the "easy" class. Our main open problem is to separate the "hard" from the "medium" classes.},
booktitle = {Proceedings of the 2019 ACM Conference on Economics and Computation},
pages = {525},
numpages = {1},
keywords = {cake cutting, communication complexity, fair division, lower bounds, ppad, protocols, resource allocation, rounds, upper bounds},
location = {Phoenix, AZ, USA},
series = {EC '19}
}

@inproceedings{Babichenko2019communication,
author = {Babichenko, Yakov and Dobzinski, Shahar and Nisan, Noam},
title = {The communication complexity of local search},
year = {2019},
isbn = {9781450367059},
publisher = {Association for Computing Machinery},
address = {New York, NY, USA},
url = {https://doi.org/10.1145/3313276.3316354},
doi = {10.1145/3313276.3316354},
abstract = {We study a communication variant of local search. There is some fixed, commonly known graph G. Alice holds fA and Bob holds fB, both are functions that specify a value for each vertex. The goal is to find a local maximum of fA+fB with respect to G, i.e., a vertex v for which (fA+fB)(v)≥ (fA+fB)(u) for each neighbor u of v. Our main result is that finding a local maximum requires polynomial (in the number of vertices) bits of communication. The result holds for the following families of graphs: three dimensional grids, hypercubes, odd graphs, and degree 4 graphs. Moreover, we prove an optimal communication bound of Ω(√N) for the hypercube, and for a constant dimension grid, where N is the number of vertices in the graph. We provide applications of our main result in two domains, exact potential games and combinatorial auctions. Each one of the results demonstrates an exponential separation between the non-deterministic communication complexity and the randomized communication complexity of a total search problem. First, we show that finding a pure Nash equilibrium in 2-player N-action exact potential games requires poly(N) communication. We also show that finding a pure Nash equilibrium in n-player 2-action exact potential games requires exp(n) communication. The second domain that we consider is combinatorial auctions, in which we prove that finding a local maximum in combinatorial auctions requires exponential (in the number of items) communication even when the valuations are submodular.},
booktitle = {Proceedings of the 51st Annual ACM SIGACT Symposium on Theory of Computing},
pages = {650–661},
numpages = {12},
keywords = {Local Search, Congestion Games, Communication Complexity},
location = {Phoenix, AZ, USA},
series = {STOC 2019}
}

@article{roth1982economics,
  title={The economics of matching: Stability and incentives},
  author={Roth, Alvin E},
  journal={Mathematics of operations research},
  volume={7},
  number={4},
  pages={617--628},
  year={1982},
  publisher={INFORMS}
}

@article{peters2022robust,
  title={Robust rent division},
  author={Peters, Dominik and Procaccia, Ariel D and Zhu, David},
  journal={Advances in Neural Information Processing Systems},
  volume={35},
  pages={13864--13876},
  year={2022}
}

@inproceedings{amanatidis2017truthful,
  title={Truthful allocation mechanisms without payments: Characterization and implications on fairness},
  author={Amanatidis, Georgios and Birmpas, Georgios and Christodoulou, George and Markakis, Evangelos},
  booktitle={Proceedings of the 2017 ACM Conference on Economics and Computation},
  pages={545--562},
  year={2017}
}

@article{shampanier2007zero,
  title={Zero as a special price: The true value of free products},
  author={Shampanier, Kristina and Mazar, Nina and Ariely, Dan},
  journal={Marketing science},
  volume={26},
  number={6},
  pages={742--757},
  year={2007},
  publisher={INFORMS}
}

@book{crosby2021laws,
  title={The Laws of Wealth},
  author={Crosby, Daniel},
  year={2021},
  publisher={Jaico Publishing House}
}

@book{wells2017monetizing,
  title={Monetizing your data: A guide to turning data into profit-driving strategies and solutions},
  author={Wells, Andrew Roman and Chiang, Kathy Williams},
  year={2017},
  publisher={John Wiley \& Sons}
}

@incollection{kahneman2013prospect,
  title={Prospect theory: An analysis of decision under risk},
  author={Kahneman, Daniel and Tversky, Amos},
  booktitle={Handbook of the fundamentals of financial decision making: Part I},
  pages={99--127},
  year={2013},
  publisher={World Scientific}
}

@article{even1984note,
  title={A note on cake cutting},
  author={Even, Shimon and Paz, Azaria},
  journal={Discrete Applied Mathematics},
  volume={7},
  number={3},
  pages={285--296},
  year={1984},
  publisher={Elsevier}
}

@inproceedings{conitzer2011dominating,
  title={Dominating manipulations in voting with partial information},
  author={Conitzer, Vincent and Walsh, Toby and Xia, Lirong},
  booktitle={Proceedings of the AAAI conference on artificial intelligence},
  volume={25},
  number={1},
  pages={638--643},
  year={2011}
}

@inproceedings{Reijngoud2012,
author = {Reijngoud, Annemieke and Endriss, Ulle},
title = {Voter response to iterated poll information},
year = {2012},
isbn = {0981738125},
publisher = {International Foundation for Autonomous Agents and Multiagent Systems},
address = {Richland, SC},
abstract = {We develop a formal model of opinion polls in elections and study how they influence the voting behaviour of the participating agents, and thereby election outcomes. This approach is particularly relevant to the study of collective decision making by means of voting in multiagent systems, where it is reasonable to assume that we can precisely model the amount of information available to agents and where agents can be expected to follow relatively simple rules when adjusting their behaviour in response to polls. We analyse two settings, one where a single agent strategises in view of a single poll, and one where multiple agents repeatedly update their voting intentions in view of a sequence of polls. In the single-poll setting we vary the amount of information a poll provides and examine, for different voting rules, when an agent starts and stops having an incentive to manipulate the election. In the repeated-poll setting, using both analytical and experimental methods, we study how the properties of different voting rules are affected under different sets of assumptions on how agents will respond to poll information. Together, our results clarify under which circumstances sharing information via opinion polls can improve the quality of election outcomes and under which circumstances it may have negative effects, due to the increased opportunities for manipulation it provides.},
booktitle = {Proceedings of the 11th International Conference on Autonomous Agents and Multiagent Systems - Volume 2},
pages = {635–644},
numpages = {10},
keywords = {voting theory, computational social choice},
location = {Valencia, Spain},
series = {AAMAS '12}
}

@inproceedings{endriss2016strategic,
  author = {Endriss, Ulle and Obraztsova, Svetlana and Polukarov, Maria and Rosenschein, Jeffrey S.},
    title = {Strategic voting with incomplete information},
    year = {2016},
    isbn = {9781577357704},
    publisher = {AAAI Press},
    abstract = {Classical results in social choice theory on the susceptibility of voting rules to strategic manipulation make the assumption that the manipulator has complete information regarding the preferences of the other voters. In reality, however, voters only have incomplete information, which limits their ability to manipulate. We explore how these limitations affect both the manipulability of voting rules and the dynamics of systems in which voters may repeatedly update their own vote in reaction to the moves made by others. We focus on the Plurality, Veto, k -approval, Borda, Copeland, and Maximin voting rules, and consider several types of information that are natural in the context of these rules, namely information on the current front-runner, on the scores obtained by each alternative, and on the majority graph induced by the individual preferences.},
    booktitle = {Proceedings of the Twenty-Fifth International Joint Conference on Artificial Intelligence},
    pages = {236–242},
    numpages = {7},
    location = {New York, New York, USA},
    series = {IJCAI'16}
}

@article{Moulin1982vetoPower,
 ISSN = {00129682, 14680262},
 URL = {http://www.jstor.org/stable/1912535},
 abstract = {We give necessary conditions for a neutral social choice function to be partially implementable by means of a strong equilibrium (i.e., implementable by cooperative agents): the veto power of the various coalitions should be maximally distributed. If moreover the social choice function is veto-anonymous, then the veto power of a coalition must be (roughly) proportional to its size: x per cent of the agents have the power to veto x per cent of the candidates. The procedure of "voting by successive veto" is an example of a neutral and (nearly) veto-anonymous social choice function which is implementable.},
 author = {H. Moulin},
 journal = {Econometrica},
 number = {1},
 pages = {145--162},
 publisher = {[Wiley, Econometric Society]},
 title = {Voting with Proportional Veto Power},
 urldate = {2025-11-06},
 volume = {50},
 year = {1982}
}

@article{caragiannis2019unreasonable,
  title={The unreasonable fairness of maximum Nash welfare},
  author={Caragiannis, Ioannis and Kurokawa, David and Moulin, Herv{\'e} and Procaccia, Ariel D and Shah, Nisarg and Wang, Junxing},
  journal={ACM Transactions on Economics and Computation (TEAC)},
  volume={7},
  number={3},
  pages={1--32},
  year={2019},
  publisher={ACM New York, NY, USA}
}

% Appendix
\appendix
\newpage

\section{Open Questions: Summary}

\begin{itemize}
    \item  Indivisible Goods Allocations

    \begin{itemize}
        \item \Cref{open-alloc-1}: 
        Is there a polynomial time EF1 goods allocation rule with RAT-degree $n-1$?

        \item \Cref{open:envy-elimination}:
        What is the RAT-degree of The Envy-Cycle Elimination Mechanism by \citet{lipton2004approximately}?
    
        \item \Cref{open-alloc-2}: What is the RAT-degree of Volatile Priority Round Robin?

        \item \Cref{open:allocation-mnw}: Are there any variants of MNW with a RAT degree larger than 1?
    \end{itemize}

    \item Cake Cutting 
    \begin{itemize}
        \item \Cref{open-cake-1}:Is there a proportional connected cake-cutting rule with RAT-degree at least $2$?

        \item \Cref{open-cake-2}:
    What is the RAT-degree of the Even-Paz algorithm?

        \item  \Cref{open-cake-3}:
    What is the RAT-degree of the maximum Nash welfare mechanism? 
    % (also open for goods allocation)
    \end{itemize}

    \item Single-Winner Voting
    \begin{itemize}
        \item \Cref{open:voting-1}:
    Does there exist a non-dictatorial voting rule that satisfies the participation criterion (i.e. does not suffer from the no-show paradox),  with RAT-degree larger than $\ceil{n/2}+1$? 

    \item \Cref{open:voting-2}: What are the exact RAT-degrees of known Condorcet-consistent rules?

    \item \Cref{open:veto-power}: Is there a direct relation between Moulin’s veto power and the RAT degree, and if so, how can it be characterized?
    \end{itemize}

    \item Two-Sided Matching

    \begin{itemize}
        \item \Cref{open-matching-1}:
    Is there a stable matching mechanism with RAT-degree in $\Omega(n)$?
    \end{itemize}
\end{itemize}

% \subsection{Related Work}
\section{Related Work: Extended Discussion}\label{apx:related}

To maintain consistency with the proceedings version, we have kept the appendix, even though its content is now integrated into the main body of the paper. For the full discussion, please refer to the Related Work section in this version.

\end{document}